\definecolor{darkgreen}{rgb}{0.0,0,0.9}
\newtcolorbox{wbox}
{
	colback  = white,
}
\let\R\relax
\newcommand*{\R}{\mathbb{R}}
\newcommand*{\Rplus}{\mathbb{R_+}}
\newcommand*{\suppress}[1]{}
\newcommand*{\cR}{\mathcal{R}}
\newcommand*{\ccP}{\mathcal{P}}
\def\thm@space@setup{%
	\thm@preskip= 10pt
	\thm@postskip=\thm@preskip 
}
\renewcommand{\paragraph}{%
	\@startsection{paragraph}{4}%
	{\z@}{5pt}{-1em}%
	{\normalfont\normalsize\bfseries}%
}
\newtheorem{theorem}{Theorem}
\newtheorem{lemma}{Lemma}
\newtheorem{corollary}{Corollary}
\newtheorem{fact}{Fact}
\theoremstyle{definition}
\newtheorem{definition}{Definition}
\newtheorem{remark}{Remark}
\newtheorem{example}{Example}
\newenvironment{fminipage}%
{\begin{Sbox}\begin{minipage}}%
		{\end{minipage}\end{Sbox}\fbox{\TheSbox}}
\newcommand\QQ{\boldsymbol{\mathit{Q}}}
\newcommand\ZZ{\boldsymbol{\mathit{Z}}}
\newcommand*{\ov}[1]{\bar{#1}}
\newcommand*{\un}[1]{\underline{#1}}
\newcommand{\profit}{\mbox{\rm profit}}
\newcommand{\worth}{\mbox{\rm worth}}
\newcommand{\surplus}{\mbox{\rm surplus}}
\newcommand{\spread}{\mbox{\rm spread}}
\title{LP-Duality Theory and the Cores of Games}
\author[1]{Vijay V.~Vazirani\footnote{Supported in part by NSF grant CCF-2230414.}}
\affil[1]{University of California, Irvine}
\date{}
\begin{document}
	\maketitle
	
	\begin{abstract}
LP-duality theory has played a central role in the study of the core, right from its early days to the present time. However, despite the extensive nature of this work, basic gaps still remain. We address these gaps using the following building blocks from LP-duality theory: 
\begin{enumerate}
	\item Total unimodularity (TUM). 
	\item Complementary slackness conditions and strict complementarity.
\end{enumerate}

Our exploration of TUM leads to defining new games, characterizing their cores and giving novel ways of using core imputations to enforce constraints that arise naturally in applications of these games. The latter include:
\begin{enumerate}
\item Efficient algorithms for finding {\em min-max fair, max-min fair and equitable  core imputations}.   

\item {\em Encouraging diversity and avoiding over-representation} in a generalization of the assignment game.
\end{enumerate}

Complementarity enables us to prove new properties of core imputations of the assignment game and its generalizations. 
	\end{abstract}

\bigskip
\bigskip
\bigskip
\bigskip
\bigskip
\bigskip
\bigskip
\bigskip
\bigskip
\bigskip
\bigskip
\bigskip
\bigskip
\bigskip
\bigskip
\bigskip
\bigskip
\bigskip
\bigskip
\bigskip
\bigskip
\bigskip

\pagebreak

\section{Introduction}
\label{sec.intro}

The core\footnote{The core contains all possible ways of distributing the total profit of a game among individual agents in such a way that the grand coalition remains intact, i.e., a sub-coalition will not be able to generate more profits by itself and therefore has no incentive to secede from the grand coalition.} is a quintessential solution concept, for ``fair'' profit sharing, in cooperative game theory, and LP-duality theory has played a central role in its study, right from the early works of Bondareva \cite{Bondareva1963some} and Shapley \cite{Shapley1965balanced} to the present time. However, our contention is that basic gaps still remain. In Section \ref{sec.gaps}, we point out these gaps and in Section \ref{sec.plan} we outline our approach for filling them; in particular, we define new games, having important applications, and characterize their cores. We also show how to enforce constraints, that arise naturally in these applications, via core imputations, see Section \ref{sec.enforcing}.

\subsection{Enforcing Special Constraints via Core Imputations}
\label{sec.enforcing}

{\bf The notions of min-max fair, max-min fair and equitable  core imputations:}
The core is considered the ``gold standard'' for fair profit sharing. The criterion of fairness satisfied by a core imputation is the following: for each of exponentially many sub-coalitions, it  gives at least as much profit as the sub-coalition can make by itself. An interesting side effect  is the following: under core imputations, the profit allocated to an agent is consistent with the value he/she brings to the various sub-coalitions he/she belongs to, i.e., it is consistent with his/her negotiating power, e.g., see Section 3.2.3 in \cite{Transfers}. However, there is much variability across core imputations and specific applications may call for core imputations with specific properties. 

Perhaps the most prominent example of this phenomenon arises in the stable matching game of Gale and Shapley \cite{GaleS}. The core of this game is the set of all stable matchings --- each ensures that no coalition formed by one agent from each side of the bipartition has an incentive to secede. Two extreme matchings in this set are the top and bottom elements in the lattice of stable matchings; these can be found in polynomial time via the deferred acceptance algorithm \cite{GaleS}. Each of these matchings  maximally favors one side and disfavors the other side of the bipartition. However, several applications of this game call for stable matchings which are equitable to both sides, and the question of finding such matchings efficiently has garnered much attention, e.g., see \cite{Irving-egalitarian, Teo1998geometry, Two-sided}. Another prominent example of this phenomenon is core-selecting auctions \cite{Milgrom-Day2007core}. 
 
 For the assignment game, Shapley and Shubik \cite{Shapley1971assignment} showed that if the core contains more than one imputation, then it contains two extreme imputations, each maximally favoring one side and disfavoring the other side of the bipartition, much the same way \footnote{Note that the two problems differ in a fundamental way: in stable matching, agents have ordinal utilities and in the assignment game, utilities are cardinal, given by weights on the edges of the given bipartite graph.} as the two extreme stable  matchings, see Theorem \ref{thm.extreme}. Can we find core imputations which are ``more fair'' to all agents? We answer this question by defining the notions of {\em min-max fair, max-min fair and equitable  core imputations} for the assignment game and giving efficient algorithms for finding them, see Section \ref{sec.fair}. Furthermore, we do the same for all generalizations of the assignment game studied in this paper. 
 
Another game which calls for similar notions is the max-flow game. The combinatorial problem underlying this game is the maximum flow problem, which is well known to have widespread applications\footnote{Indeed, it is among the three most prominent problems in combinatorial optimization \cite{LP.book, Sch-book}, along with maximum matching and linear programming.}.  The max-flow game was defined by Kalai and Zemel \cite{Kalai1982totally}, and they also gave one core imputation for it -- corresponding to a minimum cut. Observe that this imputation allocates the entire profit to a few players, those in the minimum cut, while ignoring the rest; for an extreme example, see Example \ref{ex.fair-flow}, which has a core imputation giving the entire profit to a single agent and nothing to the rest.

Are there other core imputations for the max-flow game which spread the profit more evenly? Our first attempt at resolving this issue is to give an LP-based characterization of the core of the max-flow game; it yields core imputations, corresponding to fractional minimum cuts, which do spread the profit more evenly. A better resolution again involves the notions of min-max fair, max-min fair and equitable  core imputations.   

In several applications of the max-flow game, in particular network-related ones, individual agents build the ``pipes'' on which flow is sent, e.g., pipes carrying bandwidth on the Internet. Therefore, the question of distributing, in a fair way, the profits generated by sending the flow arises naturally. The transferable-utility feature of this game leads to the core as a preferred solution concept. Our attempts at finding ``more fair'' core imputations go a step further.

{\bf Encouraging diversity and avoiding over-representation via core imputations:} 
These issues arise in applications of the Hoffman-Kruskal game, which is a generalization of the assignment game, see Section \ref{sec.Hoffman}. This game is defined on a bipartite graph with edge weights. The objective is to find a maximum weight matching subject to vertex constraints (an upper bound on the number of times a vertex can be matched) and edge constraints (upper bound and lower bounds on the number of times an edge can be matched).

Let us describe two applications, among several, of the Hoffman-Kruskal game. The first is matching students to schools, with students and schools forming the bipartition, say $(U, V)$. Each vertex in $U$ represent a distinct category of students, e.g., minority students, women students with high grades and white male students with average grades, and its upper bound is the number of students in that category. Each vertex in $V$ represent a school, and its upper bound is the number of seats available in the school. The second application is matching medical residents to hospitals, with vertices in $U$ representing categories of residents, divided according to their speciality, and vertices in $V$ representing hospitals. In these applications, the edge lower and upper bounds are used for enforcing two types of constraints:

\begin{enumerate}
	\item {\em Encouraging diversity:} These are the lower bound constraints on edges. They can be used for ensuring that a school has a sufficient number of minority or women students, or that a rural hospital has a sufficient number of doctors having a specific speciality. 
	\item {\em Avoiding over-representation:} These are the upper bound constraints on edges. They can be used for ensuring that certain classes of students or doctors are not over-representation, so as to obtain a ``balanced'' allocation. 
\end{enumerate}

For the Hoffman-Kruskal game, core imputations corresponding to optimal dual solutions accomplish the following: whenever the lower or the upper bound constraint of an edge is binding, the profit allocated to the two agents at the end-points of this edge endogenously adjusts up or down, thereby either encouraging or discouraging further matchings of this edge, see Section \ref{sec.Hoffman} for the exact mechanics.

\subsection{Basic Gaps Remaining in the Use of LP-Duality Theory} 
\label{sec.gaps}

The classic paper of Shapley and Shubik \cite{Shapley1971assignment} gave a complete characterization of core imputations of the assignment game, namely they are  optimal solutions to the dual of the LP-relaxation of the maximum weight matching problem in the underlying graph. At the heart of their proof lies the fact that the polytope defined by the constraints of this LP is integral, i.e., its vertices have integral coordinates, see Definition \ref{def.vertices-integral}. This raises the following question:

\begin{quote}
	{\em  What is the root cause of integrality of the polytope defined by the linear system of an LP?} 
\end{quote}

For the assignment game, the root cause is {\em total unimodularity (TUM)}, see Definition \ref{def.totally-unimodular}. 

Shapley-Shubik was a paradigm-setting work. Over the years, in its vein, many researchers explored aspects of the core using TUM and LP-duality theory, see Section \ref{sec.related}, e.g., the importance of TUM was also observed by Deng et al. \cite{Deng1999algorithms}; however, they exploited it only in a limited setting, see Section \ref{sec.related} for details. Indeed, despite the extensive nature of this follow-up work, basic gaps still remain. The purpose of this paper, and its sequel \cite{Va.investment}, is to use the following three building blocks from LP-duality theory to address these gaps:
\begin{enumerate}
	\item Total unimodularity (TUM). 
	\item Complementary slackness conditions and strict complementarity.
	\item Total dual integrality (TDI). This is explored in \cite{Va.investment}, which extends the scope of the notion of core beyond profit --- equivalently cost or utility --- sharing.   
\end{enumerate}
 These building blocks have much to do with the theory of efficient\footnote{By ``efficient'' we mean polynomial time computable.} algorithms. That connection is not accidental; in fact, it is very deliberate, since in this paper, we are seeking insights into the core that are supported by efficient algorithms. Section \ref{sec.BS.vs.SS} clarifies this point by drawing a contrast between Bondareva-Shapley and Shapley-Shubik.

TUM enables us to characterize the cores of several games. What happens when the constraint matrix is not totally unimodular? It turns out that the polytope defined by the linear system of the LP may still be integral --- the more general condition that ascertains this fact is {\em total dual integrality} of the linear system. Exploring TDI is listed as the {\em third task}  above and is done in \cite{Va.investment}.

\subsection{Our Plan for Filling these Gaps}
\label{sec.plan}

Our {\em first task} in this paper is to exploit total unimodularity in depth, thereby characterizing the cores of several natural matching-based games and the max-flow game. 
We start by studying a generalization of the assignment game to the $b$-matching game in which an integral function $b$ specifies a bound on the number of matched edges that can be incident on a vertex. We consider two cases: $b$ is a constant function and $b$ is arbitrary. We also define two types of characterizations of the core: complete and partial, see Definition \ref{def.char-complete/partial}; in the former (latter), optimal dual solutions completely (partially) capture core imputations. We prove that the cores of the two cases of the $b$-matching game exhibit these two characterizations, respectively. For the latter game, this leads to the tantalizing question of understanding the origins of core imputations that do not correspond to optimal dual solutions.     

We next consider a very general LP-formulation, due to Hoffman and Kruskal \cite{Hoffman2010integral}, given in Theorem \ref{thm.Hoffman}, in which the constraint matrix is totally unimodular and the underlying polytope has integral vertices. We define a natural game, the {\em Hoffman-Kruskal game}, which is the most general matching-based game we study; it generalizes the assignment and $b$-matching games, see Section \ref{sec.enforcing} for applications of this game. We show that optimal dual solutions of this game partially characterize core imputations. We next ask if there are sub-cases of this game, besides the uniform bipartite $b$-matching game, for which optimal dual solutions completely capture core imputations. We provide a negative answer to this question. 

In all the generalizations of the assignment game stated above, the constraint matrix has $0/1$ entries only. On the other hand, a totally unimodular matrix has entries from the set $\{0, 1, -1\}$. We next study the core of the max-flow game, whose constraint matrix has entries from the set $\{0, 1, -1\}$. The importance of this game and our results for it are described in Section \ref{sec.enforcing}. For other results on the core of the max-flow game game, see Section \ref{sec.related}. 

Shapley and Shubik \cite{Shapley1971assignment} had also obtained other insights into the core of the assignment game. Since the core of this game consists of all optimal dual solutions, it is a convex polytope, and they asked the question: What does this polytope ``look like?'' They proved that if it contains two or more imputations, then it contains two ``antipodal'' ones which are maximally far apart in the core, see Theorem \ref{thm.extreme}. These two imputations maximally favor one side and disfavor the other side of the bipartition\footnote{Much like the top and bottom elements in a lattice of stable matchings.}.

Continuing is this vein, we seek further insights into the core of the assignment game; this leads to our {\em second task}, of exploring complementarity. The following broad observation indicates why complementarity needs to play an important role: The worth of the assignment game is given by an optimal solution to the primal LP and its core imputations are given by optimal solutions to the dual LP. The fundamental fact connecting these two solutions is complementary slackness conditions. Yet, despite the passage of half a century since  \cite{Shapley1971assignment}, the implications of this fact were not explored. Complementarity helps answer these three basic questions: 

\begin{enumerate}
	\item Do core imputations spread the profit more-or-less evenly or do they restrict profit to certain well-chosen agents? If the latter, what characterizes these ``chosen'' agents? An answer to this question will be critical for defining max-min fair and equitable core imputations as well, see Section \ref{sec.fair-matching}.  
	\item  An edge $(i, j)$ in the underlying graph is called a {\em team}. By definition, under any core imputation, the sum of profits of $i$ and $j$ is at least the weight of this edge. For which teams is the sum strictly larger than the weight of the edge? 
	\item How do core imputations behave in the presence of degeneracy?
	\end{enumerate}

Our answer to the first question is that the core rewards only {\em essential} agents, namely those who are matched by {\em every} maximum weight matching, see Theorem \ref{thm.vertices}. As stated in Section \ref{sec.enforcing}, the notion of essential agents enables us to define max-min fair core imputations. Our answer to the second question is quite counter-intuitive: we show that a team $(i, j)$ gets overpaid by some core imputation if and only if it is so incompetent that it doesn't participate in any maximum weight matching! If so, we show that at least one of $i$ and $j$ must be essential. Thus $i$ and $j$ do play well with other players but not with each other. This is the reason the sum of profits of $i$ and $j$ exceeds the weight of edge $(i, j)$. 
	
An assignment game is said to be {\em degenerate} if the optimal assignment is not unique. Although Shapley and Shubik had mentioned this phenomenon, they brushed it away, claiming that ``in the most common case'' the optimal assignment will be unique, and if not, their suggestion was to perturb the edge weights to make the optimal assignment unique. However, this is far from satisfactory, since perturbing the weights destroys crucial information contained in the original instance and the outcome becomes a function of the vagaries of the randomness imposed on the instance. Our answer to the third question is that degeneracy treats teams and agents in totally different ways, see  Section \ref{sec.degeneracy}. Section \ref{sec.related} discusses past approaches to degeneracy. 

Whereas the core of the assignment game is always non-empty, that of the general graph matching game can be empty. Deng et al. \cite{Deng1999algorithms} showed that the core of this game is non-empty if and only if the weights of maximum weight integral and fractional matchings concur. In this paper, we call such games {\em concurrent games}. For concurrent games, optimal dual solutions to Balinski's LP completely characterize core imputations.

Next, we study the three questions, raised above, for concurrent games, Section \ref{sec.general}.  The answers obtained for the first two questions are weaker than those for the assignment game. The underlying reason is that the characterization of the vertices of the Balinski polytope, Theorem \ref{thm.Balinski}, for concurrent games, is weaker than the characterization of the vertices of the polytope defined by the assignment game LP, Theorem \ref{thm.int-assn-LP}; in the former, vertices are half-integral matchings and in the latter, they are integral matchings. The answer to third question is identical to that of the assignment game.

\subsection{The Tennis-Club Analogy for Matching-Based Games}
\label{sec.tennis-club}

The matching game forms one of the cornerstones of cooperative game theory. The matching game can also be viewed as a matching market in which utilities of the agents are stated in monetary terms and side payments are allowed, i.e., it is a {\em transferable utility (TU) market}. For an extensive coverage of these notions, see the book by Moulin \cite{Moulin2014cooperative}.    

The following setting, taken from \cite{Eriksson2001stable} and \cite{Biro2012computing}, vividly captures the issues underlying profit-sharing in an assignment game (and its generalizations). Suppose a coed tennis club has sets $U$ and $V$ of women and men players, respectively, who can participate in an upcoming mixed doubles tournament. Assume $|U| = m$ and $|V| = n$, where $m, n$ are arbitrary. Let $G = (U, V, E)$ be a bipartite graph whose vertices are the women and men players and an edge $(i, j)$ represents the fact that agents $i \in U$ and $j \in V$ are eligible to participate as a mixed doubles team in the tournament. Let $w$ be an edge-weight function for $G$, where $w_{i j} > 0$ represents the expected earnings if $i$ and $j$ do participate as a team in the tournament. The total worth of the game is the weight of a maximum weight matching in $G$.

Assume that the club picks such a matching for the tournament. The question is how to distribute the total profit among the agents --- strong players, weak players and unmatched players --- so that no subset of players feel they will be better off seceding and forming their own tennis club. We will use this setting to discuss the issues involved in the questions raised above. 

The tennis club analogy extends to generalizations of the assignment game as follows. For the $b$-matching game, let $K$ denote the maximum $b$-value of a vertex. We will assume that the tennis club needs to enter teams into $K$ mixed doubles tournaments. The $b$-value of a player gives an upper bound on the number of tournaments the player can play. As before, the tennis club needs to maximize its total earnings. The Hoffman-Kruskal game generalizes the $b$-matching game in that each team has an upper and lower bound on how many tournaments it can play. 

For the general graph matching game, we will assume that the tennis club has players of one gender only and any two players can form a doubles team, thereby allowing the underlying graph to be non-bipartite.

\subsection{The Role of Integrality: Bondareva-Shapley vs Shapley-Shubik}
\label{sec.BS.vs.SS}

The early works of Bondareva \cite{Bondareva1963some} and Shapley \cite{Shapley1965balanced} gave a necessary and sufficient condition for non-emptiness of the core of a game, namely that it be balanced (see Section \ref{sec.prelim} for a definition); the proof of this fact is based on Farkas' Lemma from LP-duality theory.
 
 At the outset, the Bondareva-Shapley Theorem had the potential of being extremely useful, since it addresses a key property which one would like to ascertain about a game, namely non-emptiness of its core. Yet, in hindsight, it had no useful algorithmic consequences. The reason is two-fold: first, the LP employed for proving this theorem involves exponentially many variables, one for each subset of players and secondly, this LP is not built around any combinatorial structure\footnote{In contrast, Edmonds' linear programming relaxation for non-bipartite matching, which also has exponentially many variables, one for each odd subset of vertices, is amenable to an efficient solution \cite{GLS}. We believe the reason is the underlying combinatorial structure of matching.}. 
  
Shapley and Shubik \cite{Shapley1971assignment} introduced the ``right'' way of exploiting the power of LP-duality theory for characterizing core imputations. Perhaps their key ingenuity lay in choosing a game that not only had wide applicability and efficient solvability, but was also amenable to such an analysis. We discus the last point below; efficient solvability follows from the fact that the primal and dual LPs of the assignment game are small, i.e., they have polynomially many variables and constraints. 

An optimal solution to the LP-relaxation of the maximum weight matching problem is not ``obliged'' to be an integral matching in the underlying graph -- in general it will be a fractional matching. However, fractionally matching agents is not meaningful for the purpose of the assignment game. The saving grace is a property of the assignment game, namely that {\em its LP-relaxation always has an integral optimal solution}. A close examination of the Shapley-Shubik theorem reveals that their proof hinges precisely on this fact. In turn, the reason for this fact is that the polytope defined by the constraints of the primal LP has integral vertices; in this case, they are matchings in the graph. 

{\bf Organization of the paper:} Section \ref{sec.related} lists various works which explore the core via LP-duality theory. Section \ref{sec.prelim} provides basic definitions related to the core, as well as the setup of the assignment game in order to state in detail the theorems of Shapley and Shubik. In Section \ref{sec.Complementarity} we start with the second task, of viewing core imputations of the assignment game via the lens of complementarity; these results will also help provide insights into the new games studied in this paper as well as in obtaining the ``right'' definition of max-min fair core imputations in Section \ref{sec.fair}. Section \ref{sec.b-matching-game} defines and characterizes the core of the $b$-matching game and the uniform $b$-matching game. Section \ref{sec.Hoffman} does the same for their generalization to the Hoffman-Kruskal game. In Section \ref{sec.general} we define concurrent games and view their core imputations via the lens of complementarity. Section \ref{sec.flow} studies the core of the max-flow game via the lens of LP-duality. Finally in Section \ref{sec.fair} we define, and give ways of efficiently finding, min-max fair, max-min fair and equitable core imputations for all games studied in this paper.

\section{Related Works}
\label{sec.related}

We start by giving the result of Deng et al. \cite{Deng1999algorithms}. Let the set of agents of the game be denoted by $T = \{1, \cdots, n\}$ and let $w \in \R^m_+$ be an $m$-dimensional non-negative real vector specifying the weights of $m$ objects; in the assignment game, the objects are edges of the underlying bipartite graph. Let $A$ be an $n \times m$ matrix with $0/1$ entries whose $i^{th}$ row corresponds to agent $i \in T$. Let $x$ be an $m$-dimensional vector of variables and $\mathbb{1}$ be the $n$-dimensional vector of all 1s. Assume that the worth of the game is given by the objective function value of following integer program. 

	\begin{maxi}
		{} {w \cdot x}
			{\label{eq.IP}}
		{}
		\addConstraint{}{Ax \leq \mathbb{1}}
		\addConstraint{}{x \in \{0, 1\}}
	\end{maxi}

The worth of a sub-coalition, $T' \subseteq T$ is given by the integer program obtained by replacing $A$ by $A'$ in (\ref{eq.IP}), where $A'$ picks the set of rows corresponding to agents in $T'$. The LP-relaxation of (\ref{eq.IP}) is:

	\begin{maxi}
		{} {w \cdot x}
			{\label{eq.Primal}}
		{}
		\addConstraint{}{Ax \leq \mathbb{1}}
		\addConstraint{}{x \geq 0}
	\end{maxi}

Deng et al. proved that if LP (\ref{eq.Primal}) always has an integral optimal solution, then the set of core imputations of this game is exactly the set of optimal solutions to the dual of LP (\ref{eq.Primal}). As is well known, integrality holds if $A$ is totally unimodular, i.e., if every square sub-matrix of $A$ has determinant of 0, $+1$ or $-1$. This fact helps Deng et al.  characterize that the cores of several combinatorial optimization games, including maximum flow in unit capacity networks both directed and undirected, maximum number of edge-disjoint $s$-$t$ paths, maximum number of vertex-disjoint $s$-$t$ paths, maximum number of disjoint arborescences rooted at a vertex $r$, and concurrent games (defined below). 

A different kind of game, in which preferences are ordinal, is based on the stable matching problem defined by Gale and Shapley \cite{GaleS}. The only coalitions that matter in this game are ones formed by one agent from each side of the bipartition. A stable matching ensures that no such coalition has the incentive to secede and the set of such matchings constitute the core of this game. Vande Vate \cite{Vate1989linear} and Rothblum \cite{Rothblum1992characterization} gave linear programming formulations for stable matchings; the vertices of their underlying polytopes are integral and are stable matchings. More recently, Kiraly and Pap \cite{TDI-Kiraly2008total} showed that the linear system of Rothblum is in fact totally dual integral (TDI).

To deal with games having an empty core, e.g., the general graph matching game, the following two notions have been given in the past. The first is that of {\em least core}, defined by Mascher et al. \cite{Leastcore-Maschler1979geometric}. If the core is empty, there will necessarily be sets $S \subseteq V$ such that $v(S) < p(S)$ for any imputation $v$. The least core maximizes the minimum of $v(S) - p(S)$ over all sets $S \subseteq V$, subject to $v(\emptyset) = 0$ and $v(V) = p(V)$. A more well known notion is that of {\em nucleolus} which is contained in the least core.  After maximizing the minimum of $v(S) - p(S)$ over all sets $S \subseteq V$, it does the same for all remaining sets and so on. A formal definition is given below.

\begin{definition}
	\label{def.nucleolus}
	For an imputation $v: {V} \rightarrow \cR_+$, let $\theta(v)$ be the vector obtained by sorting the $2^{|V|} - 2$ values $v(S) - p(S)$ for each $\emptyset \subset S \subset V$ in non-decreasing order. Then the unique imputation, $v$, that lexicographically maximizes $\theta(v)$ is called the {\em nucleolus} and is denoted $\nu(G)$. 
\end{definition} 



In 1998, \cite{Faigle1998nucleon} stated the problem of computing the nucleolus of the matching game in polynomial time and Konemann et al. \cite{Konemann2020computing} found such an algorithm. However, their algorithm makes extensive use of the ellipsoid algorithm and is therefore neither efficient nor does it give deep insights into the underlying combinatorial structure. They leave the open problem of finding a combinatorial polynomial time algorithm. We note that the difference $v(S) - p(S)$ appearing in the least core and nucleolus has not been upper-bounded for any standard family of games, including the general graph matching game. 

A different notion was recently proposed in \cite{Va.general}, namely {\em approximate core}, and was used to obtain an imputation in the $2/3$-approximate core for the general graph matching game. This imputation can be computed in polynomial time, again using the power of LP-duality theory. This result was extended to $b$-matching games in general graphs by Xiaoet al. \cite{b-matching-approximate}. 

Konemann et al. \cite{b-matching-Konemann} showed that computing the nucleolus of the bipartite $b$-matching game, in which an edge can be matched at most once, is NP-hard even for the case $b=3$ for all vertices. Biro et al. \cite{Biro2012computing} showed co-NP-hardness of the following question: given an imputation for a $b$-matching game, decide if it belongs to the core.

Fang et al. \cite{Fang2002computational} show that testing membership in the core of a flow game is co-NP-hard and Deng et al. \cite{Deng-flow2009finding} show that computing the nucleolus of a flow game is NP-hard. Several researchers have studied the simple flow game, in which all edge capacities are unit. Deng et al. \cite{Deng-flow2009finding} and Potters et al. \cite{Potters2006nucleolus} give a polynomial time algorithm for for computing the nucleolus of such games; however, they need to use the ellipsoid algorithm. In contrast, Kern and Paulusma \cite{Kern} gave a combinatorial algorithm.  

Granot and Huberman \cite{Granot1981minimum, Granot-2-1984core} showed that the core of the minimum cost spanning tree game is non-empty and gave an algorithm for finding an imputation in it. Koh and Sanita \cite{Laura-Sanita} settle the question of efficiently determining if a spanning tree game is submodular; the core of such games is always non-empty. Nagamochi et al. \cite{Nagamochi1997complexity} characterize non-emptyness of core for the minimum base game in a matroid; the minimum spanning tree game is a special case of this game.



Over the years, researchers have approached the phenomenon of degeneracy in the assignment game from directions that are different from ours. Nunez and Rafels \cite{Nunez-Dimension}, studied relationships between degeneracy and the dimension of the core. They defined an agent to be {\em active} if her profit is not constant across the various imputations in the core, and non-active otherwise. Clearly, this notion has much to do with the dimension of the core, e.g., it is easy to see that if all agents are non-active, the core must be zero-dimensional. They prove that if all agents are active, then the core is full dimensional if and only if the game is non-degenerate. Furthermore, if there are exactly two optimal matchings, then the core can have any dimension between 1 and $m-1$, where $m$ is the smaller of $|U|$ and $|V|$; clearly, $m$ is an upper bound on the dimension.

In another work, Chambers and Echenique \cite{Chambers2015core} study the following question: Given the entire set of optimal matchings of a game on $m = |U|$, $n = |V|$ agents, is there an $m \times n$ surplus matrix which has this set of optimal matchings. They give necessary and sufficient conditions for the existence of such a matrix.

\section{Definitions and Preliminary Facts}
\label{sec.prelim}

\begin{definition}
	\label{def.cooperative-game}
	A {\em cooperative game} consists of a pair $(N, c)$ where $N$ is a set of $n$ agents and $v$ is the {\em characteristic function}; $c: 2^N \rightarrow \cR_+$, where for $S \subseteq N, \ c(S)$ is the {\em worth} that the sub-coalition $S$ can generate by itself. $N$ is also called the {\em grand coalition}.
\end{definition}

\begin{definition}
	\label{def.imputation}	
	An {\em imputation}
	is a function $p: N \rightarrow \cR_+$ that gives a way of dividing the worth of the game, $c(N)$, among the agents. It satisfies $\sum_{i \in N} {p(i)} = c(N)$; $p(i)$ is called the {\em profit} of agent $i$.  
\end{definition}


\begin{definition}
	\label{def.core}
	An imputation $p$ is said to be in the {\em core of the game} $(N, c)$ if for any sub-coalition $S \subseteq N$, the total profit allocated to agents in $S$ is at least as large as the worth that they can generate by themselves, i.e., $\sum_{i \in S} {p(i)} \geq c(S)$.
\end{definition}
 
\begin{definition}
	\label{def.balanced}
	A cooperative game $(N, c)$ is said to be {\em balanced} if for every function $\lambda: 2^N - \{\emptyset\} \rightarrow [0, 1]$ such that $\forall i \in N: \ \sum_{S \ni i} {\lambda(S)} = 1$, the following holds: $\sum_{2^N - \{\emptyset\}} {\lambda(S) \cdot c(S)} \leq c(N)$. 
\end{definition}

\begin{theorem}
	\label{thm.BS}
	(Bondareva-Shapley Theorem \cite{Bondareva1963some, Shapley1965balanced}) 
	A cooperative game $(N, c)$ has a non-empty core if and only if it is balanced. 
\end{theorem}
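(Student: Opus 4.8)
The plan is to recast non-emptiness of the core as an assertion about the optimal value of a linear program, and then to recognize the feasible region of its dual as precisely the set of balancing weights appearing in Definition~\ref{def.balanced}. Strong LP-duality (equivalently Farkas' Lemma) then converts the two directions of the equivalence into one another.

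First I would consider the primal program in the free variables $p = (p(i))_{i \in N}$:
\[
\min \ \sum_{i \in N} p(i) \quad \text{s.t.} \quad \sum_{i \in S} p(i) \ge c(S) \ \ \text{for every } \emptyset \neq S \subseteq N.
\]
The constraint indexed by $S = N$ shows every feasible $p$ has $\sum_{i \in N} p(i) \ge c(N)$, so the optimum is always at least $c(N)$. The key observation to record is that the core is non-empty if and only if this optimum equals $c(N)$: an optimal $p^\star$ of value $c(N)$ satisfies all coalition constraints with $\sum_i p^\star(i) = c(N)$, hence is an imputation in the core (its non-negativity being forced by the singleton constraints $p(i) \ge c(\{i\}) \ge 0$, using $c \ge 0$); conversely any core element is feasible with objective value $c(N)$.

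Next I would form the LP dual, introducing a multiplier $\lambda(S) \ge 0$ for each non-empty $S$. Because each $p(i)$ is free, the dual carries an equality constraint for every $i$, namely $\sum_{S \ni i} \lambda(S) = 1$, and its objective is $\max \sum_{S} \lambda(S)\, c(S)$. This feasible region is exactly the set of maps $\lambda$ in Definition~\ref{def.balanced} (the bound $\lambda(S)\le 1$ is automatic from $\sum_{S \ni i}\lambda(S)=1$ and $\lambda \ge 0$). Both programs are feasible --- the primal by taking $p$ uniformly large, the dual by the point $\lambda(N)=1$, $\lambda(S)=0$ otherwise, which has objective $c(N)$ --- so strong duality applies and the common optimal value is finite.

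Finally I would assemble the equivalence. By strong duality the primal optimum equals $\max_\lambda \sum_S \lambda(S) c(S)$ over balancing weights $\lambda$. The game is balanced precisely when this maximum is $\le c(N)$; since the feasible point $\lambda(N)=1$ already attains $c(N)$, balancedness is equivalent to the dual optimum equalling $c(N)$, hence to the primal optimum equalling $c(N)$, hence by the first step to non-emptiness of the core. The one point demanding care --- and the natural place for a slip --- is the bookkeeping that the optimum is attained (so that a minimizer really yields a core element) together with the exact coincidence of the dual feasible set and the balancing weights. A cleaner route that sidesteps attainment is to apply Farkas' Lemma directly to the system $\{\sum_{i \in S} p(i) \ge c(S)\ \forall S,\ \sum_i p(i) \le c(N)\}$, whose infeasibility certificate, after normalizing the multiplier of the last inequality to $1$, is exactly a balancing $\lambda$ with $\sum_S \lambda(S) c(S) > c(N)$, i.e.\ a witness that the game fails to be balanced.
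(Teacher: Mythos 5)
Your argument is correct, and it is the standard LP-duality/Farkas' Lemma proof that the paper itself alludes to (the paper only cites the theorem and remarks that its proof "is based on Farkas' Lemma from LP-duality theory," without giving details). All the delicate points --- attainment of the finite LP optimum, non-negativity of the minimizer via the singleton constraints and $c \geq 0$, and the automatic bound $\lambda(S) \leq 1$ --- are handled properly.
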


\begin{definition}
	\label{def.vertices-integral}
	We will say that a {\em polytope is integral} if its vertices have all integral coordinates. 
\end{definition}

\begin{definition}
	\label{def.totally-unimodular}
	Let $A$ be an $m \times n$ matrix with entries from the set $\{0, 1, -1\}$.  $A$ is said to be {\em totally unimodular\footnote{There is a polynomial time algorithm for checking if a matrix is totally unimodular \cite{Seymour1980decomposition}.} (TUM)} if every submatrix of $A$ has determinant $0, 1$ or $-1$. 
\end{definition}

\begin{theorem}
\label{thm.Hoffman}
[Hoffman and Kruskal \cite{Hoffman2010integral}] 
Let $A$ be an $n \times m$ with entries from the set $\{0, 1, -1\}$. Then $A$ is totally unimodular matrix if and only if for all integral vectors $a, b \in \ZZ_+^n$ and $c, d \in \ZZ^m$, the polyhedron
\[ x \in \R^m \ \ s.t. \ \ a \leq Ax \leq b, \ \ c \leq x \leq d \]
is integral.
\end{theorem}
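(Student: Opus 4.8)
The two directions are quite different in character, so I would treat them separately, spending most of the effort on the harder "only if."

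For the direction TUM $\Rightarrow$ integrality, the plan is to put the two-sided system into standard one-sided form and apply Cramer's rule. Write the polyhedron as $\{x : Mx \le h\}$ with
\[
M = \begin{pmatrix} A \\ -A \\ I \\ -I \end{pmatrix}, \qquad h = \begin{pmatrix} b \\ -a \\ d \\ -c \end{pmatrix},
\]
all of whose entries are integral. The first step is to observe that $M$ is itself totally unimodular: total unimodularity is preserved under multiplying a row by $-1$, under duplicating rows, and under appending unit-vector rows (the rows of $\pm I$), and each of these closure properties is a one-line determinant-expansion argument from Definition \ref{def.totally-unimodular}. Now let $\hat{x}$ be a vertex of the polyhedron. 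Then $\hat{x}$ is the unique solution of $M'\hat{x} = h'$ for some nonsingular $m \times m$ submatrix $M'$ of $M$ (the tight, linearly independent constraints) and the corresponding subvector $h'$. Since $M$ is TUM and $M'$ is nonsingular, $\det M' = \pm 1$, so Cramer's rule gives $\hat{x}_j = \det(M'_j)/\det(M') = \pm\det(M'_j)$, which is an integer because $M'_j$ (the matrix $M'$ with column $j$ replaced by $h'$) is integral. Hence every vertex is integral.

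For the converse, integrality $\Rightarrow$ TUM, it suffices to show that every nonsingular square submatrix $B$ of $A$ has $\det B = \pm 1$; since $B$ is integral, this is equivalent to $B^{-1}$ being integral, i.e.\ to $B^{-1}r \in \Z^k$ for every integral $r$. Let $B$ have row set $I$ and column set $J$ with $|I| = |J| = k$; after permuting (which preserves both the hypothesis and total unimodularity) we may take $B$ to be the top-left block of $A$. The plan is to realize $B^{-1}$ applied to an integral vector as the $J$-coordinates of a vertex of some polyhedron in the given family. Concretely, I would fix the coordinates $x_{J^c}$ to an integral vector $t$ via tight bounds $c_{J^c} = d_{J^c} = t$, and make the rows in $I$ tight at an integral target by setting $a_I = b_I$. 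The unique point $\hat{x}$ satisfying these $m$ equalities then has $x_J = B^{-1}(\text{integral vector})$, and the $m$ constraint gradients are independent because, with columns ordered $(J, J^c)$, the defining matrix is the block-triangular $\bigl(\begin{smallmatrix} B & A_{I,J^c} \\ 0 & I \end{smallmatrix}\bigr)$ with determinant $\det B \neq 0$. Thus $\hat{x}$ is a genuine vertex, integrality forces $x_J \in \Z^k$, and letting the integral target range gives $B^{-1}\Z^k \subseteq \Z^k$, as desired.

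The one delicate point --- and the step I expect to be the main obstacle --- is the sign constraint $a, b \in \Z_+^n$. Because the row bounds must be nonnegative, every polyhedron in the family automatically satisfies $Ax \ge 0$; in particular the candidate vertex must obey $0 \le (A\hat{x})_i \le b_i$ in all rows, including the rows $I^c$ not used by $B$, for which no lower slack is available. So the construction only succeeds if I can choose $\hat{x}$ with $A\hat{x} \ge 0$ \emph{and} $x_J$ still off the integer lattice. I would arrange this by adding to $x_J$ a large integral multiple of a vector $v$ with $A_{\cdot J}\, v > 0$: this keeps $x_J$ on the same coset of the fine lattice $B^{-1}\Z^k$ (so non-integrality is preserved when $B^{-1} \notin \Z^{k\times k}$) while driving $A\hat{x}$ strictly positive. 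Such a $v$ exists, by Gordan's theorem of the alternative, exactly when the columns $J$ of $A$ admit no nonnegative vanishing combination of their rows; the residual degenerate case (when such a combination exists) I would dispatch by a perturbation/limiting argument, or by falling back on the classical Hoffman--Kruskal rounding construction, which avoids a lower bound on $Ax$ altogether. Everything else is the routine determinant bookkeeping already used in the easy direction.
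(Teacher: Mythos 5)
The paper does not actually prove Theorem~\ref{thm.Hoffman}; it quotes it from Hoffman and Kruskal and uses it as a black box, and only ever in the direction TUM $\Rightarrow$ integrality. So your attempt must be judged on its own terms. Your proof of that forward direction is correct and standard: stacking $A$, $-A$, $I$, $-I$ preserves total unimodularity, and Cramer's rule applied to $m$ linearly independent tight constraints at a vertex gives integrality.

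The converse direction has a genuine gap, and it sits exactly where you flagged it: the ``residual degenerate case'' cannot be dispatched, because the equivalence as stated --- with the row bounds restricted to $a, b \in \ZZ_+^n$ --- is in fact false in the direction (integrality $\Rightarrow$ TUM). Take
\[ A = \begin{pmatrix} 1 & 1 \\ 1 & -1 \\ -1 & -1 \\ -1 & 1 \end{pmatrix}. \]
Its top-left $2\times 2$ submatrix has determinant $-2$, so $A$ is not totally unimodular. Yet for any $a, b \in \ZZ_+^4$ and $c, d \in \ZZ^2$, rows $1$ and $3$ force $0 \le a_1 \le x_1 + x_2 \le -a_3 \le 0$ and rows $2$ and $4$ likewise force $x_1 - x_2 = 0$, so every polyhedron in this restricted family is either empty or the single integral point $0$. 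This is precisely your degenerate alternative: $\lambda = (1,0,1,0)^{T} \ge 0$ satisfies $\lambda^{T} A = 0$, so no $v$ with $Av > 0$ exists, and no perturbation or limiting argument can manufacture a non-integral vertex because none exists. The repair lies in the statement, not the proof: in the classical Hoffman--Kruskal theorem $a$ and $b$ range over all integral vectors (or may be $\pm\infty$). Under that hypothesis your construction closes immediately --- take $a_{I^c}$ very negative and $b_{I^c}$ very positive --- and the entire apparatus of $Ax \ge 0$, Gordan's theorem, and the coset-preserving shift becomes unnecessary. Since the paper only invokes the (correct) forward direction, nothing downstream is affected, but as transcribed the theorem should read $a, b \in \ZZ^n$.
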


\subsection{The Core of the Assignment Game}
\label{sec.core-assn-game}

The {\em assignment game}, $G = (U, V, E), \ w: E \rightarrow \cR_+$, has been defined in the Introduction. We start by giving definitions needed to state the Shapley-Shubik Theorem. For convenience, we will modify some of the notation that has been defined above to make it more suitable for the assignment game.  

A sub-coalition $(S_u \cup S_v)$ consists of a subset of the players with $S_u \subseteq U$ and $S_v \subseteq V$. The {\em worth} of a sub-coalition $(S_u \cup S_v)$ is defined to be the weight of a maximum weight matching in the graph $G$ restricted to vertices in $(S_u \cup S_v)$ and is denoted by $c(S_u \cup S_v)$. The {\em characteristic function} of the game is defined to be $c: 2^{U \cup V} \rightarrow \cR_+$. 

An {\em imputation}	consists of two functions $u: {U} \rightarrow \cR_+$ and $v: {V} \rightarrow \cR_+$ such that $\sum_{i \in U} {u_i} + \sum_{j \in V} {v_j} = c(U \cup V)$. Imputation $(u, v)$ is said to be in the {\em core of the assignment game} if for any sub-coalition $(S_u \cup S_v)$, the total profit allocated to agents in the sub-coalition is at least as large as the worth that they can generate by themselves, i.e., $\sum_{i \in S_u} {u_i} +  \sum_{j \in S_v} {v_j} \geq c(S)$. We next describe the characterization of the core of the assignment game given by Shapley and Shubik \cite{Shapley1971assignment}\footnote{Shapley and Shubik had described this game in the context of the housing market and had reduced it to the assignment game; the latter is a simpler setting.}

Linear program (\ref{eq.core-primal-bipartite}) gives the LP-relaxation of the problem of finding a maximum weight matching in $G$. In this program, variable $x_{ij}$ indicates the extent to which edge $(i, j)$ is picked in the solution. 

	\begin{maxi}
		{} {\sum_{(i, j) \in E}  {w_{ij} x_{ij}}}
			{\label{eq.core-primal-bipartite}}
		{}
		\addConstraint{\sum_{(i, j) \in E} {x_{ij}}}{\leq 1 \quad}{\forall i \in U}
		\addConstraint{\sum_{(i, j) \in E} {x_{ij}}}{\leq 1 }{\forall j \in V}
		\addConstraint{x_{ij}}{\geq 0}{\forall (i, j) \in E}
	\end{maxi}

Taking $u_i$ and $v_j$ to be the dual variables for the first and second constraints of (\ref{eq.core-primal-bipartite}), we obtain the dual LP: 

 	\begin{mini}
		{} {\sum_{i \in U}  {u_{i}} + \sum_{j \in V} {v_j}} 
			{\label{eq.core-dual-bipartite}}
		{}
		\addConstraint{ u_i + v_j}{ \geq w_{ij} \quad }{\forall (i, j) \in E}
		\addConstraint{u_{i}}{\geq 0}{\forall i \in U}
		\addConstraint{v_{j}}{\geq 0}{\forall j \in V}
	\end{mini}

It is easy to see that the constraint matrix of LP (\ref{eq.core-primal-bipartite}) is totally unimodular and this LP is clearly a special case of the general formulation given in Theorem \ref{thm.Hoffman}. The next theorem follows from this fact; for a proof see \cite{LP.book}.

\begin{theorem}
	\label{thm.int-assn-LP}
The polytope defined by the constraints of LP (\ref{eq.core-primal-bipartite}) is integral; its vertices are matchings in the underlying graph. 
\end{theorem}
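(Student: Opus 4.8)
The plan is to prove the statement in three stages: first identify the constraint matrix of LP~(\ref{eq.core-primal-bipartite}) as a bipartite incidence matrix and show it is totally unimodular; then invoke the Hoffman--Kruskal theorem (Theorem~\ref{thm.Hoffman}) to conclude that the feasible polytope is integral in the sense of Definition~\ref{def.vertices-integral}; and finally read off that the integral vertices are precisely matchings. To set up, let $A$ be the matrix of the inequalities $\sum_{(i,j)\in E} x_{ij} \le 1$, with rows indexed by vertices of $U \cup V$ and columns by edges of $E$. Then $A$ is exactly the vertex--edge incidence matrix of $G$: the column of edge $(i,j)$ carries a $1$ in row $i$ and in row $j$ and $0$ elsewhere, so every column has exactly two nonzero entries, both equal to $1$, one in a $U$-row and one in a $V$-row.

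The first and central step is to verify that $A$ is TUM in the sense of Definition~\ref{def.totally-unimodular}, and I would do this by induction on the order $k$ of a square submatrix $B$ of $A$, showing $\det B \in \{0, 1, -1\}$. The base case $k=1$ is immediate, since every entry is $0$ or $1$. For the inductive step I would split into three cases. If $B$ has an all-zero column, then $\det B = 0$. If $B$ has a column with a single nonzero entry, I expand the determinant along that column, reducing to a $(k-1)\times(k-1)$ submatrix whose determinant lies in $\{0,\pm 1\}$ by the inductive hypothesis. In the remaining case every column of $B$ has exactly two nonzero entries, which forces both endpoints of each selected edge to lie among the selected rows; since $G$ is bipartite, one of those rows is in $U$ and the other in $V$. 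Summing the $U$-rows of $B$ and, separately, the $V$-rows of $B$ then produces the same all-ones row vector over the selected columns, exhibiting a nontrivial linear dependence, so $\det B = 0$. I expect this bipartite row-partition argument to be the main obstacle and the heart of the proof: it is exactly the step that fails for non-bipartite graphs, where a column (edge) can have both endpoints on the same side, which is why integrality is special to the bipartite setting.

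With TUM established, the remaining steps are routine. Since $x \ge 0$ and $A \ge 0$ force $Ax \ge 0$, and the degree constraints together with $x \ge 0$ force $x_{ij} \le 1$, the feasible region of LP~(\ref{eq.core-primal-bipartite}) coincides with $\{x \in \R^E : 0 \le Ax \le \one,\ 0 \le x \le \one\}$, which is precisely the form treated in Theorem~\ref{thm.Hoffman} with integral bounds. Total unimodularity of $A$ therefore yields that this polytope is integral. Finally, at any vertex all coordinates are integral and lie in $[0,1]$, hence each $x_{ij} \in \{0,1\}$; the constraints $\sum_{(i,j)\in E} x_{ij} \le 1$ then say that at most one selected edge meets each $i \in U$ and each $j \in V$, so the support of $x$ is a matching. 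Conversely, the indicator vector of any matching is feasible and is a $0/1$ point, hence a vertex, so the vertices are exactly the matchings of $G$.
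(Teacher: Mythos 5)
Your proof is correct and follows exactly the route the paper itself indicates: it asserts that the constraint matrix is totally unimodular, notes that the LP is a special case of the formulation in Theorem~\ref{thm.Hoffman}, and defers the details to a textbook reference. You have simply supplied those details --- the standard $U$-row/$V$-row dependence argument for TUM of a bipartite incidence matrix and the reduction to the Hoffman--Kruskal form --- all of which are carried out correctly.
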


The proof of the next theorem hinges on the key fact stated in Theorem \ref{thm.int-assn-LP}.

\begin{theorem}
	\label{thm.SS}
	(Shapley and Shubik \cite{Shapley1971assignment})
	The imputation $(u, v)$ is in the core of the assignment game if and only if it is an optimal solution to the dual LP, (\ref{eq.core-dual-bipartite}). 
\end{theorem}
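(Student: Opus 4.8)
The plan is to establish both directions by combining LP-duality with the integrality guaranteed by \Cref{thm.int-assn-LP}. The single fact that makes everything work is that, because the primal LP (\ref{eq.core-primal-bipartite}) has an integral optimum, its optimal value equals the weight of a maximum weight matching in $G$, which is exactly the worth $c(U \cup V)$ of the grand coalition; by strong duality the optimal value of the dual (\ref{eq.core-dual-bipartite}) is then also $c(U \cup V)$. I would record this observation first, since both directions hinge on it.

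For the direction ``optimal dual $\Rightarrow$ core'', let $(u,v)$ be an optimal dual solution. Its nonnegativity together with $\sum_{i} u_i + \sum_{j} v_j = c(U \cup V)$ (strong duality) shows it is a valid imputation. To verify the core inequality for a sub-coalition $(S_u \cup S_v)$, I would take a maximum weight matching $M_S$ of $G$ restricted to $S_u \cup S_v$, so that its weight is $c(S_u \cup S_v)$. Summing the dual feasibility constraints $u_i + v_j \ge w_{ij}$ over the pairwise-disjoint edges $(i,j) \in M_S$ gives $\sum_{(i,j) \in M_S}(u_i + v_j) \ge c(S_u \cup S_v)$; since $M_S$ uses each vertex of $S_u \cup S_v$ at most once and $u, v \ge 0$, the left-hand side is at most $\sum_{i \in S_u} u_i + \sum_{j \in S_v} v_j$, and the core constraint follows.

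For the converse, let $(u,v)$ be a core imputation. Applying the core inequality to the two-element coalition $\{i, j\}$ for each edge $(i,j) \in E$, whose worth is exactly $w_{ij}$, yields $u_i + v_j \ge w_{ij}$; together with the nonnegativity built into the definition of an imputation, this is precisely dual feasibility. Finally, being an imputation means $\sum_{i} u_i + \sum_{j} v_j = c(U \cup V)$, which by the opening observation equals the optimal dual value, so a feasible dual solution attaining the optimal objective is optimal.

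The individual steps are short, so the real content — and the point I would emphasize — is the opening observation: without \Cref{thm.int-assn-LP}, the optimal primal (hence dual) value would only be an upper bound on the matching weight, and the equality $\sum_{i} u_i + \sum_{j} v_j = c(U \cup V)$ for a core imputation would no longer force dual optimality. Thus the entire equivalence rests on integrality, exactly as the surrounding discussion anticipates. This is the crux of the argument, and it is \emph{invoked} via total unimodularity rather than reproved here.
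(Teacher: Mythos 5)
Your proof is correct and follows essentially the same route as the paper's (which proves this via the more general Theorem \ref{thm.b-uniform} and Remark \ref{rem.simpler}): integrality plus strong duality for the grand coalition, restriction of the dual to sub-coalitions for one direction, and two-element coalitions for the other. The only cosmetic difference is that for the sub-coalition inequality you sum the dual constraints over a maximum weight matching of the induced subgraph directly, whereas the paper invokes integrality and LP-duality on the subgraph and notes that the restricted dual is feasible there; these are the same weak-duality argument.
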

	
\begin{definition}
	\label{def.char-complete/partial}
Given a cooperative game, we will say that {\em the dual completely characterizes its core} if there is a one-to-one correspondence between core imputations and optimal solutions to the dual. On the other hand, if every optimal solution to the dual corresponds to a core imputation but there are core imputations that do not correspond to optimal solutions to the dual, then we will say that {\em the dual partially characterizes its core}. 
\end{definition}

By Theorem \ref{thm.SS}, the assignment game's dual completely characterizes its core. Clearly, the core is a convex polyhedron. Shapley and Shubik had shed further light on the structure of the core by showing that it has two special imputations which are furtherest apart and so can be thought of as antipodal imputations. In the tennis club setup, one of these imputations maximizes the earnings of women players and minimizes the earnings of men players, and the second does exactly the opposite.  

For $i \in U$, let $u_i^h$ and $u_i^l$ denote the highest and lowest profits that $i$ accrues among all imputations in the core. Similarly, for $j \in V$, let $v_j^h$ and $v_j^l$ denote the highest and lowest profits that $j$ accrues in the core. Let $u^h$ and $u^l$ denote the vectors whose components are $u_i^h$ and $u_i^l$, respectively. Similarly, let $v^h$ and $v^l$ denote vectors whose components are $v_j^h$ and $v_j^l$, respectively. The following is a formal statement regarding the extreme imputations.

\begin{theorem}
	\label{thm.extreme}
		(Shapley and Shubik \cite{Shapley1971assignment})
The core of the assignment game has two extreme imputations; they are $(u^h, v^l)$ and $(u^l, v^h)$.
\end{theorem}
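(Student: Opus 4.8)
The plan is to reduce the theorem to a \emph{lattice property} of the core. By Theorem \ref{thm.SS}, the core is exactly the set of optimal solutions of the dual LP (\ref{eq.core-dual-bipartite}); since the optimal value of this minimization LP equals $c(U \cup V)$ (by LP-duality together with the integrality furnished by Theorem \ref{thm.int-assn-LP}), the core is precisely the polytope of dual-feasible $(u,v)$ with $\sum_{i} u_i + \sum_j v_j = c(U \cup V)$. In particular it is a nonempty bounded polytope, so each of the quantities $u_i^h, u_i^l, v_j^h, v_j^l$ is finite and attained by some core imputation. First I would show that $(u^h, v^l)$ and $(u^l, v^h)$ are themselves core imputations; the ``antipodal''/furthest-apart interpretation then follows since every core imputation $(u,v)$ satisfies $u^l \le u \le u^h$ and $v^l \le v \le v^h$ coordinatewise.

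The heart of the argument is the following closure lemma. Given two core imputations $(u^1, v^1)$ and $(u^2, v^2)$, define their \emph{join} $(u^\vee, v^\wedge)$ by $u^\vee_i = \max(u^1_i, u^2_i)$ and $v^\wedge_j = \min(v^1_j, v^2_j)$, and their \emph{meet} $(u^\wedge, v^\vee)$ by $u^\wedge_i = \min(u^1_i, u^2_i)$ and $v^\vee_j = \max(v^1_j, v^2_j)$. I claim both lie in the core. Dual feasibility of the join is a one-line case check: for an edge $(i,j)$, if $u^\vee_i = u^1_i$ (so that $u^1_i \ge u^2_i$) then $u^\vee_i + v^\wedge_j = u^1_i + \min(v^1_j, v^2_j) \ge \min(u^1_i + v^1_j,\, u^2_i + v^2_j) \ge w_{ij}$, using feasibility of $(u^1,v^1)$ and $(u^2,v^2)$; nonnegativity is immediate. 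The symmetric check shows the meet is dual-feasible as well.

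For optimality I would use the identity $\max(a,b) + \min(a,b) = a + b$ applied coordinatewise: summing the dual objectives of the join and the meet gives
\[ \Big(\sum_i u^\vee_i + \sum_j v^\wedge_j\Big) + \Big(\sum_i u^\wedge_i + \sum_j v^\vee_j\Big) = \sum_i (u^1_i + u^2_i) + \sum_j (v^1_j + v^2_j) = 2\, c(U \cup V). \]
Since the dual is a minimization LP with optimum $c(U\cup V)$, both feasible points have objective at least $c(U\cup V)$; as their objectives sum to exactly $2\,c(U\cup V)$, each must equal $c(U\cup V)$. Hence the join and the meet are both optimal dual solutions, i.e., both in the core.

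Finally I would bootstrap the lemma from pairs to finite families: iterating the join shows that the coordinatewise join of any finite collection of core imputations is again in the core (and likewise for the meet). Choosing, for each $i \in U$, a core imputation attaining $u_i = u^h_i$ and, for each $j \in V$, one attaining $v_j = v^l_j$, the join of this finite collection has $U$-part equal to $u^h$ (each coordinate is maximized) and $V$-part equal to $v^l$ (each coordinate is minimized); thus $(u^h, v^l)$ is in the core. Applying the meet to the symmetric collection yields $(u^l, v^h)$. The main obstacle is the optimality half of the lemma: feasibility of join and meet is routine, but optimality is not preserved coordinatewise by itself --- the decisive idea is to pair the join with the meet, so that the value ``lost'' on the $V$-side of the join is exactly recovered on the $V$-side of the meet, forcing both objectives down to the common optimum.
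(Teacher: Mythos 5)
Your proposal is correct and follows essentially the same route the paper takes: the join/meet closure lemma you prove is exactly the content of Lemma \ref{lem.extreme-lemma} (stated there for the $b$-matching analogue, Theorem \ref{thm.b-extreme}, with the proof declared ``straightforward'' and omitted), and for the assignment game itself the paper simply cites Shapley--Shubik. Your write-up supplies the details the paper leaves out --- the feasibility case check, the pairing of join with meet to force optimality, and the bootstrap to finite families --- and all of these steps are sound.
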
 

\section{Complementarity Applied to the Assignment Game}
\label{sec.Complementarity}

In this section, we provide answers to the three questions, for assignment games, which were 
 raised in the Introduction.

\subsection{The first question: Allocations made to agents by core imputations}
\label{sec.vertices}

\begin{definition}
	\label{def.player}
	A generic player in $U \cup V$ will be denoted by $q$. We will say that $q$ is:
	\begin{enumerate}
		\item {\em essential} if $q$ is matched in every maximum weight matching in $G$.
		\item {\em viable} if there is a maximum weight matching $M$ such that $q$ is matched in $M$ and another, $M'$ such that $q$ is not matched in $M'$. 	
		\item {\em subpar} if for every maximum weight matching $M$ in $G$, $q$ is not matched in $M$. 	
		\end{enumerate}
\end{definition}

\begin{definition}
\label{def.player-paid}
	Let $y$ be an imputation in the core. We will say that $q$ {\em gets paid in $y$} if $y_q > 0$ and {\em does not get paid} otherwise. Furthermore, $q$ is {\em paid sometimes} if there is at least one imputation in the core under which $q$ gets paid, and it is {\em never paid} if it is not paid under every imputation. 
\end{definition}

\begin{theorem}
	\label{thm.vertices}
	 For every player $q \in (U \cup V)$: 
		\[ q \ \mbox{is paid sometimes}  \ \iff \ q \ \mbox{is essential} \]  
\end{theorem}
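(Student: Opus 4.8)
The plan is to treat the core as the set of optimal solutions to the dual LP (\ref{eq.core-dual-bipartite}) via Theorem~\ref{thm.SS}, and to read off whether a player is paid directly from the complementary slackness conditions linking an optimal dual solution to the optimal primal solutions of (\ref{eq.core-primal-bipartite}). Throughout I identify a maximum weight matching $M$ with its $0/1$ indicator vector, which by Theorem~\ref{thm.int-assn-LP} is precisely an integral vertex of the optimal face of the primal polytope. The two implications are then handled by two different flavors of complementarity: ordinary complementary slackness for one direction and strict complementarity for the other.

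For the forward direction ($q$ paid sometimes $\Rightarrow$ $q$ essential), I would fix a core imputation $y$ with $y_q > 0$ and an \emph{arbitrary} maximum weight matching $M$. Since $y$ is dual-optimal and (the indicator of) $M$ is primal-optimal, complementary slackness applies: positivity $y_q > 0$ of the dual variable at vertex $q$ forces the corresponding primal constraint $\sum_{e \ni q} x_e \leq 1$ to be tight at $M$, i.e.\ $q$ is matched by $M$. As $M$ ranges over \emph{all} maximum weight matchings, $q$ is matched in every one of them, which is exactly the definition of essential.

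For the converse ($q$ essential $\Rightarrow$ $q$ paid sometimes), I would invoke strict complementarity (Goldman--Tucker), a standard consequence of LP-duality requiring no special structure of the matching LP. Let $(x^*, y^*)$ be a strictly complementary pair of optimal primal and dual solutions; then for each vertex $q$ exactly one of the two statements ``$y^*_q > 0$'' and ``the primal constraint at $q$ is slack at $x^*$'' holds. It remains to rule out the slack alternative when $q$ is essential. The optimal face $F$ of the primal polytope is the convex hull of the maximum weight matchings (Theorem~\ref{thm.int-assn-LP}); if $q$ is essential, then the affine function $\sum_{e \ni q} x_e$ equals $1$ at every vertex of $F$, hence equals $1$ on all of $F$, and in particular at $x^* \in F$. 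Thus the primal constraint at $q$ is tight at $x^*$, so strict complementarity forces $y^*_q > 0$. Since $y^*$ is dual-optimal, it is a core imputation by Theorem~\ref{thm.SS} and it pays $q$; hence $q$ is paid sometimes.

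The main obstacle — really the only delicate point — lies in the converse: one must pass from ``matched in every maximum weight matching,'' a statement about the \emph{integral vertices} of $F$, to ``the constraint is tight at the particular (possibly fractional) optimum $x^*$'' produced by strict complementarity. The bridge is again the integrality of the primal polytope (Theorem~\ref{thm.int-assn-LP}): it guarantees that the vertices of $F$ are exactly the maximum weight matchings, so that a linear equality holding at all of them propagates to their entire convex hull, and therefore to $x^*$. Everything else reduces to quoting the complementary slackness and strict complementarity theorems and the Shapley--Shubik identification of the core with the dual optimum.
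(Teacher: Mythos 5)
Your proposal is correct and follows essentially the same route as the paper's proof: ordinary complementary slackness plus integrality of the primal polytope (Theorem~\ref{thm.int-assn-LP}) for the forward direction, and strict complementarity combined with the fact that the optimal face is the convex hull of the maximum weight matchings for the converse. The only cosmetic difference is that you invoke a single simultaneously strictly complementary pair $(x^*, y^*)$ in the Goldman--Tucker form, while the paper states strict complementarity per player; the substance of the argument is identical.
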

	
\begin{proof}
The proof follows by applying complementary slackness conditions and strict complementarity to the primal LP (\ref{eq.core-primal-bipartite}) and dual LP (\ref{eq.core-dual-bipartite}); see \cite{Sch-book} for formal statements of these facts. By Theorem \ref{thm.SS}, talking about imputations in the core of the assignment game is equivalent to talking about optimal solutions to the dual LP.

 Let $x$ and $y$ be optimal solutions to LP (\ref{eq.core-primal-bipartite}) and LP (\ref{eq.core-dual-bipartite}), respectively. By the Complementary Slackness Theorem, for each $q \in (U \cup V): \ y_q \cdot (x(\delta(q)) - 1) = 0$. 

$(\Rightarrow)$  Suppose $q$ is paid sometimes. Then, there is an optimal solution to the dual LP, say $y$, such that $y_q > 0$. By the Complementary Slackness Theorem, for any optimal solution, $x$, to LP (\ref{eq.core-primal-bipartite}), $x(\delta(q)) = 1$, i.e., $q$ is matched in $x$. Varying $x$ over all optimal primal solutions, we get that $q$ is always matched. In particular, $q$ is matched in all optimal assignments, i.e., integral optimal primal solutions, and is therefore essential. This proves the forward direction.

$(\Leftarrow)$ Strict complementarity implies that corresponding to each player $q$, there is a pair of optimal primal and dual solutions, say $x$ and $y$, such that either $y_q = 0$ or $x(\delta(q)) = 1$ but not both. Assume that $q$ is essential, i.e., it is matched in every integral optimal primal solution. 
 
 We will use Theorem \ref{thm.int-assn-LP}, which implies that every fractional optimal primal solution to LP (\ref{eq.core-primal-bipartite}) is a convex combination of integral optimal primal solutions. Therefore $q$ is fully matched in every optimal solution, $x$, to LP (\ref{eq.core-primal-bipartite}), i.e., $x(\delta(q)) = 1$, so there must be an optimal dual solution $y$ such that $y_q > 0$. Hence $q$ is paid sometimes, proving the reverse direction. 
\end{proof}

Theorem \ref{thm.vertices} is equivalent to the following. For every player $q \in (U \cup V)$: 
		\[ q \ \mbox{is never paid} \ \iff \ q \ \mbox{is not essential} \]  
		
Thus core imputations pay only essential players and each of them is paid in some core imputation.  Since we have assumed that the weight of each edge is positive, so is the worth of the game, and all of it goes to essential players. Hence we get:

\begin{corollary}
	\label{cor.vertices}
	In the assignment game, the set of essential players is non-empty and in every core imputation, the entire worth of the game is distributed among essential players; moreover, each of them is paid in some core imputation. 
\end{corollary}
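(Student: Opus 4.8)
The plan is to read off all three assertions directly from Theorem \ref{thm.vertices}, together with the positivity of the edge weights; the substantive work has already been carried out in proving that a player is paid sometimes if and only if it is essential, so this corollary is essentially a translation of that equivalence. Throughout I would lean on the Shapley-Shubik characterization (Theorem \ref{thm.SS}): it guarantees both that core imputations exist --- the dual LP (\ref{eq.core-dual-bipartite}) is feasible and its objective is bounded below, so an optimal dual solution exists --- and that every core imputation is such an optimal dual solution $y$, to which the language of Definition \ref{def.player-paid} applies.

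The third assertion, that each essential player is paid in some core imputation, is simply the forward reading of Theorem \ref{thm.vertices}: if $q$ is essential then $q$ is paid sometimes, which by Definition \ref{def.player-paid} means exactly that there is a core imputation under which $y_q > 0$. No further argument is needed here.

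For the second assertion I would invoke the equivalent restatement of Theorem \ref{thm.vertices} given just after its proof, namely that $q$ is never paid if and only if $q$ is not essential. Hence every non-essential player receives profit $0$ under every core imputation. Fixing an arbitrary core imputation $(u, v)$ and using the defining identity $\sum_{i \in U} u(i) + \sum_{j \in V} v(j) = c(U \cup V)$, the contributions of all non-essential players vanish, so the entire worth $c(U \cup V)$ is borne by the essential players alone.

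Finally, for non-emptiness of the set of essential players I would use the standing hypothesis that $w_{ij} > 0$ on every edge. Assuming $E \neq \emptyset$, a maximum weight matching has strictly positive weight, so $c(U \cup V) > 0$. By the second assertion, under any core imputation the profits of the essential players sum to this positive quantity; if there were no essential players the sum would be $0$, a contradiction. Thus at least one essential player exists. This last point is the only one requiring care, since it quietly depends on the worth being positive, which is precisely where the positive-weight assumption enters; the remaining steps are direct consequences of Theorem \ref{thm.vertices}, so I do not anticipate any genuine obstacle.
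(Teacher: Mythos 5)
Your proposal is correct and follows essentially the same route as the paper: the paper also derives the corollary by restating Theorem \ref{thm.vertices} contrapositively (non-essential players are never paid), concluding that the entire worth goes to essential players, and using the positivity of edge weights to force that set to be non-empty. Your explicit remark that non-emptiness quietly relies on $E \neq \emptyset$ and hence on positive worth is a fair observation the paper leaves implicit, but it does not change the argument.
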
 

By Corollary \ref{cor.vertices}, core imputations reward only essential players. This raises the following question: Can't a non-essential player, say $q$, team up with another player, say $p$, and secede, by promising $p$ almost all of the resulting profit? The answer is ``No'', because the dual (\ref{eq.core-dual-bipartite}) has the constraint $y_q + y_p \geq w_{qp}$. Therefore, if $y_q = 0$, $y_p \geq w_{q p}$, i.e., $p$ will not gain by seceding together with $q$.

\subsection{The second question: Allocations made to teams by core imputations}
\label{sec.edges}

\begin{definition}
	\label{def.team}
	By a {\em team} we mean an edge in $G$; a generic one will be denoted as $e = (u, v)$. We will say that $e$ is:
	\begin{enumerate}
		\item {\em essential} if $e$ is matched in every maximum weight matching in $G$.
		\item {\em viable} if there is a maximum weight matching $M$ such that $e \in M$, and another, $M'$ such that $e \notin M'$. 
		\item {\em subpar} if for every maximum weight matching $M$ in $G$, $e \notin M$. 
	\end{enumerate}
	\end{definition}
	
\begin{definition}
\label{def.team-paid}
	 Let $y$ be an imputation in the core of the game. We will say that $e$ is {\em fairly paid in $y$} if $y_u + y_v = w_e$ and it is {\em overpaid} if $y_u + y_v > w_e$\footnote{Observe that by the first constraint of the dual LP (\ref{eq.core-dual-bipartite}), these are the only possibilities.}. Finally, we will say that $e$ is {\em always paid fairly} if it is fairly paid in every imputation in the core.
\end{definition}

\begin{theorem}
	\label{thm.edges}
	 For every team $e \in E$: 
		\[ e \ \mbox{is always paid fairly} \ \iff \ e \ \mbox{is viable or essential} \]
\end{theorem}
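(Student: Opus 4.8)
The plan is to recast the statement so that both directions become applications of complementarity, and to identify core imputations with optimal dual solutions throughout. Since the classes \emph{essential}, \emph{viable}, and \emph{subpar} partition the edges, the condition ``$e$ is viable or essential'' is exactly ``$e$ lies in at least one maximum weight matching,'' and its negation is precisely ``$e$ is subpar''; correspondingly, ``$e$ is always paid fairly'' negates to ``$e$ is overpaid by some core imputation.'' Hence the theorem is equivalent to the cleaner assertion: $e$ is overpaid by some core imputation if and only if $e$ is subpar. By Theorem \ref{thm.SS}, core imputations are exactly the optimal solutions of the dual LP (\ref{eq.core-dual-bipartite}), so I work with an optimal dual solution $y$ in place of a core imputation; the relevant complementary slackness condition attached to edge $(i,j)$ is $x_{ij}(y_i + y_j - w_{ij}) = 0$ for any optimal primal $x$ and optimal dual $y$.

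For the direction ($\Leftarrow$), suppose $e = (u,v)$ is viable or essential. Then there is a maximum weight matching $M$ with $e \in M$, and its indicator vector is an integral optimal solution of LP (\ref{eq.core-primal-bipartite}) with $x_e = 1 > 0$. Because complementary slackness holds between \emph{any} optimal primal solution and \emph{any} optimal dual solution, applying it to this matching together with an arbitrary core imputation $y$ forces $y_u + y_v = w_e$. As $y$ was arbitrary, $e$ is fairly paid in every core imputation, i.e., always paid fairly.

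For the direction ($\Rightarrow$), I prove the contrapositive: if $e = (u,v)$ is subpar, I produce a core imputation that overpays it. Since $e$ lies in no maximum weight matching, every integral optimal solution of LP (\ref{eq.core-primal-bipartite}) has $x_e = 0$. Invoking Theorem \ref{thm.int-assn-LP}, every optimal solution of LP (\ref{eq.core-primal-bipartite}) is a convex combination of integral optima, so $x_e = 0$ in every optimal primal solution. Now I apply strict complementarity: there is a pair of optimal primal and dual solutions $(x^*, y^*)$ such that, for the complementary pair consisting of the primal variable $x_e$ and the dual constraint $u + v \geq w_e$, exactly one is active, i.e., either $x^*_e > 0$ or $y^*_u + y^*_v > w_e$. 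Since $x^*_e = 0$, we conclude $y^*_u + y^*_v > w_e$, so the core imputation $y^*$ overpays $e$, and $e$ is not always paid fairly.

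The main obstacle is the reverse direction. Ordinary complementary slackness is insufficient there, since it can never \emph{force} a strict inequality; strict complementarity is genuinely needed to obtain a dual optimum that is strictly slack on the constraint for $e$. Moreover, strict complementarity delivers a possibly fractional optimal primal solution, so knowing $x_e = 0$ only on integral optima would not close the argument — this is exactly where integrality of the polytope (Theorem \ref{thm.int-assn-LP}) is essential, through the convex-combination step that propagates $x_e = 0$ from integral optima to all optimal primal solutions. A small point to check carefully is that the complementary partner of the dual constraint $u + v \geq w_e$ in the strict complementarity statement is precisely the primal variable $x_e$, so that vanishing of $x_e$ across all optima legitimately forces the strict dual inequality.
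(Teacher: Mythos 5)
Your proof is correct and follows essentially the same route as the paper: complementary slackness between an integral optimum containing $e$ and an arbitrary optimal dual for one direction, and strict complementarity combined with the integrality of the polytope (Theorem \ref{thm.int-assn-LP}, via the convex-combination argument) for the other. The only difference is that you argue the forward direction in contrapositive form (subpar $\Rightarrow$ sometimes overpaid) rather than directly, which the paper itself notes is an equivalent restatement.
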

	
\begin{proof}
The proof is similar to that of Theorem \ref{thm.vertices}. Let $x$ and $y$ be optimal solutions to LP (\ref{eq.core-primal-bipartite}) and LP (\ref{eq.core-dual-bipartite}), respectively. By the Complementary Slackness Theorem, for each $e = (u, v) \in E: \ \ x_e \cdot (y_u + y_v - w_e) = 0$.

$(\Leftarrow)$ To prove the reverse direction, suppose $e$ is viable or essential. Then there is an optimal solution to the primal, say $x$, under which it is matched. Therefore,  $x_e > 0$. Let $y$ be an arbitrary optimal dual solution. Then, by the Complementary Slackness Theorem, $y_u + y_v = w_e$, i.e., $e$ is fairly paid in $y$. Varying $y$ over all optimal dual solutions, we get that $e$ is always paid fairly. 

$(\Rightarrow)$ To prove the forward direction, we will use strict complementarity. It implies that corresponding to each team $e$, there is a pair of optimal primal and dual solutions $x$ and $y$ such that either $x_e = 0$ or $y_u + y_v = w_e$ but not both. 

Assume that team $e$ is always fairly paid, i.e., under every optimal dual solution $y$, $y_u + y_v = w_e$. By strict complementarity, there must be an optimal primal solution $x$ for which $x_e > 0$. Theorem \ref{thm.int-assn-LP} implies that $x$ is a convex combination of optimal assignments. Therefore, there must be an optimal assignment in which $e$ is matched. Therefore $e$ is viable or essential and the forward direction also holds. 
\end{proof}

\begin{corollary}
	\label{cor.endpoint-essential}
	Let $e = (u, v)$ be a subpar team. Then at least one of $u$ and $v$ is essential. 
\end{corollary}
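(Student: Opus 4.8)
The plan is to chain together Theorem \ref{thm.edges} and Theorem \ref{thm.vertices}, using the strict positivity of edge weights as the glue. First I would observe that since $e = (u,v)$ is subpar, it is by definition neither viable nor essential as a team. Reading the forward direction of Theorem \ref{thm.edges} in contrapositive form, this means $e$ is \emph{not} always paid fairly. Hence there must exist at least one core imputation $y$ under which $e$ is overpaid, i.e., $y_u + y_v > w_e$. This is the one place where I actually need the content of the preceding theorem; everything after is elementary.

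Next I would invoke the standing assumption that every edge weight is strictly positive, so $w_e > 0$. Combined with $y_u + y_v > w_e$, this forces $y_u + y_v > 0$, and therefore at least one of the two endpoint profits is strictly positive in $y$. Without loss of generality, suppose $y_u > 0$. Then $u$ gets paid in the core imputation $y$, so $u$ is \emph{paid sometimes} in the sense of Definition \ref{def.player-paid}.

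Finally I would apply the forward direction of Theorem \ref{thm.vertices}: any player who is paid sometimes is essential. Applied to $u$, this yields that $u$ is essential, establishing that at least one of $u$ and $v$ is essential, as required.

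The step I would watch most carefully is the very first one — translating ``subpar'' into ``not always paid fairly.'' The subtlety is purely definitional: one must note that the three team categories (essential, viable, subpar) are exhaustive and that subpar is exactly the negation of ``viable or essential,'' so that the biconditional of Theorem \ref{thm.edges} genuinely delivers the failure of fair payment and hence an overpaying imputation. I do not expect a real obstacle beyond this bookkeeping; once the overpaying imputation is in hand, positivity of $w_e$ and a single application of Theorem \ref{thm.vertices} close the argument immediately.
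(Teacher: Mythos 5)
Your proposal is correct and follows essentially the same route as the paper: use Theorem \ref{thm.edges} to produce a core imputation overpaying the subpar team $e$, note that $y_u + y_v > w_e > 0$ forces one endpoint to be paid, and conclude via Theorem \ref{thm.vertices} that this endpoint is essential. The paper's proof adds only a passing combinatorial remark (that some endpoint of $e$ is matched in every maximum weight matching) which is not needed for the logical chain, and your explicit use of $w_e > 0$ makes a step the paper leaves implicit.
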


\begin{proof}
In every maximum weight matching, at least one of $u$ and $v$ must be matched, since otherwise $e$ should get matched. By Theorem \ref{thm.edges}, there is a core imputation under which $e$ is overpaid. Therefore, under this imputation, at least one of $u$ and $v$ is paid and by Theorem \ref{thm.vertices}, that vertex must be essential. 
\end{proof}

Negating both sides of the implication proved in Theorem \ref{thm.edges}, we get the following  implication. For every team $e \in E$: 
		\[ e \ \mbox{is subpar} \ \iff \ e \ \mbox{is sometimes overpaid} \]

Clearly, this statement is equivalent to the statement proved Theorem \ref{thm.edges} and hence contains no new information. However, it provides a new viewpoint. These two equivalent  statements yield the following assertion, which at first sight seems incongruous with what we desire from the notion of the core and the just manner in which it allocates profits:

\begin{center}
{\em Whereas viable and essential teams are always paid fairly, subpar teams are sometimes overpaid.}
\end{center}

How can the core favor subpar teams over viable and essential teams? An explanation is provided in the Introduction, namely a subpar team $(i, j)$ gets overpaid because $i$ and $j$ create worth by playing in competent teams with other players. Finally, we observe that contrary to Corollary \ref{cor.vertices}, which says that the set of essential players is non-empty, it is easy to construct examples in which the set of essential teams may be empty. 

By Theorem \ref{thm.vertices}, different essential players are paid in different core imputations and by Theorem \ref{thm.edges}, different subpar teams are overpaid in different core imputations. This raises the following question: is there a core imputation that simultaneously satisfies all these conditions? Theorem \ref{thm.simultaneous} gives a positive answer. 

\begin{theorem}
	\label{thm.simultaneous}
	For the assignment game, there is a core imputation satisfying:
\begin{enumerate}
	\item a player $q \in U \cup V$ gets paid if and only if $q$ is essential.
	\item a team $e \in E$ gets overpaid if and only if $e$ is subpar. 
\end{enumerate}
\end{theorem}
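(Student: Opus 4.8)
The plan is to obtain the desired imputation from a single application of strict complementarity. The proofs of Theorems \ref{thm.vertices} and \ref{thm.edges} invoked strict complementarity one constraint at a time, producing a possibly different primal--dual pair for each player or team. The key observation I would build on is that the strict complementarity theorem in fact furnishes one pair of optimal solutions $x^*$ to LP (\ref{eq.core-primal-bipartite}) and $y^*$ to LP (\ref{eq.core-dual-bipartite}) that is strictly complementary for \emph{all} complementary slackness conditions at once: for every player $q$, exactly one of $y^*_q > 0$ and $x^*(\delta(q)) < 1$ holds, and for every team $e = (u,v)$, exactly one of $x^*_e > 0$ and $y^*_u + y^*_v > w_e$ holds. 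By Theorem \ref{thm.SS}, this $y^*$ is a core imputation, and I claim it is the one the theorem promises.

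First I would pin down two coordinate facts about $x^*$ using integrality. By Theorem \ref{thm.int-assn-LP}, $x^*$ is a convex combination of maximum weight matchings. Hence if $q$ is essential then $q$ is matched in every such matching, forcing $x^*(\delta(q)) = 1$; and if $e$ is subpar then $e$ lies in no such matching, forcing $x^*_e = 0$. Feeding these into the two strict complementarity dichotomies yields the two ``positive'' directions immediately: an essential $q$ must have $y^*_q > 0$ (since $x^*(\delta(q)) < 1$ fails), and a subpar $e$ must be overpaid, $y^*_u + y^*_v > w_e$ (since $x^*_e > 0$ fails).

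For the converse directions I would appeal to the already-proved theorems rather than re-derive anything. If $q$ is not essential, Theorem \ref{thm.vertices} says $q$ is never paid, so in particular $y^*_q = 0$; if $e$ is viable or essential, Theorem \ref{thm.edges} says $e$ is always paid fairly, so in particular $y^*_u + y^*_v = w_e$. Combining these with the previous paragraph gives the two biconditionals $y^*_q > 0 \iff q$ essential and $y^*_u + y^*_v > w_e \iff e$ subpar, which is exactly the two-part assertion. The only genuine subtlety---and the step I would be most careful to state correctly---is the first one: confirming that strict complementarity is a statement about the existence of a \emph{single} optimal pair satisfying all the dichotomies simultaneously (equivalently, that $x^*$ and $y^*$ may be taken in the relative interiors of the primal and dual optimal faces), and not merely a collection of pairwise statements. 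Once that is granted, the argument is pure assembly of the integrality and complementarity facts, with no further computation required.
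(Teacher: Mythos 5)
Your proof is correct, but it takes a genuinely different route from the paper's. The paper's proof is a short assembly of Theorems \ref{thm.vertices} and \ref{thm.edges} together with convexity of the core: for each essential player it picks a core imputation paying that player, for each subpar team a core imputation overpaying that team, and then takes a convex combination giving positive weight to all of these; since profits are nonnegative and the strict inequalities $y_u + y_v > w_e$ survive such combinations, the resulting imputation pays every essential player and overpays every subpar team, while the already-proved converse directions guarantee that no core imputation pays a non-essential player or overpays a non-subpar team. You instead invoke the simultaneous (Goldman--Tucker) form of strict complementarity to produce one strictly complementary optimal pair $(x^*, y^*)$, use integrality (Theorem \ref{thm.int-assn-LP}) to force $x^*(\delta(q)) = 1$ for essential $q$ and $x^*_e = 0$ for subpar $e$, and read off $y^*_q > 0$ and $y^*_u + y^*_v > w_e$ from the dichotomies; the converse directions you obtain, as the paper does, from Theorems \ref{thm.vertices} and \ref{thm.edges}. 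Both arguments are sound, and you correctly flag the one real subtlety in yours: strict complementarity must be taken in its single-pair form (equivalently, $x^*$ and $y^*$ in the relative interiors of the optimal faces), which is stronger than the per-constraint form the paper actually states and uses in the proofs of Theorems \ref{thm.vertices} and \ref{thm.edges}. The paper's convex-combination device is essentially the standard derivation of the simultaneous form from the per-constraint form, so the two arguments are close cousins; the paper's version has the advantage of requiring nothing beyond the two theorems already proved, while yours identifies the promised imputation more concretely as any strictly complementary optimal dual solution.
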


\begin{proof}
By Theorem \ref{thm.vertices}, for each essential player $q$, there is a core imputation under which $q$ gets paid and by Theorem \ref{thm.edges}, for each subpar team $e$, there is a core imputation under which $e$ gets overpaid. Consider a convex combination of all these imputations; it must give positive weight to each of these imputations. Clearly, this is a core imputation. 

We observe that none of the core imputations pay non-essential players or overpay non-subpar teams. Consequently, the imputation constructed above satisfies the conditions of the theorem. 
\end{proof}

\subsection{The third question: Degeneracy}
\label{sec.degeneracy}

 Next we use Theorems \ref{thm.vertices} and \ref{thm.edges} to get insights into degeneracy. Clearly, if an assignment game is non-degenerate, then every team and every player is either always matched or always unmatched in the set of maximum weight matchings in $G$, i.e., there are no viable teams or players. Since viable teams and players arise due to degeneracy, in order to understand the phenomenon of degeneracy, we need to understand how viable teams and players behave with respect to core imputations; this is done in the next corollary.
 
\begin{corollary}
	\label{cor.degen}
	In the presence of degeneracy, imputations in the core of an assignment game treat:
	\begin{itemize}
			\item  viable players in the same way as subpar players, namely they are never paid.  
		\item viable teams in the same way as essential teams, namely they are always fairly paid. 
	\end{itemize}
\end{corollary}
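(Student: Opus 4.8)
The plan is to derive both assertions directly from the characterizations already established in Theorems \ref{thm.vertices} and \ref{thm.edges}, the only real work being to track how the three-way classification of Definitions \ref{def.player} and \ref{def.team} (essential / viable / subpar) sits inside each characterization. The clause ``in the presence of degeneracy'' merely supplies the context in which viable players and teams exist at all, as noted in the paragraph preceding the corollary; it carries no extra proof burden.

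First I would handle the players. Theorem \ref{thm.vertices} states that $q$ is paid sometimes if and only if $q$ is essential; negating both sides gives that $q$ is never paid if and only if $q$ is not essential. By Definition \ref{def.player}, a viable player is matched in some maximum weight matching but unmatched in another, while a subpar player is matched in no maximum weight matching; in either case the player fails to be matched in \emph{every} maximum weight matching, so neither class is essential. Hence both viable and subpar players land on the ``never paid'' side of the equivalence, which is precisely the claim that the core treats them identically.

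For the teams I would invoke Theorem \ref{thm.edges} in its stated form: a team $e$ is always paid fairly if and only if it is viable or essential. Both viable and essential teams satisfy the right-hand disjunction by inspection, so both are always paid fairly, again yielding identical treatment by the core. No fresh complementary-slackness or strict-complementarity argument is needed beyond what is already packaged inside the two theorems; the corollary is obtained purely by substituting the relevant class into each ``if and only if.''

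I do not anticipate a genuine obstacle, since the statement is a repackaging of the two theorems. The one point deserving care — and the conceptually interesting content — is the \emph{asymmetry}: for players, ``viable'' is absorbed into the behavior of ``subpar'' (both never paid), whereas for teams ``viable'' is absorbed into the behavior of ``essential'' (both always fairly paid). This asymmetry is forced by the differing shapes of the two characterizations — paid-sometimes is equivalent to essential \emph{alone}, while paid-fairly is equivalent to viable-\emph{or}-essential — so in the write-up I would flag that the grouping of the viable class is dictated by which theorem applies, not by any intrinsic symmetry between players and teams.
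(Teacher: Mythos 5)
Your proposal is correct and follows exactly the route the paper intends: the corollary is stated without a separate proof precisely because it is the immediate consequence of negating Theorem \ref{thm.vertices} (so non-essential, hence viable and subpar, players are never paid) and reading off the disjunction in Theorem \ref{thm.edges} (so viable and essential teams are always fairly paid). Your observation about the asymmetry in how the viable class is absorbed is a fair and accurate gloss, but it adds nothing beyond what the two theorems already dictate.
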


\section{The Bipartite $b$-Matching Game}
\label{sec.b-matching-game}

In this section, we will define the bipartite $b$-matching game and its special case when $b$ is the constant function and we will study their core imputations; both versions generalize the assignment game.

\subsection{Definitions and Preliminary Facts}
\label{sec.b-prelim}

As in the assignment game, let $G = (U, V, E), \ w: E \rightarrow \cR_+$ be the underlying bipartite graph and edge-weight function. Let function $b: U \cup V \rightarrow \ZZ_+$ give an upper bound on the number of times a vertex can be matched. An edge can be matched multiple number of times; however, limits imposed by $b$ on vertices will impose limits on edges. Thus edge $(i, j)$ can be matched at most $\min \{b_i, b_j\}$ times. Any choice of edges, with multiplicity, subject to these constraints, is called a $b$-matching. 

Under the {\em bipartite $b$-matching game}, the {\em worth} of a coalition $(S_u \cup S_v)$, with $S_u \subseteq U, S_v \subseteq V$, is the weight of a maximum weight $b$-matching in the graph $G$ restricted to vertices in $(S_u \cup S_v)$ only. We will denote this by $c(S_u \cup S_v)$; the {\em characteristic function} of the game is defined to be $c: 2^{U \cup V} \rightarrow \cR_+$. An {\em imputation}	consists of two functions $\alpha : {U} \rightarrow \cR_+$ and $\beta : {V} \rightarrow \cR_+$ such that $\sum_{i \in U} {\alpha (i)} + \sum_{j \in V} {\beta (j)} = c(U \cup V)$. Definition \ref{def.core}, defining the core, carries over unchanged. The special case of the bipartite $b$-matching game in which $b$ is the constant function is called  the {\em uniform bipartite $b$-matching game}; we will denote the constant by $b_c \in \ZZ_+$. As stated in the Introduction, the tennis club analogy applies to these two games as well. 

Linear program (\ref{eq.b-uncon-core-primal-bipartite}) gives the LP-relaxation of the problem of finding a maximum weight $b$-matching. In this program, variable $x_{ij}$ indicates the extent to which edge $(i, j)$ is picked in the solution; observe that  there is no upper bound on the variables $x_{ij}$ since an edge can be matched any number of times.

	\begin{maxi}
		{} {\sum_{(i, j) \in E}  {w_{ij} x_{ij}}}
			{\label{eq.b-uncon-core-primal-bipartite}}
		{}
		\addConstraint{\sum_{(i, j) \in E} {x_{ij}}}{\leq b_i \quad}{\forall i \in U}
		\addConstraint{\sum_{(i, j) \in E} {x_{ij}}}{\leq b_j }{\forall j \in V}
		\addConstraint{x_{ij}}{\geq 0}{\forall (i, j) \in E}
	\end{maxi}

Taking $u_i$ and $v_j$ to be the dual variables for the first and second constraints of (\ref{eq.b-uncon-core-primal-bipartite}), we obtain the dual LP: 

 	\begin{mini}
		{} {\sum_{i \in U}  {b_i u_{i}} + \sum_{j \in V} {b_j v_j}} 
			{\label{eq.b-uncon-core-dual-bipartite}}
		{}
		\addConstraint{ u_i + v_j}{ \geq w_{ij} \quad }{\forall (i, j) \in E}
		\addConstraint{u_{i}}{\geq 0}{\forall i \in U}
		\addConstraint{v_{j}}{\geq 0}{\forall j \in V}
	\end{mini}

	\bigskip
	
As in the assignment game, the constraint matrix of LP (\ref{eq.b-uncon-core-primal-bipartite}) is  totally unimodular and this LP is a special case of the general formulation given in Theorem \ref{thm.Hoffman}. Consequently, the underlying polyhedron has all integral vertices. Once again, this fact plays a key role in characterizing the core of the (uniform) bipartite $b$-matching game.

\subsection{The Core of the Uniform Bipartite $b$-Matching Game}
\label{sec.b-uniform-core}

The next theorem is analogous to the Shapley-Shubik Theorem.

\begin{theorem}
	\label{thm.b-uniform}
	For the uniform bipartite $b$-matching game, the dual completely characterizes its core.
\end{theorem}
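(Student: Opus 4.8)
The plan is to follow the template of the Shapley--Shubik proof (Theorem~\ref{thm.SS}), establishing the correspondence in both directions, and the key point is that the only formal difference from the assignment game is the presence of the coefficients $b_c$ in the dual objective. Recall that ``completely characterizes'' (Definition~\ref{def.char-complete/partial}) requires a one-to-one correspondence between core imputations and optimal solutions to the dual LP~(\ref{eq.b-uncon-core-dual-bipartite}) with $b_i = b_j = b_c$. First I would show that every optimal dual solution yields a core imputation. Given an optimal solution $(u, v)$ to~(\ref{eq.b-uncon-core-dual-bipartite}), define the imputation by $\alpha(i) = b_c\, u_i$ and $\beta(j) = b_c\, v_j$; the scaling by $b_c$ is forced because the dual objective is $\sum_i b_c u_i + \sum_j b_c v_j$, and this must equal $c(U \cup V)$, the worth of the grand coalition, which by LP-duality equals the optimal value of the primal~(\ref{eq.b-uncon-core-primal-bipartite}). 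To verify the core constraint for a sub-coalition $(S_u \cup S_v)$, I would observe that the restriction of $(u, v)$ to the edges within $S_u \cup S_v$ is feasible for the dual of the $b$-matching LP on the induced subgraph, so by weak duality its objective value $\sum_{i \in S_u} b_c u_i + \sum_{j \in S_v} b_c v_j$ is at least $c(S_u \cup S_v)$; this is exactly the core inequality $\sum_{i \in S_u}\alpha(i) + \sum_{j \in S_v}\beta(j) \geq c(S_u \cup S_v)$.

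For the converse, I would take an arbitrary core imputation $(\alpha, \beta)$ and show it arises from an optimal dual solution via the same scaling, i.e.\ set $u_i = \alpha(i)/b_c$ and $v_j = \beta(j)/b_c$ and prove $(u,v)$ is dual-feasible and optimal. The dual feasibility constraint $u_i + v_j \geq w_{ij}$, equivalently $\alpha(i) + \beta(j) \geq b_c\, w_{ij}$, should follow by applying the core inequality to the two-player coalition $\{i, j\}$: since $b$ is uniform, edge $(i,j)$ can be matched $b_c$ times, so the worth $c(\{i,j\}) = b_c\, w_{ij}$, giving exactly the needed inequality. Nonnegativity of $u_i, v_j$ follows from nonnegativity of $\alpha, \beta$. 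Optimality then follows because the objective value $\sum_i b_c u_i + \sum_j b_c v_j = \sum_i \alpha(i) + \sum_j \beta(j) = c(U \cup V)$, which equals the primal optimum by Theorem-level integrality; a feasible dual solution meeting the primal optimum is optimal.

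The crucial ingredient that makes both directions go through, exactly as in Shapley--Shubik, is the integrality guaranteed by total unimodularity: since the constraint matrix of~(\ref{eq.b-uncon-core-primal-bipartite}) is TUM (noted in Section~\ref{sec.b-prelim}) and fits the Hoffman--Kruskal framework of Theorem~\ref{thm.Hoffman}, the primal LP has an integral optimum, so its value genuinely equals the worth $c(U \cup V)$ defined via integral $b$-matchings rather than some larger fractional value. Without this, the equality $c(U\cup V) = $ LP-optimum could fail and neither direction would close.

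The main obstacle I anticipate is the bookkeeping around the $b_c$ scaling and, more substantively, confirming that the worth of every sub-coalition (including the two-player coalitions used for feasibility) is correctly computed by the \emph{restricted} $b$-matching LP; this requires that restricting to an induced subgraph and applying weak duality there yields the right bound, and that the uniform $b_c$ makes $c(\{i,j\}) = b_c w_{ij}$ rather than $w_{ij}$. The uniformity of $b$ is what makes the one-to-one correspondence clean—when $b$ varies across vertices (the general $b$-matching game), the two-player worth can exceed $\min\{b_i,b_j\} w_{ij}$ arguments differently and the scaling can no longer be a single global factor, which is precisely why the general case only admits a partial characterization.
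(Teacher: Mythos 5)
Your proposal is correct and follows essentially the same route as the paper's own proof: scale an optimal dual by $b_c$ to get a core imputation (using integrality of the primal polytope plus weak duality on induced subgraphs for the core inequalities), and conversely recover a feasible, optimal dual from a core imputation via the two-player coalitions $\{i,j\}$ whose worth is $b_c\, w_{ij}$. No gaps; the emphasis on integrality as the ingredient that equates the LP optimum with the integrally-defined worth is exactly the paper's key point.
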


\begin{proof}
The proof hinges on the fact that the polytope defined by the constraints of the primal LP, (\ref{eq.b-uncon-core-primal-bipartite}) has integral vertices, i.e., they are $b$-matchings in $G$. 

Let $(u, v)$ be an optimal dual solution. By integrality and the LP-duality theorem, the worth of the game, 
$$c(U \cup V) = b_c \cdot (\sum_{i \in U} {u_i} + \sum_{j \in V} {v_j}) .$$
Therefore $(\alpha, \beta)$ is an imputation, where $\alpha_i = b_c \cdot u_i$ and $\beta_j = b_c \cdot v_j$. 
	 
Consider a sub-coalition $(S_u \cup S_v)$, with $S_u \subseteq U, S_v \subseteq V$. Let $G'$ denote the subgraph of $G$ induced on the vertices $(S_u \cup S_v)$. Once again by integrality and the LP-duality theorem, we get that worth of $(S_u \cup S_v)$ equals the objective function value of  the optimal dual for graph  $G'$. Since the restriction of $(u, v)$ to $G'$ is a feasible dual for $G'$, we get that $b_c \cdot (\sum_{i \in S_u} {u_i} + \sum_{j \in S_v} {v_j}) =  \sum_{i \in S_u} {\alpha_i} + \sum_{j \in S_v} {\beta_j} \geq c(S_u \cup S_v)$, i.e., the core condition is satisfied for sub-coalition $(S_u \cup S_v)$. Therefore $(\alpha, \beta)$ is a core imputation.

Next, let $(\alpha, \beta)$ be a core imputation. By integrality and the LP-duality theorem,  
$$ \sum_{i \in U} {\alpha_i} + \sum_{j \in V} {\beta_j} \ = c(U \cup V) .$$
Let $u_i = {1 \over b_c} \alpha_i$ and $v_j = {1 \over b_c} \beta_j$. We will show that $(u, v)$ is an optimal dual solution for $G$, thereby proving the theorem.

Corresponding to any edge $e = (i, j)$, consider the sub-coalition $S = \{ i, j\}$. The worth of this sub-coalition is obtained by picking edge $e$ $b_c$ times, i.e., $c(\{i, j\}) = b_c \cdot w_e$.  Since $(\alpha, \beta)$ be a core imputation, the profit allocated to this sub-coalition is at least its worth, i.e., $\alpha_i + \beta_j \geq b_c \cdot w_e$. Dividing by $b_c$ we get $u_i + v_j \geq w_e$. Therefore $(u, v)$ satisfies the constraint in LP (\ref{eq.b-uncon-core-dual-bipartite}) and is hence a dual feasible solution. Since $(\alpha, \beta)$ is a core imputation, $ \sum_{i \in U} {\alpha_i} + \sum_{j \in V} {\beta_j} \ = c(U \cup V) .$ Therefore, by integrality and the LP-duality theorem, the objective function value of this dual equals the  optimal primal. Therefore $(u, v)$ is an optimal dual solution.  
\end{proof}

\begin{remark}
	\label{rem.simpler}
Clearly, the proof given above can be used for characterizing the core of the assignment game as well. We note that the proof given in \cite{Shapley1971assignment} does not explicitly use total unimodularity. Instead, it explicitly uses the maximum weight matching guaranteed by the TUM of LP (\ref{eq.core-primal-bipartite}). Our proof is simpler and more modular. Because of the latter, its idea applies directly to other games admitting TUM. We further observe that with a little bit of care, Theorem \ref{thm.b-uniform} could have been derived from Theorem \ref{thm.SS}. However, by doing that we would have lost the opportunity of stating this idea in a simple setting before applying it to more complex games. 
\end{remark}

Next, we prove that the core of the uniform bipartite $b$-matching game also has two extreme imputations, as claimed for the assignment game in Theorem \ref{thm.extreme}. For $i \in U$, let $\alpha_i^h$ and $\alpha_i^l$ denote the highest and lowest profits that $i$ accrues among all imputations in the core. Similarly, for $j \in V$, let $\beta_j^h$ and $\beta_j^l$ denote the highest and lowest profits that $j$ accrues in the core. Let $\alpha^h$ and $\alpha^l$ denote the vectors whose components are $\alpha_i^h$ and $\alpha_i^l$, respectively. Similarly, let $\beta^h$ and $\beta^l$ denote vectors whose components are $\beta_j^h$ and $\beta_j^l$, respectively. 

\begin{theorem}
	\label{thm.b-extreme}	
	The core of the uniform bipartite $b$-matching game has two extreme imputations; they are $(\alpha^h, \beta^l)$ and $(\alpha^l, \beta^h)$.
\end{theorem}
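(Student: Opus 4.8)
The plan is to reduce the statement to a lattice property of the polytope of optimal dual solutions, exactly in the spirit of the Shapley--Shubik argument. By Theorem \ref{thm.b-uniform}, a pair $(\alpha, \beta)$ is a core imputation if and only if $(u, v) = (\alpha / b_c, \beta / b_c)$ is an optimal solution of the dual LP (\ref{eq.b-uncon-core-dual-bipartite}). Since $b_c > 0$, scaling by $b_c$ is monotone, so $\alpha_i^h = b_c\, u_i^h$, $\alpha_i^l = b_c\, u_i^l$, and likewise for $\beta$, where $u_i^h, u_i^l, v_j^h, v_j^l$ denote the corresponding extreme values over all optimal duals. It therefore suffices to show that $(u^h, v^l)$ and $(u^l, v^h)$ are themselves optimal dual solutions.

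The key step is a lattice lemma: if $(u, v)$ and $(u', v')$ are optimal duals, then so are their \emph{join} $(\hat u, \hat v)$, defined by $\hat u_i = \max(u_i, u'_i)$ and $\hat v_j = \min(v_j, v'_j)$, and their \emph{meet} $(\check u, \check v)$, defined by $\check u_i = \min(u_i, u'_i)$ and $\check v_j = \max(v_j, v'_j)$. Feasibility of the join is checked edge by edge: for $(i,j) \in E$, if $\hat v_j = v_j$ then $\hat u_i + \hat v_j \geq u_i + v_j \geq w_{ij}$, while if $\hat v_j = v'_j$ then $\hat u_i + \hat v_j \geq u'_i + v'_j \geq w_{ij}$; nonnegativity is immediate, and the meet is symmetric. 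For optimality I would use the identities $\hat u_i + \check u_i = u_i + u'_i$ and $\hat v_j + \check v_j = v_j + v'_j$, which give that the sum of the two objectives (of join and meet) equals the sum of the objectives of $(u,v)$ and $(u',v')$, namely $2\,\mathrm{OPT}$; since LP (\ref{eq.b-uncon-core-dual-bipartite}) is a minimization and both join and meet are feasible, each objective is at least $\mathrm{OPT}$, forcing both to equal $\mathrm{OPT}$.

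With the lattice lemma in hand, I would produce $(u^h, v^l)$ as follows. For each $i \in U$ pick an optimal dual attaining $u_i = u_i^h$, and take the iterated join of these finitely many solutions; the $U$-component of a join is the coordinate-wise maximum, so the resulting $(\hat u, \hat v)$ is an optimal dual with $\hat u_i = u_i^h$ for every $i$. To identify $\hat v$ with $v^l$, take any optimal dual $(u, v)$ and form the join of $(\hat u, \hat v)$ with it; its $U$-component is again $\hat u$ (as $\hat u_i = u_i^h \geq u_i$), and optimality forces $\sum_j \min(\hat v_j, v_j) = \sum_j \hat v_j$, whence $\hat v_j \leq v_j$ for all $j$. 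As $(u,v)$ ranges over all optimal duals this yields $\hat v_j \leq v_j^l$, and since $v_j^l$ is the minimum we conclude $\hat v_j = v_j^l$. Thus $(u^h, v^l)$ is optimal, and the symmetric argument (interchanging $U$ and $V$, maxima and minima) shows $(u^l, v^h)$ is optimal; multiplying by $b_c$ gives the two claimed core imputations.

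The main obstacle I anticipate is the optimality half of the lattice lemma: the feasibility check is a routine case split, but the optimality argument hinges on the telescoping identity for join-plus-meet together with the direction of the inequality (minimization), so one must use that \emph{both} join and meet are feasible and hence each weakly exceeds the optimum. A secondary subtlety is that the extremes must be attained simultaneously across all coordinates — this is exactly what the iterated-join construction, combined with the ``join with an arbitrary optimum'' argument, delivers, and it is what makes $(u^h, v^l)$ a single well-defined imputation rather than a coordinatewise artifact.
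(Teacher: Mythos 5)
Your proof is correct and takes essentially the same approach as the paper: the paper's Lemma \ref{lem.extreme-lemma} is precisely your lattice (join/meet) lemma, stated for core imputations rather than for optimal duals (equivalent via the scaling by $b_c$ from Theorem \ref{thm.b-uniform}), and the paper omits both the lemma's proof and the derivation of the theorem from it, which you supply in full via the feasibility case split, the telescoping optimality identity, and the iterated-join argument.
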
 

The proof of this theorem follows from Lemma \ref{lem.extreme-lemma}, whose proof is straightforward and is omitted. Let $(q, r)$ and $(s, t)$ be two imputations in the core of the uniform bipartite $b$-matching game. For each $i \in U$, let 
\[ \un{\alpha_i} = \min(q_i, s_i) \ \ \ \  \mbox{and} \ \ \ \  \ov{\alpha_i} = \max (q_i, s_i) .\]
Further, for each $j \in R$, let
\[  \un{\beta_j} = \min (r_j , t_j) \ \ \ \  \mbox{and} \ \ \ \  \ov{\beta_j} = \max (r_j, t_j) .\] 

\begin{lemma}
	\label{lem.extreme-lemma}
	$(\un{\alpha} , \ov{\beta} )$ and $(\ov{\alpha}, \un{\beta} )$ are imputations in the core of the uniform bipartite $b$-matching game.
\end{lemma}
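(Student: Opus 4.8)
The plan is to route everything through Theorem~\ref{thm.b-uniform}: dividing an imputation coordinate-wise by $b_c$ produces a vector, and being a core imputation is \emph{equivalent} to this scaled vector being an optimal solution of the dual LP~(\ref{eq.b-uncon-core-dual-bipartite}). So it suffices to check that the two candidates $(\un{\alpha}, \ov{\beta})$ and $(\ov{\alpha}, \un{\beta})$, divided by $b_c$, are dual-feasible and attain the optimal dual value. Nonnegativity is immediate since $q, r, s, t \geq 0$, and the dual $u,v \geq 0$ constraints are inherited by taking coordinate-wise minima and maxima. The two substantive ingredients are feasibility of the packing constraints and attainment of the optimal value.

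For feasibility I would verify the \emph{crossed} constraints $\un{\alpha_i} + \ov{\beta_j} \geq b_c\, w_{ij}$ and $\ov{\alpha_i} + \un{\beta_j} \geq b_c\, w_{ij}$ for every edge $(i,j) \in E$. Recall that, since $(q,r)$ and $(s,t)$ are core imputations, applying the core condition to the two-vertex coalition $\{i,j\}$ (whose worth is $b_c\, w_{ij}$, obtained by picking $(i,j)$ exactly $b_c$ times, as in the proof of Theorem~\ref{thm.b-uniform}) gives $q_i + r_j \geq b_c\, w_{ij}$ and $s_i + t_j \geq b_c\, w_{ij}$. Fixing an edge $(i,j)$ and considering $\un{\alpha_i} + \ov{\beta_j}$: if $\un{\alpha_i} = q_i$ then $\un{\alpha_i} + \ov{\beta_j} \geq q_i + r_j \geq b_c\, w_{ij}$ because $\ov{\beta_j} \geq r_j$; if instead $\un{\alpha_i} = s_i$ then $\un{\alpha_i} + \ov{\beta_j} \geq s_i + t_j \geq b_c\, w_{ij}$ because $\ov{\beta_j} \geq t_j$. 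The symmetric case analysis, splitting on whether $\un{\beta_j}$ equals $r_j$ or $t_j$, handles $\ov{\alpha_i} + \un{\beta_j}$. This step --- pairing a coordinate-wise minimum on one side of the bipartition with a maximum on the other and still preserving each packing inequality --- is the crux, and is precisely the lattice property that underlies the two antipodal extreme imputations of Theorem~\ref{thm.b-extreme}.

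It remains to pin down the objective values, and here the clever point is to get equality out of one-sided bounds. The pointwise identity $\un{\alpha_i} + \ov{\alpha_i} = q_i + s_i$ and $\un{\beta_j} + \ov{\beta_j} = r_j + t_j$ (minimum plus maximum equals the sum of the two) combined with the fact that both $(q,r)$ and $(s,t)$ are imputations, each summing to $c(U \cup V)$, yields
\[ \Big( \sum_{i \in U} \un{\alpha_i} + \sum_{j \in V} \ov{\beta_j} \Big) + \Big( \sum_{i \in U} \ov{\alpha_i} + \sum_{j \in V} \un{\beta_j} \Big) = 2\, c(U \cup V). \]
On the other hand, the feasibility established above shows each scaled candidate is a feasible solution of the dual LP~(\ref{eq.b-uncon-core-dual-bipartite}), so by weak LP-duality each parenthesized sum is at least $c(U \cup V)$, the optimal value of the primal LP~(\ref{eq.b-uncon-core-primal-bipartite}) and hence the worth of the game. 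Two numbers that are each at least $c(U \cup V)$ yet sum to $2\, c(U \cup V)$ must both equal $c(U \cup V)$. Thus each candidate satisfies the imputation equality and attains the optimal dual value, so each is an optimal dual solution; by Theorem~\ref{thm.b-uniform} both are core imputations, as claimed. The main obstacle is the crossed-feasibility case analysis of the second paragraph; the equality in the third paragraph then comes essentially for free from the sandwiching argument.
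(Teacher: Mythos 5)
Your proof is correct. The paper actually omits the proof of Lemma~\ref{lem.extreme-lemma} entirely (it is declared ``straightforward''), and your argument is the natural one that fills the gap: the crossed-feasibility case analysis via the two-vertex coalitions $\{i,j\}$, together with the sandwiching identity $\un{\alpha_i}+\ov{\alpha_i}=q_i+s_i$ and weak duality to force both objective values to equal $c(U\cup V)$, correctly reduces the claim to the optimal-dual characterization of Theorem~\ref{thm.b-uniform}, exactly in the spirit of Shapley and Shubik's original lattice argument for Theorem~\ref{thm.extreme}.
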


Finally, we use complementarity to gain further insights into core imputations of the uniform bipartite $b$-matching game.

\begin{definition}
	\label{def.b-player}
	A generic player in $U \cup V$ will be denoted by $q$. We will say that $q$ is:
	\begin{enumerate}
		\item {\em essential} if $q$ is matched $b$ times in every maximum weight matching in $G$.
		\item {\em viable} if there is a maximum weight matching $M$ such that $q$ is matched $b$ times in $M$ and another, $M'$ such that $q$ is matched less than $b$ times in $M'$. 	
		\item {\em subpar} if for every maximum weight matching $M$ in $G$, $q$ is matched than $b$ times in $M$. 	
		\end{enumerate}
\end{definition}
	
Definitions \ref{def.player-paid}, \ref{def.team} and \ref{def.team-paid} carry over verbatim. Under these definitions, it is easy to check that all theorems and corollaries stated in Section \ref{sec.Complementarity} for the assignment game carry over to the uniform bipartite $b$-matching game.

\subsection{The Core of the Bipartite $b$-Matching Game}
\label{sec.b-arbitrary-core}

We next study the case that $b$ is an arbitrary function. As stated in Theorem \ref{thm.b-arbitrary}, for this game the dual partially characterizes its core. Corresponding to each optimal solution $(u, v)$ to the dual LP (\ref{eq.b-uncon-core-dual-bipartite}) there exists a core imputation $(\alpha, \beta)$, where the profit allocated to $i \in U$ is $\alpha_i = b_i \cdot u_i$ and that to $j \in V$ is $\beta_j = b_j \cdot v_j$. The proof of this statement is exactly the same as the the analogous statement in Theorem \ref{thm.b-uniform} and is omitted. Example \ref{ex.b-arbitrary} gives an instance of a bipartite $b$-matching game which has core imputations that do not correspond to any optimal dual solution. Hence we get: 

\begin{theorem}
	\label{thm.b-arbitrary}
	For the bipartite $b$-matching game, the dual partially characterizes its core.
\end{theorem}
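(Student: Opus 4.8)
The theorem claims that for the bipartite $b$-matching game with arbitrary $b$, the dual only partially characterizes its core. By Definition~\ref{def.char-complete/partial}, proving a partial (as opposed to complete) characterization requires two things: first, that every optimal dual solution yields a core imputation, and second, that there exists at least one core imputation that does \emph{not} arise from any optimal dual solution. The excerpt already grants me the first half — the forward direction is noted to be identical to the corresponding argument inside the proof of Theorem~\ref{thm.b-uniform}, mapping an optimal dual $(u,v)$ to the imputation $\alpha_i = b_i u_i$, $\beta_j = b_j v_j$. So the entire content of the proof reduces to establishing the second half, and the statement is explicitly deferred to a concrete instance: ``Example~\ref{ex.b-arbitrary} gives an instance\ldots that has core imputations that do not correspond to any optimal dual solution.''

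The plan is therefore to exhibit a small explicit counterexample. First I would set up a minimal bipartite graph — a single edge $(i,j)$ or a short path suffices — with distinct $b$-values on the two endpoints, say $b_i \neq b_j$, and a positive weight $w$. I would compute the worth $c(U\cup V)$ of the grand coalition as the maximum weight $b$-matching value, which on a single edge is $\min\{b_i,b_j\}\cdot w$, and likewise record the worth of every nonempty sub-coalition (here just the singletons, which have worth $0$, and the full coalition). Next I would write down the core constraints directly from Definition~\ref{def.core}: nonnegativity $\alpha_i,\beta_j\ge 0$, efficiency $\alpha_i+\beta_j = c(U\cup V)$, and the edge/sub-coalition constraint $\alpha_i+\beta_j \ge c(\{i,j\})$. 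The key observation is that the asymmetry $b_i\neq b_j$ decouples the ``per-unit'' prices enforced by the dual from the ``total'' prices enforced by the core: a core imputation only needs $\alpha_i+\beta_j$ to meet the aggregate worth $\min\{b_i,b_j\}w$, whereas a dual-derived imputation rigidly scales the \emph{same} dual price $u_i,v_j$ by the \emph{different} capacities $b_i,b_j$, so the achievable $(\alpha_i,\beta_j)$ pairs from duals trace out a restricted locus.

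Concretely, I would then identify a core imputation — for instance one that places weight on the endpoint with the smaller $b$-value in a way that forces $u_i = \alpha_i/b_i$ and $v_j = \beta_j/b_j$ to violate dual feasibility $u_i + v_j \ge w$, or that is dual-feasible but not dual-\emph{optimal} (its scaled objective $b_i u_i + b_j v_j$ overshoots the primal optimum). The cleanest way to finish is to show that the set of imputations reachable as $(b_i u_i, b_j v_j)$ over optimal duals is a proper subset of the core polytope, by pointing to a specific core point outside it and checking the three core inequalities hold there while the back-substituted $(u,v)$ fails optimality or feasibility of LP~(\ref{eq.b-uncon-core-dual-bipartite}).

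The main obstacle is purely one of construction rather than deep argument: I must choose the numbers so that the core is genuinely larger than the dual image, which needs $b_i\neq b_j$ (under uniform $b$ Theorem~\ref{thm.b-uniform} already forbids any gap), and I must verify \emph{both} that my candidate point satisfies all sub-coalition core constraints and that no optimal dual maps to it. The subtle step is the ``no optimal dual maps to it'' part: since the map $(u,v)\mapsto(b_iu_i,b_jv_j)$ is injective, I can invert it uniquely and then simply check that the resulting $(u,v)$ fails either dual feasibility or achieves the wrong objective value, which is a finite calculation on the chosen instance. I expect a two-vertex, one-edge example with $b_i=2$, $b_j=1$ (or similar) to suffice, keeping all verifications to a handful of arithmetic checks.
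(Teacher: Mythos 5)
Your proposal is correct and takes essentially the same approach as the paper: the paper likewise inherits the forward direction verbatim from the proof of Theorem~\ref{thm.b-uniform} (via $\alpha_i = b_i u_i$, $\beta_j = b_j v_j$) and then discharges the second half by a concrete instance, Example~\ref{ex.b-arbitrary}, a two-edge star with $b$-values $(2,2,1)$ whose core imputation $(4,0,0)$ back-substitutes to an infeasible dual. Your even simpler single-edge instance with $b_i=2$, $b_j=1$ does in fact work: the unique optimal dual is $(u_i,v_j)=(0,w)$, yet $(\alpha_i,\beta_j)=(w,0)$ lies in the core while its unique preimage $(w/2,0)$ violates $u_i+v_j\ge w$.
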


\begin{corollary}
\label{cor.b-arbitrary}
	The core of the bipartite $b$-matching game is always non-empty. 
\end{corollary}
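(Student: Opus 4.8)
The plan is to read off non-emptiness directly from the partial characterization already recorded in Theorem \ref{thm.b-arbitrary}, combined with the elementary fact that the primal--dual LP pair admits optimal solutions. The crucial point is that, unlike in the general (non-bipartite) graph matching game, no integrality gap can sabotage us here: the forward direction of the partial characterization guarantees that \emph{every} optimal dual solution already furnishes a genuine core imputation. So it suffices to produce a single optimal dual solution and transport it to the core.

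First I would establish that the primal LP (\ref{eq.b-uncon-core-primal-bipartite}) attains its optimum. It is feasible, since $x = 0$ satisfies every constraint, and its objective is bounded above on the feasible region: for each edge we have $x_{ij} \le \min(b_i, b_j)$, so $\sum_{(i,j) \in E} w_{ij} x_{ij} \le \sum_{(i,j) \in E} w_{ij} \min(b_i, b_j) < \infty$ because $E$ is finite and each $b_i, b_j$ is a finite nonnegative integer. By the LP-duality theorem, the dual LP (\ref{eq.b-uncon-core-dual-bipartite}) then also attains its optimum at some $(u, v)$ with matching objective value. I would then invoke the forward direction stated just before the corollary: putting $\alpha_i = b_i u_i$ and $\beta_j = b_j v_j$ yields a core imputation. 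Its verification is identical to the argument in the proof of Theorem \ref{thm.b-uniform} --- integrality of the $b$-matching polytope and LP-duality give $\sum_i \alpha_i + \sum_j \beta_j = c(U \cup V)$, and for any sub-coalition $(S_u \cup S_v)$ the restriction of $(u,v)$ is dual-feasible for the induced subgraph, so the allocation dominates $c(S_u \cup S_v)$. Hence the core contains $(\alpha, \beta)$ and is non-empty.

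I do not expect a real obstacle, since all the substantive content has been front-loaded into Theorem \ref{thm.b-arbitrary} and the TUM-based integrality of the $b$-matching polytope; the corollary is essentially an existence-of-an-optimal-dual remark. The only subtlety worth stating explicitly is \emph{boundedness} of the primal: feasibility alone would give a feasible dual but not an \emph{optimal} one, and it is precisely the optimal dual solution that the forward direction converts into a core imputation. I would therefore spell out the bound $x_{ij} \le \min(b_i, b_j)$ rather than leave boundedness implicit.
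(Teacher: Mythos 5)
Your proposal is correct and follows the same route as the paper: the corollary is read off from the fact, recorded just before Theorem~\ref{thm.b-arbitrary}, that every optimal dual solution of LP~(\ref{eq.b-uncon-core-dual-bipartite}) yields a core imputation via $\alpha_i = b_i u_i$, $\beta_j = b_j v_j$, so one optimal dual suffices. Your explicit check that the primal is feasible and bounded (hence an optimal dual exists) is a small, harmless addition that the paper leaves implicit.
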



\begin{figure}[h]
\begin{center}
\includegraphics[width=2.4in]{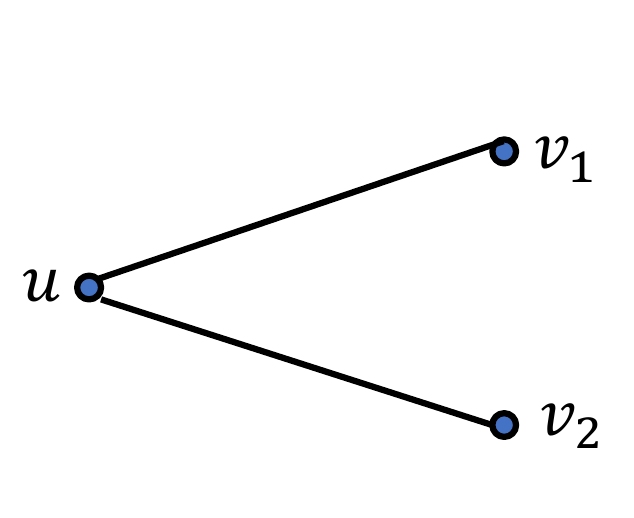}
\caption{The graph for Example \ref{ex.b-arbitrary}.}
\label{fig.7}
\end{center}
\end{figure}


\begin{example}
	\label{ex.b-arbitrary}
For the bipartite $b$-matching game defined by the graph of Figure \ref{fig.7}, let the $b$ values be $(2, 2, 1)$ for $(u, v_1, v_2)$, and let the edge weights be $1$ and $3$ for $(u, v_1)$ and $(u, v_2)$, respectively. 

By matching both edges once each we get that the worth of the game is 4. The unique optimal dual solution is $(1, 0, 2)$ for $(u, v_1, v_2)$. It is easy to see that allocation of profits of $(4, 0, 0)$ for $(u, v_1, v_2)$ is a core imputation. The corresponding dual solution would have been $(2, 0, 0)$; however, it is infeasible. Therefore this core imputation does not correspond to an optimal dual solution.
\end{example}

\begin{remark}
\label{rem.char}
Let $I$ be an instance of the bipartite $b$-matching game and let $D(I)$ denote the set of its core imputations which are optimal solutions to the dual LP. Since the optimal dual solutions don't capture all core imputations, the characterizations established in Theorems \ref{thm.vertices} and \ref{thm.edges} for the assignment game, don't carry over. Instead, we will restrict all definitions about payments to core imputations in $D(I)$ only. For players and teams we will use Definition \ref{def.b-player} and Definition \ref{def.team}, respectively. For payments to players and teams, Definitions \ref{def.player-paid} and \ref{def.team-paid} carry over, provided they are restricted to core imputations in $D(I)$ only. Under these definitions, it is easy to check that all theorems and corollaries stated in Section \ref{sec.Complementarity} for the assignment game carry over to the bipartite $b$-matching game as well. In particular, imputations in $D(I)$ pay  essential players only.  
\end{remark}

\section{The Core of the Hoffman-Kruskal Game}
\label{sec.Hoffman}

In this section, we will define a game that generalizes the $b$-matching game and whose LP-relaxation sacrifices as little generality as possible compared to the LP-formulation given in the theorem of Hoffman and Kruskal, Theorem \ref{thm.Hoffman}. The game we obtain is a natural one; its applications include matching students to schools and medical residents to hospitals. As described below, its dual variables provide a way of enforcing constraints that arise in such applications. The game has an interesting interpretation in the tennis club setting as well, as described in Section \ref{sec.tennis-club}.

The {\em Hoffman-Kruskal game} is defined as follows. Let $G = (U, V, E), \ w: E \rightarrow \cR_+$ be a bipartite graph and edge-weight function. The set of {\em agents} is $U \cup V$ and the set of {\em teams} is $E$. Let function $b: U \cup V \rightarrow \ZZ_+$ give an upper bound on the number of times a vertex can be matched. Further, let functions $c: E \rightarrow \ZZ_+$ and $d: E\rightarrow \ZZ_+$ give a lower bound and an upper bound, respectively, on the number of times an edge can be matched. A matching obeying these conditions will be called an {\em HK matching}. 

Linear program  (\ref{eq.Hoff-primal}) gives the LP-relaxation of the problem of finding a maximum weight HK matching.

		\begin{maxi}
		{} {\sum_{(i, j) \in E}  {w_{ij} x_{ij}}}
			{\label{eq.Hoff-primal}}
		{}
		\addConstraint{\sum_{(i, j) \in E} {x_{ij}}}{\leq b_i \quad}{\forall i \in U}
		\addConstraint{\sum_{(i, j) \in E} {x_{ij}}}{\leq b_j \quad}{\forall j \in V}
		\addConstraint{x_{ij}}{\geq c_{ij}}{\forall (i, j) \in E}
		\addConstraint{x_{ij}}{\leq d_{ij}}{\forall (i, j) \in E}
	\end{maxi}

Taking $u_i$, $v_j$, $y_{ij}$ and $z_{ij}$  to be the dual variables for the first to the fourth constraints, respectively, of (\ref{eq.Hoff-primal}), we obtain the dual LP: 

 	\begin{mini}
		{} {\sum_{i \in U}  {b_i u_i} + \sum_{j \in V} {b_j v_j} + \sum_{(i, j) \in E}  {(d_{ij} z_{ij} - c_{ij} y_{ij})}} 
			{\label{eq.Hoff-dual}}
		{}
		\addConstraint{u_i + v_j + z_{ij} - y_{ij}}{\geq w_{ij} \quad }{\forall (i, j) \in E}
		\addConstraint{u_{i}}{\geq 0}{\forall i \in U}
		\addConstraint{v_j}{\geq 0}{\forall j \in V}
		\addConstraint{y_{ij},z_{ij}}{\geq 0}{\forall (i, j) \in E}
	\end{mini}

We next state the manner in which the LP-relaxation for the Hoffman-Kruskal game is less general as compared to  the LP in Theorem \ref{thm.Hoffman}. For this purpose, recall the latter LP: 
\[ x \in \R^m \ \ s.t. \ \ a \leq Ax \leq b, \ \ c \leq x \leq d \]
\begin{enumerate}
	\item The constraint matrix has entries from $\{0, 1\}$ rather than $\{0, 1, -1\}$. In Section \ref{sec.flow}, we give a partial characterization of the max-flow game; its constraint matrix has $-1$ entries as well. 
	\item The lower bound on the number of times a vertex can be matched, $a \leq Ax$, has been removed. The reason is the following: if this constraint were added, then the form of the dual LP dictates that agents may receive negative profit. This would violate individual rationality and of course would be a reason for the agent to secede, hence going against the spirit of the core. 
	\item The vectors are constrained to have non-negative integral entries rather than arbitrary integral entries, since constraining an agent to play a negative number of games is meaningless.  
\end{enumerate}

The constraints on edges make the Hoffman-Kruskal game considerably more complex than the $b$-matching game; in particular, the dual contains not only vertex-variables, which would help us characterize the core of this game, but also edge-variables. However, this is a double-edged sword in that the latter variables help enforce constraints that arise naturally in applications of this game. 

Let us describe two applications among several. The first is matching students to schools. One way to use the Hoffman-Kruskal game in this setting is to let each vertex in $U$ represent a distinct category of students, e.g., minority students, women students with high grades, white male students with average grades etc. For $i \in U$, $b_i$ is the number of students in category $i$. Let vertices in $V$ represent schools, with $b_j$ for $j \in V$ being the number of students which school $j$ can admit. The second application is matching medical residents to hospitals, in which vertices in $U$ represent  categories of residents, divided according to their speciality, and $V$ represents hospitals. The constraints which the Hoffman-Kruskal game can handle are of two types:

\begin{enumerate}
	\item {\em Diversity:} These are the lower bound constraints on edges. They can be used for ensuring that a school has a sufficient number of minority or women students, or that a rural hospital has a sufficient number of doctors having a specific speciality. 
	\item {\em Avoiding over-representation:} These are the upper bound constraints on edges. They can be used for ensuring that certain classes of students or doctors are not over-representation, so as to obtain a ``balanced'' allocation. 
\end{enumerate}

Let us start by rewriting the constraint corresponding to edge $(i, j)$ in the dual LP (\ref{eq.Hoff-dual}) as follows: 

$$ u_i + v_j \geq w_{ij} + c_{ij} \cdot y_{ij} - d_{ij} \cdot z_{ij} .$$

The {\em worth of edge} $(i, j)$ is defined to be $w_{ij}$. We will define the {\em surplus} generated, when edge $(i, j)$ is matched, to be the right-hand-side of this inequality. The surplus adjusts the worth of the edge endogenously up or down depending on whether the lower or the upper bound constraint on $(i, j)$, respectively, is binding; if neither is binding the worth and surplus are the same. 

Thus, when the lower bound constraint on $(i, j)$ is binding, matching $i$ to $j$ generates more value than is physically possible in order to encourage more matchings of this type. Similarly,  when the upper bound constraint on $(i, j)$ is binding, matching $i$ to $j$ generates less value than is physically possible in order to prevent additional matchings of this type. 

As before, the {\em worth of the game} is the weight of a maximum weight HK matching in $G$ and is denoted by $W(U \cup V)$. However, the more important quantity is the {\em total surplus}, which is defined to be:

$$ \surplus(U \cup V) \ := \ W(U \cup V)  + \sum_{(i, j) \in E}  {(c_{ij} \cdot y_{ij} - d_{ij} \cdot z_{ij})} .$$

For a sub-coalition $(S_u \cup S_v)$, with $S_u \subseteq U, S_v \subseteq V$, let $G(S_u \cup S_v)$ denote the subgraph of $G$ induced on $S_u \cup S_v$ and let $E_S$ denote its edges. The worth of $(S_u \cup S_v)$ is the weight of a maximum weight HK matching in $G(S_u \cup S_v)$ and is denoted by $W(S_u \cup S_v)$. Next consider the dual LP (\ref{eq.Hoff-dual}) defined for the subgraph  $G(S_u \cup S_v)$ and consider an optimal solution for it. Let $y_{ij}^S$ and $z_{ij}^S$ be the dual variables for the lower and upper bound constraints on edge $(i, j) \in E_S$ in this optimal solution, respectively. Then, the {\em total surplus generated in $G(S_u \cup S_v)$} is defined to be:

$$  \surplus(S_u \cup S_v) \ := \ W(S_u \cup S_v)  + \sum_{(i, j) \in E}  {(c_{ij} \cdot y_{ij}^S - d_{ij} \cdot z_{ij}^S)} .$$

An {\em imputation} for the Hoffman-Kruskal game gives a way of dividing the total surplus among agents. Let $\alpha_i$ and $\beta_j$ denote the {\em payments} made to agents $i \in U$ and $j \in V$, respectively, under this imputation. Then, 

$$  \surplus(U \cup V) \ = \  {\sum_{i \in U}  {\alpha_i} + \sum_{j \in V} {\beta_j}}  . $$

An imputation $(\alpha, \beta)$ is in the {\em core of the Hoffman-Kruskal game} if for every sub-coalition $(S_u \cup S_v)$, 

$$  \surplus(S_u \cup S_v) \ \leq \  {\sum_{i \in S_u}  {\alpha_i} + \sum_{j \in S_v} {\beta_j}}  , $$
i.e., the total payment to the agents of the sub-coalition is at least as large as the surplus which the sub-coalition can generate.

\begin{theorem}
	\label{thm.core-Hoffman}
	For the Hoffman-Kruskal game, the dual partially characterizes its core in the following sense:
	Let $(u, v, y, z)$ be an optimal solution to the dual LP (\ref{eq.Hoff-dual}). Obtain $\surplus(U \cup V)$ using the worth of the game and these dual variables. Then $(\alpha, \beta)$, where $\alpha_i = b_i \cdot u_i$ and $\beta_j = b_j \cdot v_j$, is an imputation and is in the core of this game. 
\end{theorem}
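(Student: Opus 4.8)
The plan is to treat the two assertions separately: first that $(\alpha,\beta)$ is a genuine imputation, then that it satisfies the core inequality for every sub-coalition. The first is immediate from strong LP-duality between the primal (\ref{eq.Hoff-primal}) and the dual (\ref{eq.Hoff-dual}). Since $(u,v,y,z)$ is dual-optimal, its objective equals the primal optimum, i.e. $\sum_{i\in U} b_i u_i + \sum_{j\in V} b_j v_j + \sum_{(i,j)\in E}(d_{ij}z_{ij}-c_{ij}y_{ij}) = W(U\cup V)$. Rearranging and recalling that $\surplus(U\cup V)=W(U\cup V)+\sum_{(i,j)\in E}(c_{ij}y_{ij}-d_{ij}z_{ij})$ gives $\sum_{i\in U}\alpha_i+\sum_{j\in V}\beta_j=\sum_i b_i u_i+\sum_j b_j v_j=\surplus(U\cup V)$; and $\alpha_i=b_iu_i\ge 0$, $\beta_j=b_jv_j\ge 0$ because $b,u,v\ge 0$. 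Hence $(\alpha,\beta)$ is an imputation.

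For the core condition I would mimic the uniform-$b$-matching argument (Theorem \ref{thm.b-uniform}), now using weak duality on the induced subgraph. Fix a sub-coalition $(S_u\cup S_v)$ and observe that the restriction of the global dual $(u,v,y,z)$ to $G(S_u\cup S_v)$ is feasible for the dual LP of that subgraph: each edge constraint $u_i+v_j+z_{ij}-y_{ij}\ge w_{ij}$ with $(i,j)\in E_S$ is inherited, and all sign constraints survive. By weak duality the subgraph's primal optimum $W(S_u\cup S_v)$ is at most the objective of this restricted feasible dual, so $W(S_u\cup S_v)\le \sum_{i\in S_u} b_i u_i+\sum_{j\in S_v} b_j v_j+\sum_{(i,j)\in E_S}(d_{ij}z_{ij}-c_{ij}y_{ij})$, which rearranges to the desired shape $W(S_u\cup S_v)+\sum_{(i,j)\in E_S}(c_{ij}y_{ij}-d_{ij}z_{ij})\le \sum_{i\in S_u}\alpha_i+\sum_{j\in S_v}\beta_j$. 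The same inequality can be produced constructively: take a maximum-weight HK matching $x^S$ of $G(S_u\cup S_v)$, integral by the Hoffman--Kruskal theorem (Theorem \ref{thm.Hoffman}), dot the inherited edge inequalities against $x^S\ge 0$, and bound the four resulting sums using $\sum_j x^S_{ij}\le b_i$, $\sum_i x^S_{ij}\le b_j$, $x^S_{ij}\le d_{ij}$ and $x^S_{ij}\ge c_{ij}$ together with $u,v,y,z\ge 0$.

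The main obstacle is reconciling this left-hand side with $\surplus(S_u\cup S_v)$ as it is actually defined. Weak duality naturally produces the edge correction $\sum_{E_S}(c_{ij}y_{ij}-d_{ij}z_{ij})$ computed from the \emph{global} dual restricted to $E_S$, whereas the definition of $\surplus(S_u\cup S_v)$ uses the edge variables $y^S_{ij},z^S_{ij}$ of an optimal dual of the subgraph \emph{itself}, and these two edge vectors need not coincide. Thus I must still argue $\surplus(S_u\cup S_v)=W(S_u\cup S_v)+\sum_{E_S}(c_{ij}y^S_{ij}-d_{ij}z^S_{ij})\le W(S_u\cup S_v)+\sum_{E_S}(c_{ij}y_{ij}-d_{ij}z_{ij})$, i.e. that the coalition's intrinsic edge correction is no larger than the restricted-global one. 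I expect this to be the delicate point, because $\sum_{E_S}(c_{ij}y^S_{ij}-d_{ij}z^S_{ij})$ is in general not constant across optimal subgraph duals, so one first has to pin down which optimal subgraph dual defines $\surplus(S_u\cup S_v)$. The natural remedy is to fix the canonical optimal subgraph dual coming from complementary slackness against $x^S$ — setting $y^S_{ij}=0$ unless the lower bound $x^S_{ij}=c_{ij}$ binds and $z^S_{ij}=0$ unless the upper bound $x^S_{ij}=d_{ij}$ binds — and then to check that with this canonical choice the edge correction is dominated by that of the restricted global dual, which closes the argument.
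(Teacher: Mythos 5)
Your argument is, in substance, the paper's: strong duality for (\ref{eq.Hoff-primal})--(\ref{eq.Hoff-dual}) gives that $(\alpha,\beta)$ is an imputation, and for a sub-coalition the restriction of $(u,v,y,z)$ to $G(S_u\cup S_v)$ is dual-feasible there, so weak duality (together with integrality of the subgraph polytope, which identifies $W(S_u\cup S_v)$ with the subgraph's LP optimum) yields
\[
W(S_u\cup S_v)+\sum_{(i,j)\in E_S}\bigl(c_{ij}y_{ij}-d_{ij}z_{ij}\bigr)\ \le\ \sum_{i\in S_u}\alpha_i+\sum_{j\in S_v}\beta_j .
\]
Up to this point you and the paper coincide, and both derivations are correct.

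The ``delicate point'' you flag at the end is real --- the left-hand side above uses the edge correction of the \emph{restricted global} dual, while the displayed definition of $\surplus(S_u\cup S_v)$ uses edge variables $y^S,z^S$ from an optimal dual of the subgraph itself --- but your proposed remedy goes in the wrong direction. The domination inequality $\sum_{E_S}(c_{ij}y^S_{ij}-d_{ij}z^S_{ij})\le\sum_{E_S}(c_{ij}y_{ij}-d_{ij}z_{ij})$ cannot hold for an arbitrary (or even a ``canonical'' complementary-slackness) choice of optimal subgraph dual: since all optimal subgraph duals share the same total objective $W(S_u\cup S_v)$, the quantity $\surplus(S_u\cup S_v)$ equals the \emph{vertex part} $\sum b_iu^S_i+\sum b_jv^S_j$ of whichever optimal subgraph dual one picks, and Example \ref{ex.not-unique} exhibits two optimal duals of the same instance whose vertex parts differ ($12$ versus $10$). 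Embedding that instance as an induced subgraph shows that for an adversarial choice of subgraph dual the core inequality would simply be false. The correct resolution is therefore not an additional lemma but a convention: the sub-coalition's surplus must be measured with the same dual variables $(y,z)$ restricted to $E_S$ that define the imputation --- exactly as the theorem statement already does for the grand coalition (``Obtain $\surplus(U\cup V)$ using \ldots these dual variables''). Under that reading, the weak-duality inequality you derived \emph{is} the core condition and the proof is complete; the paper makes this identification silently, which is why its proof appears to end one step before yours does.
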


\begin{proof}
The proof again hinges on the fact that the polytope defined by the constraints of the primal LP, (\ref{eq.Hoff-primal}) has integral vertices, which follows from Theorem \ref{thm.Hoffman}. 

Let $(u, v, y, z)$ be an optimal solution to the dual LP (\ref{eq.Hoff-dual}). By integrality and the LP-duality theorem, the worth of the game,  
$$W(U \cup V) = {\sum_{i \in U}  {b_i u_i} + \sum_{j \in V} {b_j v_j} + \sum_{(i, j) \in E}  {(d_{ij} z_{ij} - c_{ij} y_{ij})}} .$$
Using $\alpha_i = b_i \cdot u_i$ and $\beta_j = b_j \cdot v_j$, we get:  

$$ \surplus(U \cup V) \ = \ W(U \cup V)  + \sum_{(i, j) \in E}  {(c_{ij} \cdot y_{ij} - d_{ij} \cdot z_{ij})}  \ = \  {\sum_{i \in U}  {\alpha_i} + \sum_{j \in V} {\beta_j}}  , $$
and hence $(\alpha, \beta)$ is an imputation. 

Consider a sub-coalition $(S_u \cup S_v)$, with $S_u \subseteq U, S_v \subseteq V$. The restriction of the dual $(u, v, y, z)$ to the vertices and edges of $G(S_u \cup S_v)$ satisfies the constraints of the dual for this graph and is therefore a feasible dual for this subgraph. By integrality and the LP-duality theorem, we get that worth of $(S_u \cup S_v)$, namely $W(S_u \cup S_v)$, equals the objective function value of the optimal dual for this subgraph, and this is upper bounded by any feasible dual for this subgraph. Therefore, 

$$ W(S_u \cup S_v) \ \leq \ {\sum_{i \in S_u}  {b_i u_i} + \sum_{j \in S_v} {b_j v_j} + \sum_{(i, j) \in E_S}  {(d_{ij} z_{ij} - c_{ij} y_{ij})}} . $$
Therefore, 
$$  \surplus(S_u \cup S_v) \ \leq \  {\sum_{i \in S_u}  {\alpha_i} + \sum_{j \in S_v} {\beta_j}}, $$
thereby proving that $(\alpha, \beta)$ is a core imputation.
 \end{proof}

\begin{corollary}
\label{cor.b-gen}
	The core of the Hoffman-Kruskal game is always non-empty. 
\end{corollary}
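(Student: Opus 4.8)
The plan is to derive non-emptiness directly from Theorem \ref{thm.core-Hoffman}. That theorem already exhibits, for \emph{every} optimal solution $(u, v, y, z)$ of the dual LP (\ref{eq.Hoff-dual}), an explicit core imputation $(\alpha, \beta)$ with $\alpha_i = b_i \cdot u_i$ and $\beta_j = b_j \cdot v_j$. Hence it suffices to prove that the dual LP (\ref{eq.Hoff-dual}) attains its optimum, i.e., that it has at least one optimal solution; the construction of Theorem \ref{thm.core-Hoffman} then manufactures a point of the core.

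By strong LP-duality, the dual has a finite optimal solution precisely when the primal LP (\ref{eq.Hoff-primal}) does, so the task reduces to showing that the primal attains its maximum. This follows from feasibility together with boundedness. Boundedness is immediate and does not require any hypothesis: every feasible $x$ satisfies $c_{ij} \leq x_{ij} \leq d_{ij}$ for each edge, so the feasible region is contained in the compact box $[c, d]$ and the linear objective $\sum_{(i, j) \in E} w_{ij} x_{ij}$ is bounded above. Once feasibility is known, compactness of this region guarantees that the maximum is attained, the dual minimum is attained, and we are done.

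The one genuinely delicate point — and what I expect to be the main obstacle — is feasibility of the primal, which is an \emph{assumption} about the game rather than something to be computed. For the Hoffman-Kruskal game to be well-defined at all, the worth $W(U \cup V)$ must be the weight of a maximum weight HK matching, which presupposes that at least one HK matching exists. Concretely this requires $c_{ij} \leq d_{ij}$ for every edge, together with $\sum_{(i, j) \in E} c_{ij} \leq b_i$ for each $i \in U$ and the symmetric condition for each $j \in V$, so that the forced lower bounds do not overload any vertex. Under this standing well-definedness hypothesis the primal feasible region is non-empty, hence the primal attains its maximum and the argument above goes through. Thus the correct way to present the proof is to observe that non-emptiness of the core rests entirely on feasibility of the primal LP, that this feasibility is built into the very definition of the game, and that every other ingredient is the routine existence-of-an-optimizer argument combined with Theorem \ref{thm.core-Hoffman}.
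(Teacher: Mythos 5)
Your proposal is correct and follows exactly the route the paper intends: the corollary is an immediate consequence of Theorem \ref{thm.core-Hoffman}, since any optimal dual solution yields a core imputation, and the paper leaves the routine existence of an optimal dual (primal feasibility being part of the game's well-definedness, boundedness from the box constraints $c \leq x \leq d$) unstated. Your explicit treatment of that existence step is a harmless elaboration, not a different argument.
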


By Example \ref{ex.b-arbitrary}, there are core imputations for the special case of the bipartite $b$-matching game which are not optimal dual solutions. Therefore the same holds for the Hoffman-Kruskal game. Remark \ref{rem.char} carries over to the Hoffman-Kruskal game. In particular, essential players are defined in Definition \ref{def.b-player} and core imputations corresponding to optimal dual solutions allocate all the surplus to essential players.

\subsection{Is the surplus of the Hoffman-Kruskal game uniquely defined?}
\label{sec.unique?}

In this section, we will answer the question stated in the title. Clearly, the worth of this game is unique. The answer\footnote{This non-uniqueness is similar to the non-uniqueness of other solution concepts in economics, e.g., Nash equilibrium and market equilibrium.} turns out to be ``No'', as shown in Example \ref{ex.not-unique}. Therefore, Theorem \ref{thm.core-Hoffman} holds only if the surplus and the payments to agents come from the {\em same} optimal dual solution. In the assignment game, different optimal duals make different allocations of profits to agents; however, the total profit allocated is the same and is the worth of the game. On the other hand, as shown in Example \ref{ex.not-unique}, the total surplus of different optimal duals can be different under the Hoffman-Kruskal game.

\begin{example}
	\label{ex.not-unique}
Consider the following instance of the Hoffman-Kruskal game, based on the graph of Figure \ref{fig.7}. Let the $b$ values of vertices be $(4, 2, 3)$ for $(u, v_1, v_2)$. For edges $(u, v_1)$ and $(u, v_2)$, let their weights be $1$ and $3$, and let their lower and upper bounds be $(1, 2)$ and $(0, 3)$, respectively. 

The maximum weight HK matching matches edges $(u, v_1)$ and $(u, v_2)$ 1 and 3 times, respectively, for a primal solution of value 10. Here are two optimal duals of value 10: The first assigns values of $(4, 0, 0)$ to vertices $(u, v_1, v_2)$, 2 for the lower bound for edge $(u, v_1)$ and the rest zero. The second assigns values of $(1, 0, 2)$ to vertices $(u, v_1, v_2)$ and zero to all edge dual variables. The surplus of the game under these two duals is 12 and 10, respectively. The payments to vertices $(u, v_1, v_2)$ under these two duals is $(12, 0, 0)$ and $(4, 0, 6)$, respectively.

\end{example}

\subsection{Are there other Matching-Based Games Having Complete Characterizations?}
\label{sec.complete}

Recall that for the uniform bipartite $b$-matching game, which is a sub-case of the Hoffman-Kruskal game, the optimal dual completely characterizes the core. This leads to the following interesting question:\\

{\em Are there other sub-cases of the Hoffman-Kruskal game for which the optimal dual completely characterizes the core?}

By Example \ref{ex.b-arbitrary}, if the answer were ``yes'', $b$ must be the constant function. Now there are two possibilities: there are edge upper bounds or there are edge lower bounds. To force a ``yes'' answer, let us pick these bounds to also be the constant function. Hence we are left with examining the following possibilities:
\begin{enumerate}
	\item  $b$ is the constant function and edge upper bounds are also constant; there are no edge lower bounds.
	\item  $b$ is the constant function and edge lower bounds are also constant; there are no edge upper bounds.
\end{enumerate}

Examples \ref{ex.HK-edge-upper} and \ref{ex.HK-edge-lower} give instances showing that for both these games, the optimal dual only partially characterizes the core.

\begin{example}
	\label{ex.HK-edge-upper}
Let the graph of Figure \ref{fig.7} define the following Hoffman-Kruskal game: the $b$ value of all vertices is $2$ and all edge upper bounds are 1. As before, let the edge weights be $1$ and $3$ for $(u, v_1)$ and $(u, v_2)$, respectively. 

The worth of the game is 4 and is obtained by matching both edges once each. The following are optimal dual solutions: 
\begin{enumerate}
	\item Assign zero to all vertices and 1 and 3 to edges $(u, v_1)$ and $(u, v_2)$, respectively. 
	\item Assign 1 to $u$, 2 to the upper bound dual of the edge $(u, v_2)$ and zero to the remaining vertices and edges. 
\end{enumerate}
Clearly, any convex combination of these two is also an optima dual solution. It is easy to verify that this list is exhaustive. 

The surplus of the game is zero under the first dual and 2 under the second. Corresponding to the second dual, the allocation of payments of $(1, 0, 1)$ for $(u, v_1, v_2)$ is a core imputation. The corresponding dual solution assigns $(1/2, 0, 1/2)$ to vertices $(u, v_1, v_2)$ and 2 to the upper bound dual of the edge $(u, v_2)$; this dual is infeasible. Therefore this core imputation does not correspond to an optimal dual solution.
\end{example}

\begin{example}
	\label{ex.HK-edge-lower}
Let the graph of Figure \ref{fig.7} define the following Hoffman-Kruskal game: the $b$ value of all vertices is $2$ and all edge lower bounds are 1; there are no edge upper bounds. As before, let the edge weights be $1$ and $3$ for $(u, v_1)$ and $(u, v_2)$, respectively. 

The worth of the game is 4 and is obtained by matching both edges once each. Consider the following optimal dual solution: 	 Assign 3 to $u$, 2 to the lower bound dual of the edge $(u, v_1)$ and zero to the remaining vertices and edges. Under this dual, the surplus of the game is 6 and the corresponding payments of $(6, 0, 0)$ for $(u, v_1, v_2)$ is a core imputation. 

Corresponding to this surplus, another payment in the core is $(3, 3, 0)$ for $(u, v_1, v_2)$. The corresponding dual solution assigns $(1.5, 0, 1.5)$ to vertices $(u, v_1, v_2)$ and 2 to the lower bound dual of the edge $(u, v_1)$; this dual is infeasible. Therefore this core imputation does not correspond to an optimal dual solution.
\end{example}

\section{The Core of Concurrent Games}  
\label{sec.general}

The {\em general graph matching game} consists of an undirected graph $G = (V, E)$ and an edge-weight function $w$, with the vertices $V$ being agents and edges $E$ being possible doubles teams in the tennis analogy of Section \ref{sec.intro}; $w_{i j}$ represents the profit generated by team $(i, j)$. The {\em worth} of a coalition $S \subseteq V$, denoted by $p(S)$, is defined to be the weight of a maximum weight matching in the graph $G(S)$, i.e., the restriction of $G$ to vertices in $S$. Definitions \ref{def.cooperative-game}, \ref{def.imputation}, \ref{def.core} carry over.

 Deng et al. \cite{Deng1999algorithms} showed that the core of this game is non-empty if and only if the weights of maximum weight integral and fractional matchings concur; if so, we will say that the game is {\em concurrent}. Below we give the LP-relaxation of the problem of finding a maximum weight fractional matching and thereby provide the underlying reason for the above-stated result of \cite{Deng1999algorithms} as well as their characterization of the core of concurrent games.  

We will work with the following LP (\ref{eq.core-primal}), whose optimal solutions are maximum weight fractional matchings in $G$. 

	\begin{maxi}
		{} {\sum_{(i, j) \in E}  {w_{ij} x_{ij}}}
			{\label{eq.core-primal}}
		{}
		\addConstraint{\sum_{(i, j) \in E} {x_{ij}}}{\leq 1 \quad}{\forall i \in V}
		\addConstraint{x_{ij}}{\geq 0}{\forall (i, j) \in E}
	\end{maxi}
	
Note that in case $G$ is bipartite, LP (\ref{eq.core-primal}) is equivalent to LP (\ref{eq.core-primal-bipartite}). Therefore, by Theorem \ref{thm.Hoffman}, it always has an integral optimal solution. On the other hand, if $G$ is non-bipartite, LP (\ref{eq.core-primal}) may have no integral optimal solutions, e.g., a triangle with unit weight edges. However, by Theorem \ref{thm.Balinski}, this LP always has a half-integral optimal solution.

\begin{theorem}
\label{thm.Balinski}  (Balinski \cite{Balinski1965integer}) 
	For a general graph, the vertices of the polytope defined by the constraints of LP (\ref{eq.core-primal}) are half-integral, such that the edges set to 1 form a matching and those set to half form disjoint odd-length cycles. 
\end{theorem}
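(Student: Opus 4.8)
The plan is to give a polyhedral perturbation argument, since the incidence matrix here fails to be TUM precisely because of odd cycles and so Theorem~\ref{thm.Hoffman} does not apply. Let $x$ be a vertex of the polytope $P=\{x\ge 0 : x(\delta(i))\le 1\ \forall i\in V\}$, let $E'=\{e : x_e>0\}$ be its support with support graph $H=(V',E')$, and call a vertex $i$ \emph{saturated} if $x(\delta(i))=1$. First I would record the basic criterion: $x$ is a vertex of $P$ if and only if there is no nonzero $\eta\in\R^{E'}$ with $\eta(\delta(i))=0$ for every saturated $i$. Indeed, given such an $\eta$, both $x\pm\epsilon\eta$ lie in $P$ for small $\epsilon>0$, because $x_e>0$ on the support keeps the nonnegativity constraints slack, the saturated constraints are preserved exactly, and the unsaturated vertices retain strict slack $x(\delta(i))<1$; this exhibits $x$ as the midpoint of a segment in $P$. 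Conversely, the absence of such an $\eta$ means that the only tight constraints (the saturations, once non-support edges are fixed to $0$) pin $x$ down uniquely. Thus it suffices to show that whenever $H$ is \emph{not} a disjoint union of single edges and odd cycles, such a perturbation vector $\eta$ exists.

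Next I would build $\eta$ in the forbidden configurations. (i) If $H$ contains an even cycle, assign $\eta$ the alternating values $+1,-1,+1,\dots$ around it and $0$ elsewhere; every cycle vertex then has $\eta(\delta(\cdot))=0$, so $\eta$ is a valid direction. (ii) If $H$ contains a path whose two endpoints are both unsaturated, alternate $\pm1$ along the path: internal vertices cancel and the endpoints are unsaturated, so again $x\pm\epsilon\eta\in P$. (iii) The delicate case is two vertex-disjoint odd cycles $C_1,C_2$ joined by a path $Q$ (a ``dumbbell''). Here I would alternate $\pm1$ along $Q$, which creates a net imbalance of $\pm1$ at the two junction vertices $v_0\in C_1$ and $w_0\in C_2$ and $0$ at internal path vertices, and then absorb these imbalances using the cycles. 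On an odd cycle $v_0v_1\cdots v_{2k}v_0$, setting $\eta_{(v_i,v_{i+1})}=(-1)^i c/2$ yields $\eta(\delta(v_i))=0$ for $1\le i\le 2k$ and $\eta(\delta(v_0))=c$ for any prescribed $c$; choosing $c=\mp1$ on each cycle cancels the path's imbalance, producing a nonzero $\eta$ balanced at every vertex. I expect this dumbbell construction --- making the odd cycles absorb the charge left by the connecting path --- to be the main obstacle, as it is the one case where naive alternating signs fail.

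Finally I would convert ``no such $\eta$'' into the stated structure. Ruling out (i) forces every cycle of $H$ to be odd, and ruling out (iii) forces each connected component to contain at most one cycle; hence each component is either acyclic or an odd cycle possibly with trees attached. A short feasibility argument eliminates extra tree edges: a saturated leaf forces its incident edge to value $1$, which then oversaturates its neighbor unless that neighbor is unsaturated; pushing this together with the fact (from ruling out (ii)) that a component cannot contain a path between two unsaturated vertices leaves only single edges and bare odd cycles. For a single-edge component, saturating an endpoint gives $x_e=1$, so these edges form a matching. For an odd cycle $v_0\cdots v_{2k}v_0$, all vertices must be saturated (otherwise dropping one saturation equation leaves a free perturbation direction), and the system $x(\delta(v_i))=1$ then has the unique solution $x_e=\tfrac12$ on every cycle edge, since telescoping the equations around the odd cycle gives $2x_e=1$. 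This yields half-integrality together with the precise description: edges set to $1$ form a matching and edges set to $\tfrac12$ form disjoint odd cycles, as claimed. The one point requiring care beyond the three perturbations is exactly this feasibility bookkeeping, which rules out pendant trees and forces all cycle vertices to be saturated.
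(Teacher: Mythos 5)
The paper does not actually prove Theorem~\ref{thm.Balinski}; it is quoted from Balinski's 1965 paper as a known fact, so there is no in-paper argument to compare against. Your route --- characterizing a vertex of the polytope of LP~(\ref{eq.core-primal}) by the nonexistence of a perturbation $\eta$ supported on $E'$ that is balanced at every saturated vertex, and then exhibiting such an $\eta$ for each forbidden support structure --- is the classical proof, and the individual gadgets you build (alternating even cycle, alternating path between two unsaturated endpoints, and the odd-cycle ``charge absorber'' with $\eta(\delta(v_0))=c$) are all correct.

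There is, however, a genuine gap in the final case analysis: you never handle a component consisting of an odd cycle with a single pendant path attached (a ``lollipop''). Your feasibility argument only shows that the leaf $d$ of such a pendant path must be \emph{unsaturated} (a saturated leaf would force its edge to $1$ and oversaturate its neighbor, which has degree at least $2$); it does not make the configuration infeasible. If $d$ is the \emph{only} unsaturated vertex of the component, case~(ii) does not apply (no second unsaturated endpoint), case~(iii) does not apply (only one cycle), and case~(i) does not apply (the cycle is odd). Yet such points exist and are not vertices: e.g., a triangle $a,b,c$ with pendant edge $(c,d)$ and $x_{ab}=0.6$, $x_{bc}=x_{ca}=0.4$, $x_{cd}=0.2$ saturates $a,b,c$ but not $d$, and admits the nonzero balanced direction $\eta_{ab}=t$, $\eta_{bc}=\eta_{ca}=-t$, $\eta_{cd}=2t$. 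The fix is a fourth perturbation you already have the pieces for: alternate $\pm 1$ along the path from the unsaturated leaf to the cycle and let the odd cycle absorb the terminal charge exactly as in your dumbbell construction. (A smaller point of the same kind: two odd cycles sharing a single vertex are not ``vertex-disjoint joined by a path,'' so case~(iii) as stated also misses the figure-eight; the same gadget with a trivial path handles it, and two cycles sharing two or more vertices yield a theta subgraph, hence an even cycle, hence case~(i).) With the lollipop case added, the conclusion that each component is a single saturated edge or a bare odd cycle with all vertices saturated, and hence that $x_e=1$ on matching edges and $x_e=\tfrac12$ around odd cycles, goes through as you wrote it.
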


Taking $v_i$ to be dual variables for the first constraint of (\ref{eq.core-primal}), we obtain  LP (\ref{eq.core-dual}). Any feasible solution to this LP is called a {\em cover} of $G$, since for each edge $(i, j)$, $v_i$ and $v_j$ cover edge $(i, j)$ in the sense that $v_i + v_j \geq w_{ij}$. An optimal solution to this LP is a {\em minimum cover}. We will say that $v_i$ is the {\em profit} of player $i$.

 	\begin{mini}
		{} {\sum_{i \in V}  {v_{i}}} 
			{\label{eq.core-dual}}
		{}
		\addConstraint{v_i + v_j}{ \geq w_{ij} \quad }{\forall (i, j) \in E}
		\addConstraint{v_{i}}{\geq 0}{\forall i \in V}
	\end{mini}
	
Let $Q_f$ ($Q_i$) be the weight of a maximum weight fractional (integral) matching in $G$. Now, $Q_f \geq Q_i$, since every integral matching is also a fractional matching. By the LP Duality Theorem, $Q_f$ equals the total value of a minimum cover. On the other hand, $Q_i$ is the worth of the game.

\cite{Deng1999algorithms} proved that the core of the general graph matching game is non-empty if and only if $Q_f = Q_i$. If so, by a proof that is similar to that of the Shapley-Shubik Theorem, it is easy to see each optimal solution to the dual LP, namely LP (\ref{eq.core-dual}),  gives a way for distributing the worth of the game among agents so that the condition of the core is satisfied, i.e., it is a core imputation. The converse is also true, i.e., every core imputation is an optimal solution to the dual LP. We summarize below.

\begin{theorem}
	\label{thm.Deng}
	(Deng et al. \cite{Deng1999algorithms}) 
	The core of the general graph matching game is non-empty if and only if $Q_f = Q_i$. If so, 	the dual LP (\ref{eq.core-dual}) completely characterizes the core of the game.  
\end{theorem}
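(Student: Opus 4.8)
The plan is to characterize the core imputations of the general graph matching game as exactly the non-negative covers of $G$ whose total value equals $Q_i$, and then read off both parts of the theorem from the fact (via LP-duality) that the minimum total value of a cover is $Q_f$. The two auxiliary facts I would establish are: (i) the core conditions for the two-vertex coalitions $\{i,j\}$ with $(i,j) \in E$, together with non-negativity, already force an imputation to be a feasible solution of the dual LP (\ref{eq.core-dual}); and (ii) conversely, every feasible cover automatically satisfies the core condition for every \emph{proper} sub-coalition. Once these are in place, the only remaining degree of freedom is the grand-coalition equality, which is governed precisely by whether $Q_f = Q_i$.

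First I would prove (i). Let $v$ be a core imputation. By Definition~\ref{def.imputation} each $v_i \geq 0$, so the non-negativity constraints of LP (\ref{eq.core-dual}) hold. For an edge $(i,j) \in E$ the coalition $S = \{i,j\}$ has worth $p(\{i,j\}) = w_{ij}$, so the core condition $v_i + v_j \geq p(\{i,j\})$ is exactly the covering constraint $v_i + v_j \geq w_{ij}$. Hence any core imputation is a feasible cover, and moreover satisfies $\sum_{i \in V} v_i = p(V) = Q_i$.

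Next I would prove (ii), the direction where the induced-subgraph structure enters. Let $v$ be any feasible cover of $G$ and $S \subseteq V$. The restriction of $v$ to $S$ is a feasible cover of the induced subgraph $G(S)$, since the covering constraints for edges of $G(S)$ form a subset of those for $G$. By weak LP-duality applied to $G(S)$, the total value of this restricted cover is at least the value of a maximum weight fractional matching of $G(S)$; and since every integral matching is a fractional one, that value is at least the maximum weight integral matching. Thus $\sum_{i \in S} v_i \geq Q_f(G(S)) \geq Q_i(G(S)) = p(S)$, so the core constraints hold for all sub-coalitions. Assembling the theorem is then immediate: if $Q_f = Q_i$, an optimal cover $v$ (which exists, as the primal is feasible and bounded) has $\sum_i v_i = Q_f = Q_i = p(V)$, so by (ii) it is a core imputation and the core is non-empty; conversely, by (i) any core imputation is a feasible cover with $\sum_i v_i = Q_i$, forcing $Q_i \geq Q_f$ and hence $Q_f = Q_i$. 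This gives Part 1, and under $Q_f = Q_i$ facts (i) and (ii) together with the grand-coalition equality show that the core imputations are precisely the feasible covers of value $Q_i = Q_f$, i.e. exactly the optimal solutions of LP (\ref{eq.core-dual}).

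The one point demanding care is step (ii): it is tempting to suppose one needs $Q_f = Q_i$ to hold on every induced subgraph $G(S)$, but the only inequality actually used there, $Q_f(G(S)) \geq Q_i(G(S))$, holds unconditionally because integral matchings are a special case of fractional ones. Concurrency $Q_f = Q_i$ is invoked solely on the full graph $G$, to guarantee that an optimal cover's total value equals the worth of the grand coalition. This asymmetry—sub-coalition constraints come for free from weak duality, while only the grand coalition needs concurrency—is the conceptual crux of the result, and it is exactly what fails in the non-bipartite case when $Q_f > Q_i$.
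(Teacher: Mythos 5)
Your proposal is correct and follows essentially the same route the paper sketches for this theorem (and carries out in detail for the analogous Theorem~\ref{thm.b-uniform}): two-vertex coalitions force any core imputation to be a feasible solution of LP~(\ref{eq.core-dual}), restriction of a cover to an induced subgraph plus LP-duality gives all sub-coalition constraints, and the grand-coalition equality is exactly the condition $Q_f = Q_i$. The one place you improve on the paper's template is step~(ii): where the proof of Theorem~\ref{thm.b-uniform} invokes integrality (strong duality with an integral primal) on every induced subgraph, you correctly observe that only the unconditional chain $\sum_{i \in S} v_i \geq Q_f(G(S)) \geq Q_i(G(S))$ is needed --- which matters here, since induced subgraphs of a concurrent game need not be concurrent.
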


\begin{example}
	\label{ex.5}
	Consider the graph given in Figure \ref{fig.5}. Assume that the weight of edge $(v_2, v_7)$ is 2 and the weights of the rest of the edges is 1. This graph has three maximum weight integral matching, of weight 4, and edge $(v_2, v_7)$ is in each of them; in addition, the three matchings contain the edge sets $\{(v_1, v_6), (v_4, v_5) \}, \{(v_1, v_6), (v_3, v_4) \}$ and $\{(v_3, v_4), (v_5, v_6) \}$. This graph also has three fractional matchings having a weight of 4, which are not integral: the seven-cycle $v_1, v_2, v_7, v_3, v_4, v_5, v_6$ taken half-integrally; the three-cycle $v_1, v_2, v_7$ taken half-integrally together with the edges $\{(v_3, v_4), (v_5, v_6)\}$; and the three-cycle $v_2, v_3, v_7$ taken half-integrally together with the edges $\{(v_4, v_5), (v_1, v_6)\}$. 
	
	By Theorem \ref{thm.Deng}, this game has a non-empty core. The unique core imputation assigns a profit of 1 to each of $v_2, v_4, v_6, v_7$ and zero to the rest. Observe that if the weight of edge $(v_2, v_7)$ is decreased a bit, then the core will become empty and if it is increased a bit, then the maximum weight fractional matchings will all be integral and the core will remain non-empty. 
\end{example}


\begin{figure}[h]
\begin{center}
\includegraphics[width=2.8in]{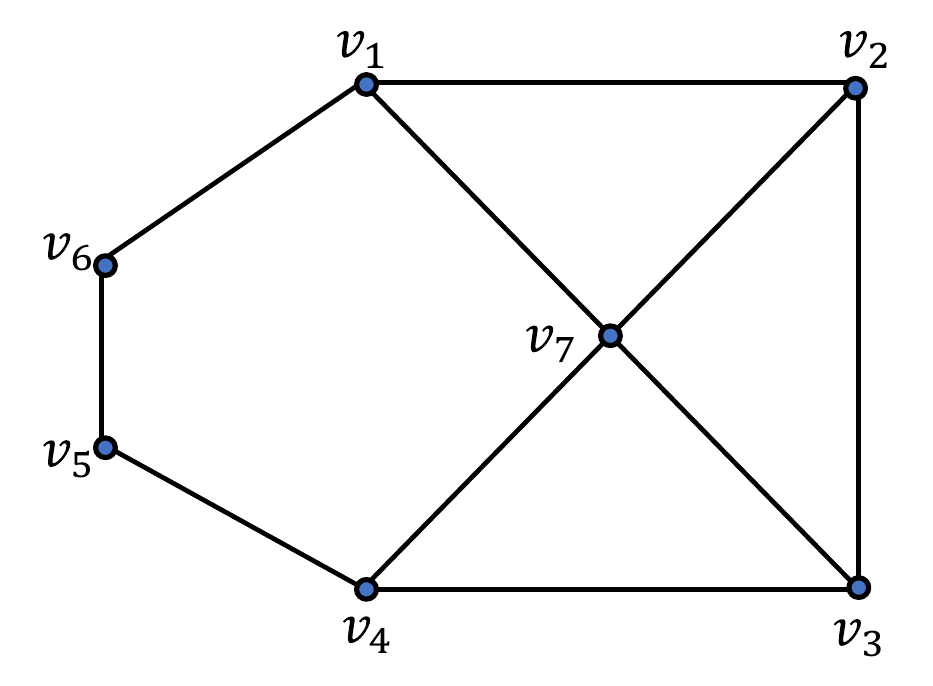}
\caption{The graph for Example \ref{ex.5}.}
\label{fig.5}
\end{center}
\end{figure}



For completeness, we describe a different LP for general graphs, namely LP (\ref{eq.core-primal-general}), given by Edmonds \cite{Edmonds.matching}, which always has integral optimal solutions; these are maximum weight matchings in the graph. This LP is an enhancement of LP (\ref{eq.core-primal}) via odd set constraints, as specified in (\ref{eq.core-primal-general}). The latter constraints are exponential in number, namely one for every odd subset $S$ of vertices. Clearly, the total number of integral matched edges in this set can be at most ${{(|S|-1)} \over 2}$. The constraint imposes this bound on any fractional matching as well, thereby ensuring that a fractional matching that picks each edge of an odd cycle half-integrally is disallowed and there is always an integral optimal matching. 

	\begin{maxi}
		{} {\sum_{(i, j) \in E}  {w_{ij} x_{ij}}}
			{\label{eq.core-primal-general}}
		{}
		\addConstraint{\sum_{(i, j) \in E} {x_{ij}}}{\leq 1 \quad}{\forall i \in V}
		\addConstraint{\sum_{(i, j) \in S} {x_{ij}}}{\leq {{(|S|-1)} \over 2} \quad}{\forall S \subseteq V, \ S \ \mbox{odd}}
		\addConstraint{x_{ij}}{\geq 0}{\forall (i, j) \in E}
	\end{maxi}

The dual of this LP has, in addition to variables corresponding to vertices, $v_i$, exponentially many more variables corresponding to odd sets, $z_S$, as given in (\ref{eq.core-dual-general}). As a result, the entire worth of the game does not reside on vertices only --- it also resides on odd sets.

 	\begin{mini}
		{} {\sum_{i \in V}  {v_{i}} + \sum_{S \subseteq V, \ \mbox{odd}} {z_S}} 
			{\label{eq.core-dual-general}}
		{}
		\addConstraint{v_i + v_j + \sum_{S \, \ni \, i, j}{z_S}}{ \geq w_{ij} \quad }{\forall (i, j) \in E}
		\addConstraint{v_{i}}{\geq 0}{\forall i \in V}
		\addConstraint{z_{S}}{\geq 0}{\forall S \subseteq V, \ S \ \mbox{odd}}
	\end{mini}

There is no natural way of dividing $z_S$ among the vertices in $S$ to restore core properties. The situation is more serious than that: it turns out that in general, the core of a non-bipartite game may be empty. 

This is easy to see for the graph $K_3$, i.e., a clique on three vertices, $i, j, k$, with a weight of 1 on each edge. Any maximum matching in $K_3$ has only one edge, and therefore the worth of this game is 1. Suppose there is an imputation $v$ which lies in the core. Consider all three two-agent coalitions. Then, we must have:
\[ v(i) + v(j) \geq 1, \ \ \ \  v(j) + v(k) \geq 1 \ \ \ \ \mbox{and} \ \ \ \ v(i) + v(k) \geq 1 .\]
This implies $v(i) + v(j) + v(k) \geq 3/2$ which exceeds the worth of the game, giving a contradiction.

One recourse to this eventuality was provided in \cite{Va.general}, which gives a ${2 \over 3}$-approximate core imputation for such games, i.e., the weight of a maximum weight matching in the graph is distributed among vertices in such a way that the total profit accrued by agents in a sub-coalition $S \subseteq V$ is at least ${2 \over 3}$ fraction of the profit which $S$ can generate by itself.

\subsection{Complementarity Applied to Concurrent Games}
\label{sec.Gen-Insights}

In this section, we will provide answers to the three issues raised in the Introduction, as they pertain to concurrent games. First we provide necessary definitions, which are adaptations of definitions from Section \ref{sec.Complementarity}. We will assume that $G = (V, E)$, $w: E \rightarrow \QQ_+$ is a concurrent game, 

\begin{definition}
	\label{def.Gen-player}
	A generic player in $V$ will be denoted by $q$. We will say that $q$ is:
	\begin{enumerate}
		\item {\em essential} if $q$ is matched in every maximum weight integral matching in $G$.
		\item {\em viable} if there is a maximum weight integral matching $M$ such that $q$ is matched in $M$ and another, $M'$ such that $q$ is not matched in $M'$. 	
		\item {\em subpar} if for every maximum weight integral matching $M$ in $G$, $q$ is not matched in $M$. 	
		\end{enumerate}
\end{definition}

\begin{definition}
\label{def.Gen-player-paid}
	Let $y$ be an imputation in the core. We will say that $q$ {\em gets paid in $y$} if $y_q > 0$ and {\em does not get paid} otherwise. Furthermore, $q$ is {\em paid sometimes} if there is at least one imputation in the core under which $q$ gets paid, and it is {\em never paid} if it is not paid under every imputation. 
\end{definition}

\begin{definition}
	\label{def.Gen-team}
	By a {\em team} we mean an edge in $G$; a generic one will be denoted as $e = (u, v)$. We will say that $e$ is:
	\begin{enumerate}
		\item {\em essential} if $e$ is matched in every maximum weight matching in $G$.
		\item {\em viable} if there is a maximum weight matching $M$ such that $e \in M$, and another, $M'$ such that $e \notin M'$. 
		\item {\em subpar} if for every maximum weight matching $M$ in $G$, $e \notin M$. 
	\end{enumerate}
	\end{definition}
	
\begin{definition}
\label{def.Gen-team-paid}
	 Let $y$ be an imputation in the core of a concurrent game. We will say that $e$ is {\em fairly paid in $y$} if $y_u + y_v = w_e$ and it is {\em overpaid} if $y_u + y_v > w_e$\footnote{Observe that by the first constraint of the dual LP (\ref{eq.core-dual-bipartite}), these are the only possibilities.}. Finally, we will say that $e$ is {\em always paid fairly} if it is fairly paid in every imputation in the core, and it is {\em sometimes overpaid} if there is a core imputation in which it is overpaid.
\end{definition}

\begin{theorem}
	\label{thm.Gen-insights}
	The following hold:
\begin{enumerate}
	\item For every player $q \in (U \cup V)$: 
		\[ q \ \mbox{is paid sometimes}  \ \Rightarrow \ q \ \mbox{is essential} \]
	\item 	 For every team $e \in E$: 
		\[ e \ \mbox{is viable or essential} \ \Rightarrow \ e \ \mbox{is always paid fairly} \]
\end{enumerate}  
\end{theorem}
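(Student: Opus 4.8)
The plan is to transplant the complementary-slackness arguments of Theorems \ref{thm.vertices} and \ref{thm.edges} to the concurrent setting, retaining only the directions that do not require integrality of the primal polytope. By Theorem \ref{thm.Deng}, for a concurrent game the core imputations are precisely the optimal solutions of the dual LP (\ref{eq.core-dual}), so I identify the imputation with an optimal dual $v$. Taking $x$ to be any optimal solution of the primal LP (\ref{eq.core-primal}) and $v$ any optimal dual, the Complementary Slackness Theorem supplies, for every vertex $q$, the condition $v_q \cdot (x(\delta(q)) - 1) = 0$, and for every edge $(i,j)$, the condition $x_{ij} \cdot (v_i + v_j - w_{ij}) = 0$. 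The single structural fact I will lean on is concurrency itself: because $Q_f = Q_i$, the indicator $\one_M$ of any maximum weight integral matching $M$ attains the optimum of LP (\ref{eq.core-primal}) and is therefore an optimal primal point.

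For the first implication, suppose $q$ is paid sometimes, so some optimal dual $v$ has $v_q > 0$. Vertex complementary slackness then forces $x(\delta(q)) = 1$ for every optimal primal $x$; specializing to $x = \one_M$ for each maximum weight integral matching $M$ shows that $q$ is matched in all of them, i.e.\ $q$ is essential. For the second implication, suppose the team $e = (i,j)$ is viable or essential, so some maximum weight integral matching $M$ contains $e$. Then $\one_M$ is an optimal primal point with $x_{ij} > 0$, and edge complementary slackness forces $v_i + v_j = w_{ij}$ for every optimal dual $v$; ranging over all optimal duals, equivalently over all core imputations, gives that $e$ is always paid fairly.

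The reason each part is only a one-sided implication --- and where the concurrent case genuinely departs from the assignment game --- is that the converse directions pass through strict complementarity to manufacture an optimal primal solution, after which the assignment game invokes the integrality of Theorem \ref{thm.int-assn-LP} to decompose that solution into integral optimal matchings. Here the sharpest available statement is Theorem \ref{thm.Balinski}, whose vertices are merely half-integral: the optimal primal solution produced by strict complementarity may support $q$ (resp.\ $e$) only through half-integral odd-cycle components that no integral maximum weight matching uses, so one cannot back out an integral matching that matches $q$ (resp.\ contains $e$). I therefore make no attempt at the converses; the only real care in the write-up is to invoke concurrency at exactly those points where an integral matching must be certified optimal for the fractional LP.
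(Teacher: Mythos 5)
Your proof is correct and follows essentially the same route as the paper, which simply observes that statement (1) is the forward direction of Theorem \ref{thm.vertices} and statement (2) is the reverse direction of Theorem \ref{thm.edges}, with identical complementary-slackness proofs. Your one addition --- making explicit that concurrency ($Q_f = Q_i$) is what certifies $\one_M$ as an optimal point of LP (\ref{eq.core-primal}), and that the converses fail because Theorem \ref{thm.Balinski} gives only half-integral vertices --- is a point the paper leaves implicit and defers to Examples \ref{ex.6} and \ref{ex.7}, but it does not change the argument.
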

	
Observe that the first statement of Theorem \ref{thm.Gen-insights} is equivalent to the forward direction of Theorem \ref{thm.vertices}, and the proof is also identical. The second statement of Theorem \ref{thm.Gen-insights} is equivalent to the reverse direction of Theorem \ref{thm.vertices} and again the proof is identical.

The proofs of the reverse direction of Theorem \ref{thm.vertices} and the forward direction of Theorem \ref{thm.edges} do not hold for general graphs. Counter-examples to these statements are given in Example \ref{ex.6} and Example \ref{ex.7}, respectively. In bipartite graphs, both these facts   follow from Theorem \ref{thm.int-assn-LP}. The latter fact does not hold for LP (\ref{eq.core-primal-general}); however, a weaker fact, given in Theorem \ref{thm.Balinski}, holds.

\begin{example}
	\label{ex.6}
Consider the game depicted in Figure \ref{fig.6}. Let the weights of edges $(v_1, v_2)$, $(v_2, v_3)$, $(v_3, v_1)$ and $(v_1, v_4)$ be $1.5, 1, 1.5$ and $1$, respectively. Then an optimal integral matching is $\{ (v_1, v_4), (v_2, v_3) \}$, having weight 2. The three-cycle $v_1, v_2, v_3$, taken to the extent of half, is a fractional matching of the same weight. Therefore, this graph has non-empty core. It has a unique core imputation which assigns profits of $1, {1 \over 2}, {1 \over 2}, 0$ to $v_1, v_2, v_3, v_4$, respectively. Since $v_4$ is essential but is not paid in the unique core imputation, this game provides a counter-example to the reverse direction of Theorem \ref{thm.vertices} in general graphs.  
\end{example}


\begin{figure}[h]
\begin{center}
\includegraphics[width=2.4in]{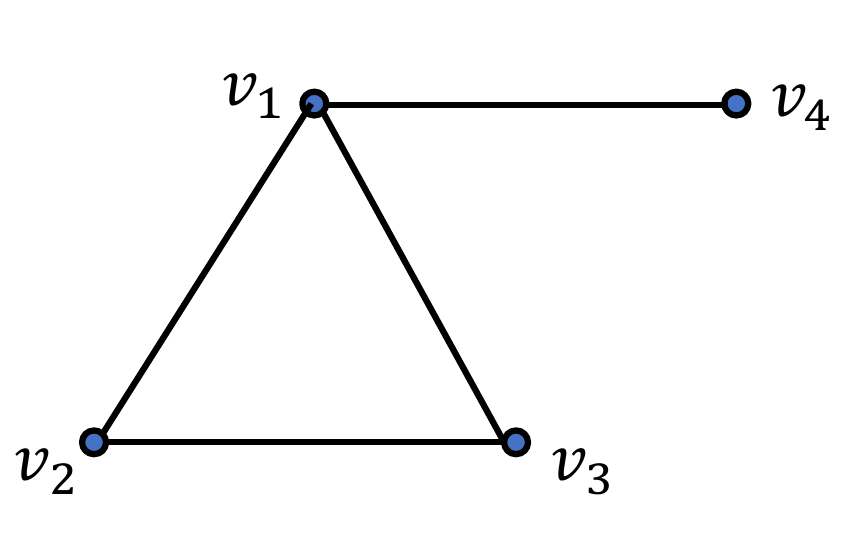}
\caption{The counter-example related to Theorem \ref{thm.Gen-insights}.}
\label{fig.6}
\end{center}
\end{figure}


Since the first statement of Theorem \ref{thm.Gen-insights} is weaker than Theorem \ref{thm.vertices}, the following corollary is weaker than Corollary \ref{cor.vertices}.  

\begin{corollary}
	\label{cor.Gen-vertices}
	In a concurrent game, the set of essential players is non-empty and in every core imputation, the entire worth of the game is distributed among essential players. 
\end{corollary}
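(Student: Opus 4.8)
The plan is to derive this corollary directly from the characterization in Theorem \ref{thm.Deng} together with the first statement of Theorem \ref{thm.Gen-insights}, so that essentially no new combinatorial work is required. First I would recall that, since the game is concurrent, Theorem \ref{thm.Deng} identifies the core with the set of optimal solutions to the dual LP (\ref{eq.core-dual}). Fix an arbitrary core imputation $y$. By the LP Duality Theorem, the objective value $\sum_{i \in V} y_i$ of this optimal cover equals $Q_f$, and since the game is concurrent we have $Q_f = Q_i$, which is exactly the worth of the game. Thus every core imputation distributes precisely the worth of the game, with total $Q_i$.

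Next I would pin down which players can carry positive profit. If $y_q > 0$ for some player $q$ under the core imputation $y$, then by definition $q$ is paid sometimes, so the first statement of Theorem \ref{thm.Gen-insights} forces $q$ to be essential. Contrapositively, any non-essential player receives zero under $y$. Since $y$ was arbitrary, this shows that in every core imputation all of the worth is supported on essential players, which is the second assertion of the corollary.

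Finally, to establish that the set of essential players is non-empty, I would invoke positivity of the edge weights: because $w$ takes positive values on a non-empty edge set, any maximum weight integral matching has positive weight, so the worth $Q_i$ is strictly positive. But we have just shown that the entire amount $\sum_{i \in V} y_i = Q_i > 0$ is allocated only to essential players, so at least one essential player must receive positive profit; in particular the set of essential players cannot be empty.

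The argument is short because the real content has been front-loaded into Theorem \ref{thm.Gen-insights}. The one point that deserves care --- and the closest thing to an obstacle --- is the non-emptiness claim: it is not immediately obvious on purely combinatorial grounds that some single vertex is matched by \emph{every} maximum weight integral matching, since different optimal matchings may cover different vertex sets. The duality route sidesteps this by arguing globally, namely that positive worth must land somewhere and Theorem \ref{thm.Gen-insights} guarantees it can only land on essential players. One should also confirm the harmless convention that the instance has at least one edge of positive weight, since otherwise the worth is zero and both assertions degenerate.
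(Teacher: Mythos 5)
Your proposal is correct and follows essentially the same route the paper takes (implicitly, by mirroring the argument for Corollary \ref{cor.vertices}): the contrapositive of the first statement of Theorem \ref{thm.Gen-insights} shows only essential players can be paid, and positivity of the edge weights makes the worth positive, so some essential player must exist. Your explicit appeal to Theorem \ref{thm.Deng} and LP duality to confirm that every core imputation sums to the worth is harmless and consistent with the paper's viewpoint, and your caveat about needing at least one positive-weight edge is a reasonable observation the paper handles by assumption.
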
 

The contrapositive of the second statement of Theorem \ref{thm.Gen-insights} is the following:

 \begin{center}
{\em If a team is sometimes overpaid, then it is subpar.}
\end{center}

\begin{example}
	\label{ex.7}
	In the game defined in Example \ref{ex.5}, the teams $(v_4, v_7), (v_1, v_2), (v_2, v_3)$ and $(v_1, v_7)$ are all subpar, since they are not matched in any maximum weight matching. Of these, only the first team is overpaid in the unique imputation in the core, the rest are fairly paid. On the other hand,	the vertices which get positive profit are precisely the essential vertices. 
\end{example}
 
Corollary \ref{cor.Gen-degen} is identical to Corollary \ref{cor.degen}.

\begin{corollary}
	\label{cor.Gen-degen}
	In the presence of degeneracy, imputations in the core of a concurrent game treat:
	\begin{itemize}
			\item  viable players in the same way as subpar players, namely they are never paid.  
		\item viable teams in the same way as essential teams, namely they are always fairly paid. 
	\end{itemize}
\end{corollary}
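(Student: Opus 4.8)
The plan is to read off both bullet points directly from the two statements of Theorem \ref{thm.Gen-insights}, using the contrapositive of the first statement for the players bullet and the implication as-stated for the teams bullet. The only real content is a definitional observation: under the concurrent-game definitions, a viable player falls on the ``not essential'' side of the player dichotomy, while a viable team falls on the ``viable or essential'' side of the team dichotomy. Once that is in place, nothing further is needed.

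First I would handle the players bullet. By Definition \ref{def.Gen-player}, a viable player $q$ admits a maximum weight integral matching $M'$ in which $q$ is unmatched; hence $q$ is not matched in \emph{every} maximum weight integral matching, so $q$ is not essential. A subpar player is likewise not essential, being unmatched in every maximum weight integral matching. Statement 1 of Theorem \ref{thm.Gen-insights} asserts that a player who is paid sometimes must be essential; taking its contrapositive, any player that is not essential is never paid. Applying this to both viable and subpar players yields that each is never paid, so the core treats them identically, as claimed.

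Next I would handle the teams bullet, which is even more direct. By Definition \ref{def.Gen-team}, both viable teams and essential teams satisfy the hypothesis ``$e$ is viable or essential'' of Statement 2 of Theorem \ref{thm.Gen-insights}, whose conclusion is precisely that $e$ is always paid fairly. Thus both classes of teams are always fairly paid, and again the core treats them identically. This mirrors the structure of Corollary \ref{cor.degen} for the assignment game, consistent with the remark that the two corollaries are identical in form.

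I do not anticipate a genuine obstacle: the corollary is an immediate repackaging of Theorem \ref{thm.Gen-insights} together with the observation that viability places players among the non-essential and teams among the ``viable-or-essential.'' The one point requiring care is to invoke the integral-matching versions of the definitions (Definition \ref{def.Gen-player}), so that viability really does rule out essentiality for players; this is exactly what licenses the contrapositive step, and it is the step I would state explicitly to avoid conflating integral and fractional notions of being matched.
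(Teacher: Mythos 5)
Your proof is correct and follows exactly the route the paper intends: the paper states this corollary without an explicit proof, presenting it as an immediate consequence of Theorem \ref{thm.Gen-insights} (the contrapositive of its first statement for players, the second statement as-is for teams), which is precisely your argument. Your observation that only the one-directional implications of Theorem \ref{thm.Gen-insights} are needed---so the corollary survives even though the full equivalences of Theorems \ref{thm.vertices} and \ref{thm.edges} fail for concurrent games---is the right point to make explicit.
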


\section{The Core of the Max-Flow Game} 
\label{sec.flow}

In Section \ref{sec.flow-LP}, we define the maximum flow problem and give its LP-formulation. The constraint matrix of this LP is TUM. However, unlike matching-based games, in which the constraint matrix had only $0/1$ entries, in this game, the constraint matrix has $-1$ entries as well. In Section \ref{sec.game-flow}, we define the max-flow game and give a partial characterization of its core.

\subsection{The Maximum Flow Problem and its LP-Formulation}
\label{sec.flow-LP}

Let $G=(V, E)$ be a directed graph with two distinguished vertices, a {\em source} $s$ and a {\em sink} $t$, and positive capacities, $c:E\rightarrow \Rplus$. A {\em flow} is a function $f: E \rightarrow \Rplus$ satisfying the following two constraints: 
\begin{enumerate}
\item
{\em Capacity constraint:} For each edge $e \in E$, the flow sent through $e$ is bounded by its capacity, $c_e$.
\item
{\em Flow conservation:} At each vertex $v$, other than $s$ and $t$, the total flow into $v$ should
equal the total flow out of $v$.
\end{enumerate}

The {\em value of flow $f$} is defined to be the total flow out of $s$ or the total flow into $t$; the two must be equal because of flow conservation at the rest of the vertices. The {\em maximum flow problem} is to find a flow whose value is maximum subject to these constraints. 

Partition the vertices into two sets $X$ and $\overline{X}$ so that $s \in X$ and $t \in \overline{X}$. Then the set of edges going from $X$ to $\overline{X}$ is called an {\em $s$--$t$ cut} and is denoted by $(X,\overline{X})$; its {\em capacity} is defined to be the sum of capacities of all these edges. Clearly, the capacity of any $s$--$t$ cut is an upper bound on any feasible flow. Therefore if the capacity of an $s$--$t$ cut, say $(X,\overline{X})$, equals the value of a feasible flow, $f$, then $(X,\overline{X})$ must be a {\em minimum $s$--$t$ cut in $G$} and $f$ must be a maximum flow in $G$. The celebrated {\em Max-Flow Min-Cut Theorem} proves that it is always possible to find a flow and an $s$--$t$ cut so that equality holds; we provide a proof sketch below. 

Let us formulate the maximum flow problem as a linear program. In order to obtain a simple formulation, we will first introduce a new edge of unbounded capacity from $t$ to $s$ and introduce flow conservation at $s$ and $t$ as well, thus converting the flow into a circulation. To obtain a maximum flow in the original graph, we seek a circulation which maximizes the flow $f_{ts}$ on the new  edge. Let primal variable $f_{ij}$ denote the amount of flow sent through edge $(i,j)\in E$. The primal linear program is the following:

	\begin{maxi}
		{} {f_{ts}}
			{\label{eq.flow-primal}}
		{}
		\addConstraint{\sum_{(j,i)\in E}{f_{ji}} ~-
            \sum_{(i,j)\in E}{f_{ij}}}{\leq 0 \quad}{\forall i \in V}
		\addConstraint{f_{ij}}{\leq c_{ij} }{\forall (i,j) \in E}
		\addConstraint{f_{ij}}{\geq 0}{\forall (i, j) \in E}
	\end{maxi}

The first set of inequalities say that for each node $i$, the total flow into $i$ is at most the total flow out of $i$. In order to obtain a maximization LP in standard form, we have not made the in-flow {\em equal to} the out-flow; however, as shown next, equality is guaranteed. 

\begin{lemma}
	\label{lem.equality}
If the inequality in first constraint of LP (\ref{eq.flow-primal}) holds at each vertex $v \in V$, then in fact it is satisfied with equality at each vertex. 
\end{lemma}

\begin{proof}
Suppose not and assume that the inequality is strict for at least one vertex. Consider the sum of all these $|V|$ constraints. On the LHS, all terms cancel out and we get 0; the RHS is also 0. Therefore, we get $0 < 0$, which is a contradiction. Hence the lemma holds. 
\end{proof}

Next, let us introduce dual variables $\pi_i$ and $\delta_{ij}$ for the first and second constraints, respectively, of the primal LP to obtain the dual program.

 	\begin{mini}
		{} {\sum_{(i, j) \in E}  {c_{ij} \delta_{ij}}}  
			{\label{eq.flow-dual}}
		{}
		\addConstraint{\delta_{ij} - \pi_i + \pi_j}{ \geq 0 \quad }{\forall (i, j) \in E}
		\addConstraint{\pi_s - \pi_t}{\geq 1}{}
		\addConstraint{\delta_{ij}}{\geq 0}{\forall (i, j) \in E}
		\addConstraint{\pi_i}{\geq 0}{\forall i \in V}
	\end{mini}

It will be convenient to view variables $\delta_{ij}$ and $\pi_i$ as distance labels on edges and potentials on vertices, respectively. In order to gain intuition on their use, let us first assume that for an instance $(G, c)$, the dual LP (\ref{eq.flow-dual}) has an {\em integral} optimal solution. It is easy to see that all variables in this solution must be set to 0 or 1, since no variable needs to exceed 1. If so, $\pi_s = 1, \pi_t = 0$ and the potentials of the rest of the vertices are either 0 or 1. 

Let $X$ be the set of vertices having potential 1 and $\overline{X}$ be the ones having potential 0. Let $(i, j)$ be an edge in the $s$--$t$ cut $(X,\overline{X})$. By the first constraint, $\delta_{ij} \geq \pi_i + \pi_j = 1$. Since $\delta_{ij}$ is $0/1$, it must be 1. Since $\delta_{ij}> 0$, by Lemma \ref{lem.d>0}, $f_{ij} = c_{ij}$, where $f$ is a maximum flow. Therefore the objective function value of the dual is the capacity of cut $(X, \overline{X})$ and it equals the value of flow $f$. Therefore, $(X, \overline{X})$ is a minimum $s$--$t$ cut in $G$. Any path from $s$ to $t$ must encounter an edge in the $s$--$t$ cut $(X,\overline{X})$ and therefore under this dual solution, the distance labels on any $s$--$t$ path must add up to at least 1. 

The next lemma is analogous to Theorem \ref{thm.vertices}; it will be useful in this section as well as in Section \ref{sec.flow-fair}.

\begin{definition}
	\label{def.flow-essential}
We will say that an edge $(i, j) \in E$ is {\em essential} if it is saturated in every maximum flow.
\end{definition}

\begin{lemma}
	\label{lem.d>0}
For every player $(i, j) \in E$:
		\[ \delta_{ij} > 0 \ \mbox{in some optimal dual solution}  \ \iff \ (i, j) \ \mbox{is essential} . \]  
\end{lemma}

\begin{proof}
 Let $f$ and $(\delta, \pi)$ be optimal solutions to LP (\ref{eq.flow-primal}) and LP (\ref{eq.flow-dual}), respectively. By the Complementary Slackness Theorem, for each $(i, j) \in E: \ \delta_{ij} \cdot (f_{ij} - c_{ij}) = 0$. 

$(\Rightarrow)$  Suppose $\delta_{ij} > 0$ in some optimal dual. Now, by the Complementary Slackness Theorem, for any optimal primal solution, $f$, $f_{ij} = c_{ij}$, i.e., $(i, j)$ is saturated in every max-flow and is therefore essential. This proves the forward direction.

$(\Leftarrow)$ Strict complementarity implies that corresponding to each player $(i, j)$, there is a pair of optimal primal and dual solutions, say $f$ and $(\delta, \pi)$, such that either $\delta_{ij} = 0$ or $f_{ij} = c_{ij}$ but not both. Assume that $(i, j)$ is essential, i.e., $f_{ij} = c_{ij}$ is every optimal primal solution. Then by strict complementarity there is an optimal dual such that $\delta_{ij} > 0$. This proves the reverse direction. 
\end{proof}

Next, let us understand how to interpret a {\em fractional} optimal solution to the dual LP (\ref{eq.flow-dual}); this will be critically used in Section \ref{sec.game-flow}. Let $(\delta, \pi)$ be a feasible dual solution and consider an $s$--$t$ path $(s = v_1, v_2, \ldots, v_l = t)$. By the first constraint in the dual, for each edge $(i, i+1)$ on this path, $\delta_{i, i+1} \geq  \pi_i - \pi_{i+1}$. Therefore, the sum of the distance labels on this path is 

$$ \sum_{i=1}^{l-1} {\delta_{i, i+1}} \ \geq \  \sum_{i=1}^{l-1} {(\pi_i - \pi_{i+1})} = \pi_s - \pi_t \geq 1  . $$

Effectively, the distance labels hold all the information contained in a feasible dual solution, since we can set $\pi_s = 1, \pi_t = 0$ and compute the potentials of the rest of the vertices by using the first constraint of the dual LP. Furthermore, any distance labels such that on each $s$--$t$ path, the sum of the distance labels is at least 1, constitute a feasible dual solution. 

We will define the distance labels of an arbitrary optimal dual solution to be a {\em fractional $s$--$t$ cut}; its {\em capacity} is defined to be the objective function value achieved by the dual, i.e., $ {\sum_{(i, j) \in E}  {c_{ij} \delta_{ij}}}$. By the LP-Duality Theorem, this capacity is equal to the max-flow value.

In principle, the minimum fractional $s$--$t$ cut could have lower capacity than the minimum integral cut. It is easy to prove via induction, see \cite{LP.book}, that the constraint matrix of LP (\ref{eq.flow-dual}) is TUM and since the constants in the constraints of the dual are integral, the underlying polytope of this LP is integral and hence LP (\ref{eq.flow-dual}) always has an integral optimal solution; the latter is a minimum $s$--$t$ cut. This proves the Max-Flow Min-Cut Theorem.

\subsection{The Max-Flow Game and a Characterization of its Core}
\label{sec.game-flow}

The {\em max-flow game} is defined over an instance of the maximum flow problem; assume the latter  consist of directed graph $G=(V, E)$, with two distinguished vertices, a {\em source} $s$ and a {\em sink} $t$, and non-negative capacities, $c:E\rightarrow \Rplus$. The {\em agents} of the game are the edges. Flow sent from $s$ to $t$ generates profit which needs to be divided up in a ``fair'' manner among the agents. 

The grand coalition is $E$ and the {\em worth of the game}, $\worth(E)$, is defined to be the value of the maximum flow in $G$. For any sub-coalition $S \subseteq E$, let $G(S)$ denote the graph $G = (V, S)$, i.e., the vertex set is $V$ the graph has the edges of $S$ only. The capacities on these edges are given by the given by the restriction of function $c$ to $S$, i.e., $c:S \rightarrow \Rplus$. The {\em worth of $S$}, $\worth(S)$, is defined to be the value of a maximum flow in $G(S)$. The definitions of imputation and core given in Definitions \ref{def.imputation} and \ref{def.core} carry over.

\begin{theorem}
	\label{thm.flow}
	For the max-flow game, the dual partially characterizes its core.
\end{theorem}

\begin{proof}
Let $(\delta, \pi)$ be an optimal dual solution. By the LP-duality theorem, the worth of the game equals the objective function value of an optimal dual, i.e.,  
$$\worth(E) =  {\sum_{(i, j) \in E}  {c_{ij} \cdot \delta_{ij}}} .$$  

Define the {\em profit} of edge $(i, j)$ to be 
$$ p(i, j) := {c_{ij} \cdot \delta_{ij}} .$$
Clearly, $p$ is an imputation and by Lemma \ref{lem.d>0}, only fully saturated edges can have positive profit. For any sub-coalition $S \subseteq V$, define 
$$ \profit(S) := \sum_{e \in S} {p(e)} .$$
To prove that $p$ is in the core of the max-flow game, we need to show that for any sub-coalition $S \subseteq V$, $\worth(S) \leq \profit(S)$. 
	 
Consider a sub-coalition $S \subseteq V$. Let $f$ be a max-flow in the graph $G(S)$; its value is $\worth(S)$. Using standard methods, $f$ can be decomposed into at most $|S|$ flow paths. Let $\ccP$ denote the set of all such paths and for path $p \in \ccP$, let $f_p$ denote the flow sent on $p$. 

We will use the following two facts. First, as stated above, the sum of distance labels on $p$, $\sum_{e \in p} {d_e} \geq 1$. Second, by Lemma \ref{lem.d>0}, if $d_e > 0$, $e$ is fully saturated, and therefore for such an edge, $\sum_{p \ni e} {f_p} \ = \ c_e$. Now,
$$ \worth(S) \ = \ \sum_{p \in \ccP} {f_p} \ \leq \ \sum_{p \in \ccP} {f_p \cdot \left(\sum_{e \in p} {d_e}\right)} \ = \ \sum_{e \in S} {d_e \cdot \left(\sum_{p \ni e} {f_p} \right)} \ = \ \sum_{e \in S} {d_e \cdot c_e} \ = \profit(S) , $$ 
where the inequality follows from the first fact and the second-last equality follows from the second fact. 

Example \ref{ex.flow-no-core} gives a max-flow game having a core imputation which does not correspond to an optimal dual solution. Therefore, the dual partially characterizes the core of the max-flow game.
\end{proof}


\begin{figure}[h]
\begin{center}
\includegraphics[width=4in]{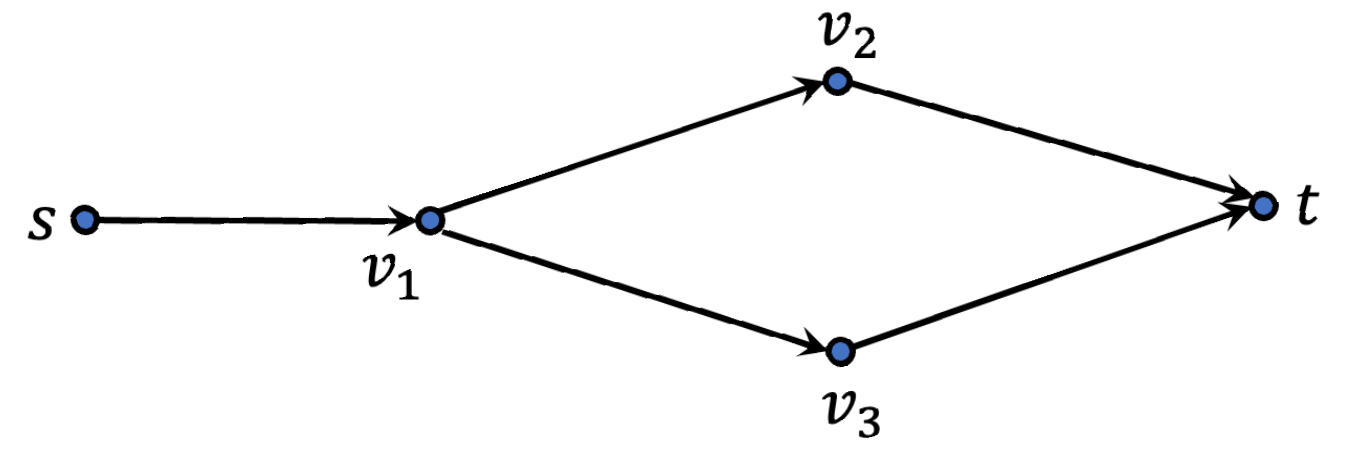}
\caption{The graph for Example \ref{ex.flow-no-core}.}
\label{fig.flow}
\end{center}
\end{figure}


\begin{corollary}
\label{cor.flow}
	The core of the max-flow game is always non-empty. 
\end{corollary}

\begin{lemma}
	\label{lem.flow-essential}
For the max-flow game, core imputations corresponding to optimal dual solutions allocate all the profits to essential players only.
\end{lemma}

\begin{proof}
By Lemma \ref{lem.d>0}, in an optimal dual solution, if $d_{ij} > 0$, then edge $(i, j)$ is essential. The lemma follows. 
\end{proof}

\begin{example}
	\label{ex.flow-no-core}
Consider the graph given in Figure \ref{fig.flow}. Assume that the capacities of edges $(v_1, v_2)$ and $(v_1, v_3)$ are 1 each. An easy way of showing that there is a core imputation which does not correspond to an optimal dual solution is the following: Assume that the rest of the edges have very high capacities, e.g., 10 each. Assign a profit of 2 to edge $(s, v_1)$ and zero to the rest; observe that the worth of the game is 2. Clearly, the dual will not assign edge $(s, v_1)$ a positive distance label, since a max-flow does not saturate it. 

To obtain a more ``principled'' example, we will insist that positive profits be assigned to fully saturated edges only. This is in keeping with the principle in general equilibrium theory that a good can have a positive price only if it is fully sold. 

Let us change the instance as follows: drop the capacity of $(s, v_1)$ to 2, keeping the rest as above. The worth of this game is also 2. Assign profits of $2 \epsilon, 1$ and $1 - 2 \epsilon$ to edges $(s, v_1), (v_1, v_2), (v_1, v_3)$, respectively, for a small number $\epsilon > 0$. It is easy to see that this imputation is in the core and it assigns profits to fully saturated edges only. If it were to correspond to a valid dual, the distance labels of these three edges would be $\epsilon, 2, 1 - 2 \epsilon$, respectively. However, that is not a feasible dual, since the  distance labels on the flow path $s, v_1, v_3, t$ add up to $1 -  \epsilon < 1$.  
\end{example}

\subsection{The Integral Maximum Flow Game and a Characterization of its Core}
\label{sec.game-int-flow}

The {\em integral maximum flow problem} is defined as follows: Given an instance of the maximum flow problem with {\em integral} capacities, $c:E\rightarrow \ZZ_+$, the problem is to find a maximum flow function that is integral, i.e., find $f: E \rightarrow \ZZ_+$ such that $f$ is a maximum flow. Such a flow is guaranteed to exist by the following: Since the constraint matrix of LP (\ref{eq.flow-primal}) is TUM and the capacity function $c$ is integral, by Theorem \ref{thm.Hoffman}, due to Hoffman and Kruskal, the underlying polytope of this LP is integral.  Hence there is a max-flow that is integral\footnote{This fact carries over to rational capacities, since they can be scaled to integral capacities by multiplying by the lowest common multiple of the denominators of edge capacities.}.

The {\em integral max-flow game} is defined over an instance of the integral maximum flow problem. The rest of the definitions are identical to those of the max-flow game given in Section \ref{sec.game-flow}. The proof of Theorem \ref{thm.int-flow} is analogous to the first half of the proof of Theorem \ref{thm.b-uniform}.

\begin{theorem}
	\label{thm.int-flow}
	For the integral max-flow game, the dual partially characterizes its core.
\end{theorem}

\begin{proof}
The proof hinges on the fact that the polytope defined by the constraints of the primal LP, (\ref{eq.flow-primal}) has integral vertices, i.e., they are integral flows. Let $(\delta, \pi)$ be an optimal dual solution. By integrality and the LP-duality theorem, the worth of the game, 
$$\worth(E) =  {\sum_{(i, j) \in E}  {c_{ij} \cdot \delta_{ij}}} .$$  

Define the {\em profit} of edge $(i, j)$ to be 
$$ p(i, j) := {c_{ij} \cdot \delta_{ij}} .$$
Clearly, $p$ is an imputation. For any sub-coalition $S \subseteq V$, define 
$$ \profit(S) := \sum_{e \in S} {p(e)} .$$
To prove that $p$ is in the core of the integral max-flow game, we need to show that for any sub-coalition $S \subseteq V$, $\worth(S) \leq \profit(S)$. 
	 
Consider a sub-coalition $S \subseteq V$. Then $\worth(S)$ is the value of a maximum integral flow in the graph $G(S)$. By integrality and LP-duality, $\worth(S)$  equals the objective function value of the optimal dual for graph $G(S)$. Since the restriction of $(\delta, \pi)$ to $G(S)$ is a feasible dual for $G(S)$, we get that 
$$  {\sum_{(i, j) \in S}  {c_{ij} \cdot \delta_{ij}}} = \profit(S) \geq \worth(S)  .$$  
Therefore $(\delta, \pi)$ is a core imputation.

Since Example \ref{ex.flow-no-core} is an instance of the integral max-flow game as well, therefore, the dual partially characterizes the core of the integral max-flow game.
\end{proof}

\begin{corollary}
\label{cor.int-flow}
	The core of the integral max-flow game is always non-empty. 
\end{corollary}

\section{Min-Max Fair, Max-Min Fair and Equitable Core Imputations}
\label{sec.fair}

We will denote an instance of the game by $I$ and the set of agents of instance $I$ by $A$. An imputation will be denoted by $p$; thus $p: A \rightarrow \Rplus$. Let $C(I)$ denote the set of core imputations which correspond to optimal dual solutions; if the dual completely characterizes the core, then $C(I)$ is the same as the core of the game and if it partially characterizes the core, it is a subset of the core.

\begin{definition}
	\label{def.min-max}
Let $I$ be an instance of the game. An imputation $p$ for $I$ is said to be a {\em min-max fair core imputation in $C(I)$} if $p \in C(I)$ and it satisfies:
$$ p \in \arg \min_{q \in C(I)} \left\{ \max_{a \in A} \{q(a)\} \right\} .$$
\end{definition}

Before defining max-min fair core imputations, we need to clarify the role that {\em essential agents} must play in it. Fact \ref{fact.essential} is a summary of results established about essential players in this paper. 

\begin{fact}
	\label{fact.essential}
For all games studied in this paper, for an instance $I$, each core imputation in $C(I)$ allocates all profits to essential agents only.
\end{fact}

By Fact \ref{fact.essential}, in all core imputations in $C(I)$, non-essential agents get zero profits. Therefore if the instance has non-essential agents, then under a straightforward definition of max-min fair core imputation, each imputation in $C(I)$ is trivially a max-min fair core imputation, with the max-min value being zero. Below is a more useful definition of max-min fair core imputation.

\begin{definition}
	\label{def.max-min}
Let $A' \subseteq A$ be the set of essential players of instance $I$. Imputation $p$ is said to be a {\em max-min fair core imputation in $C(I)$} if $p \in C(I)$ and it satisfies:
$$ p \in \arg \max_{q \in C(I)} \left\{ \min_{a \in A'} \{q(a)\} \right\} .$$
\end{definition}

\begin{definition}
	\label{def.equitable}
Let $A' \subseteq A$ be the set of essential players of instance $I$. Define the {\em spread} of imputation $p$ to be the difference between the largest and smallest profit to essential players, i.e., 

$$  \spread(p)  :=  (p_{\max} - p_{\min}), $$ 
$$ \mbox{where} \ \ p_{max} = \max_{a \in A'} {p(a)} \ \ \mbox{and} \ \ p_{min} = \min_{a \in A'}  {p(a)} .$$

Imputation $p$ is said to be an {\em equitable core imputation in $C(I)$} if $p \in C(I)$ and it satisfies:
$$ p \in \arg \min_{q \in C(I)} \left\{  \spread(q) \right\} .$$
\end{definition}

{\bf Procedure for Finding the Set of Essential Players:} 
For finding max-min fair and equitable core imputations, we will first need to compute the set of essential players of the game. We show how this can be done efficiently for all games studied in this paper. For the given assignment game, let $W_{\max}$ be its worth. Iteratively, for each player $u$, consider the game with player $u$ removed and find its worth. If it is the same as $W_{max}$ then $u$ is not essential and otherwise it is. For the rest of the bipartite generalizations of the assignment game, Definition \ref{def.b-player} applies. The only change needed to each iteration of the procedure is to drop $u$'s $b$ value by 1 instead of removing it. The procedure for concurrent and the max-flow games is identical to that for the assignment game.

\subsection{Fair Core Imputations for Matching-Based Games}
\label{sec.fair-matching}

As shown in Theorem \ref{thm.vertices}, under each core imputation, the entire worth of an assignment game is allocated to the essential players; for the latter notion, see Definition \ref{def.player}. Let $U' \subseteq U$ and $V' \subseteq V$ be the set of essential players and let $W_{\max}$ be the worth of the given instance of the assignment game. In Theorem \ref{thm.flow-assignment} we prove that LPs (\ref{eq.min-assignment}), (\ref{eq.max-assignment}) and (\ref{eq.equitable-assignment}  find min-max fair, max-min fair and equitable core imputations, respectively, for the assignment  game.

 	\begin{mini}
 			{} {{\alpha}}
			{\label{eq.min-assignment}}
		{}
		\addConstraint{\sum_{i \in U}  {u_{i}} + \sum_{j \in V} {v_j}}{= W_{\max}}{}
		\addConstraint{ u_i + v_j}{ \geq w_{ij} \quad }{\forall (i, j) \in E}
		\addConstraint{u_{i}}{\leq \alpha}{\forall i \in U'}
		\addConstraint{v_{j}}{\leq \alpha}{\forall j \in V'}
		\addConstraint{u_{i}}{\geq 0}{\forall i \in U'}
		\addConstraint{v_{j}}{\geq 0}{\forall j \in V'}
		\addConstraint{u_{i}}{= 0}{\forall i \in (U-U')}
		\addConstraint{v_{j}}{= 0}{\forall j \in (V-V')}
	\end{mini}

\begin{theorem}
	\label{thm.flow-assignment}
For the assignment game:
\begin{enumerate}
\item  An optimal solution to LP (\ref{eq.min-assignment}) is a min-max fair core imputation in $C(I)$.
\item  An optimal solution to LP (\ref{eq.max-assignment}) is a max-min fair core imputation in $C(I)$. 
\item  An optimal solution to LP (\ref{eq.equitable-assignment}) is an equitable core imputation in $C(I)$. 
\end{enumerate}
\end{theorem}

\begin{proof}
{\bf 1).} Consider an optimal solution $(u, v)$ to LP (\ref{eq.min-assignment}). We will first show that it is an optimal solution to the dual LP (\ref{eq.core-dual-bipartite}) as well. Constraints 7 and 8, for non-essential players, are justified by Theorem \ref{thm.vertices}. Furthermore, by constraints 2, 5 and 6, $(u, v)$ is a feasible solution to LP (\ref{eq.core-dual-bipartite}).  Finally, by the first constraint, the solution is an optimal solution to LP (\ref{eq.core-dual-bipartite}). 

To complete the proof, observe that constraints 3 and 4, and the minimization of $\alpha$ in the objective, ensure that the maximum profit of essential players is  minimized. 

{\bf 2).} First notice that constraints 3 and 4 in LP (\ref{eq.max-assignment}) make the non-negativity  constraints on essential players redundant, justifying their removal. Therefore, the only difference between the two LPs is that the first minimizes $\alpha$ and its third and fourth constraints ensure that the maximum profit of an essential players is minimized; and the second maximizes $\beta$ and its third and fourth constraints ensure that the minimum profit of an essential player is maximized. This completes the proof. 

{\bf 3).} LP (\ref{eq.equitable-assignment}) combines the constraints of the previous two LPs and its objective function minimizes $(\alpha - \beta)$, i.e., the spread of the imputation. Therefore, its optimal solution is an equitable core imputation in $C(I)$. 

\end{proof}

 	\begin{maxi}
 			{} {{\beta}}
			{\label{eq.max-assignment}}
		{}
		\addConstraint{\sum_{i \in U}  {u_{i}} + \sum_{j \in V} {v_j}}{= W_{\max}}{}
		\addConstraint{ u_i + v_j}{ \geq w_{ij} \quad }{\forall (i, j) \in E}
		\addConstraint{u_{i}}{\geq \beta}{\forall i \in U'}
		\addConstraint{v_{j}}{\geq \beta}{\forall j \in V'}
		\addConstraint{u_{i}}{= 0}{\forall i \in (U-U')}
		\addConstraint{v_{j}}{= 0}{\forall j \in (V-V')}
	\end{maxi}

 	\begin{mini}
 			{} {(\alpha - \beta)}
			{\label{eq.equitable-assignment}}
		{}
		\addConstraint{\sum_{i \in U}  {u_{i}} + \sum_{j \in V} {v_j}}{= W_{\max}}{}
		\addConstraint{ u_i + v_j}{ \geq w_{ij} \quad }{\forall (i, j) \in E}
		\addConstraint{u_{i}}{\leq \alpha}{\forall i \in U'}
		\addConstraint{v_{j}}{\leq \alpha}{\forall j \in V'}
		\addConstraint{u_{i}}{\geq \beta}{\forall i \in U'}
		\addConstraint{v_{j}}{\geq \beta}{\forall j \in V'}
		\addConstraint{u_{i}}{\geq 0}{\forall i \in U'}
		\addConstraint{v_{j}}{\geq 0}{\forall j \in V'}
		\addConstraint{u_{i}}{= 0}{\forall i \in (U-U')}
		\addConstraint{v_{j}}{= 0}{\forall j \in (V-V')}
	\end{mini}

\begin{example}
	\label{ex.Cross}
In the assignment game depicted in Figure \ref{fig.Cross}, weights are given on edges. The min-max fair and max-min fair  core imputations for vertices $(u_1, u_2, v_1, v_2)$ are $(50, 30, 50, 10)$ and $(40, 20, 60, 20)$, respectively. It is easy to see that both these imputations are equitable, each  with a spread of 40.  
\end{example}


\begin{figure}[h]
\begin{center}
\includegraphics[width=2.4in]{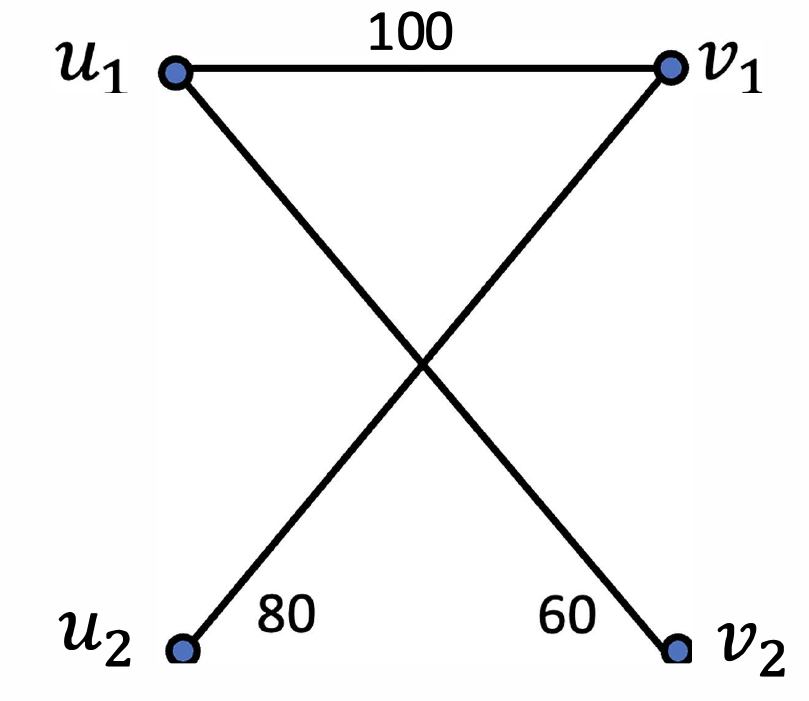}
\caption{The graph for Example \ref{ex.Cross}}.  
\label{fig.Cross}
\end{center}
\end{figure}


For a concurrent game, the dual LP (\ref{eq.core-dual}) gives core imputations and since it has the same for as that of an assignment game, and only essential players get profits by Theorem \ref{thm.Gen-insights}, min-max fair, max-min fair and equitable  core imputations follow easily. 

Next, we turn to the uniform bipartite $b$-matching game and the bipartite $b$-matching game, defined in Section \ref{sec.b-matching-game}. The definition of essential players changes to Definition \ref{def.b-player} and this changes the definition of $U'$ and $V'$. $W_{\max}$ is now the worth of the given instance of the $b$-matching game. With these changes to LPs (\ref{eq.min-assignment}) and (\ref{eq.max-assignment}), we again obtain min-max fair, max-min fair and equitable  core imputations.

Finally, we turn to the Hoffman-Kruskal game defined in Section \ref{sec.Hoffman}. A crucial difference from the previous games is that for this game, an imputation distributes surplus instead of worth. The definition of essential players is the same as in Definition \ref{def.b-player} and again, only essential players are allocated all surplus. 
 
LP (\ref{eq.min-Hoffman}) computes a min-max fair core imputation. For obtaining this LP, first find an optimal solution to the dual LP (\ref{eq.Hoff-dual}), say $(u, v, y, z)$. This solution is used for computing $\surplus(U \cup V)$, which is used in the first constraint of the new LP. The second constraint uses the settings of $y$ and $z$ from this dual solution. LPs for computing max-min  fair and equitable core imputations follow in a similar manner.

 	\begin{mini}
 			{} {{\alpha}}
			{\label{eq.min-Hoffman}}
		{}
		\addConstraint{\sum_{i \in U}  {u_{i}} + \sum_{j \in V} {v_j}}{= \surplus(U \cup V)}{}
		\addConstraint{ u_i + v_j}{ \geq w_{ij} + c_{ij} \cdot y_{ij} - d_{ij} \cdot z_{ij} \quad }{\forall (i, j) \in E}
		\addConstraint{u_{i}}{\leq \alpha}{\forall i \in U'}
		\addConstraint{v_{j}}{\leq \alpha}{\forall j \in V'}
		\addConstraint{u_{i}}{\geq 0}{\forall i \in U'}
		\addConstraint{v_{j}}{\geq 0}{\forall j \in V'}
		\addConstraint{u_{i}}{= 0}{\forall i \in (U-U')}
		\addConstraint{v_{j}}{= 0}{\forall j \in (V-V')}
	\end{mini}

\subsection{Fair Core Imputations for the Max-Flow Game}
\label{sec.flow-fair}

The max-flow game was defined by Kalai and Zemel \cite{Kalai1982totally} and they observed that a minimum cut defines a core imputation -- by paying each edge to the extent of its capacity. Clearly, an imputation corresponding to a fractional optimal dual solution is ``more fair'', since it spreads the profit over more players, giving each less than its maximum possible profit; for an extreme examples, see Example \ref{ex.fair-flow}.

\begin{example}
	\label{ex.fair-flow}
Consider a max-flow game over a graph which is a path of length $n$ from $s$ to $t$ of unit capacity edges. The worth of this game is 1 and the Kalai-Zemel imputation will give the entire profit to one of the edges; each one is a minimum cut. In contrast, the most equitable imputation would give a profit of $1/n$ to each edge, corresponding to the fractional optimal dual solution which assigns a distance label of $1/n$ to each edge. 

A more extreme example is the following: The graph has $k$ disjoint paths of length $n-1$ from $s$ to $t'$ of unit capacity edges and one more edge, $(t', t)$, of capacity $k$. Since edge $(t', t)$ is a minimum cut, giving the entire profit of $k$ to it is a core imputation. A more  equitable imputation is obtained by setting the distance labels of all edges to $1/n$, giving a profit of $k/n$ to $(t', t)$ and $1/n$ to the rest of the edges. 
\end{example}

Let $F_{\max}$ denote the maximum flow in $G$, i.e., the optimal objective function value of LP (\ref{eq.flow-primal}) and let $E' \subseteq E$ be the set of essential edges per Definition \ref{def.flow-essential}. In Theorem \ref{thm.flow-fair} we prove that LPs (\ref{eq.min-flow}) and (\ref{eq.max-flow}) find min-max fair, max-min fair and equitable  core imputations, respectively, for the max-flow game.

 	\begin{mini}
		{} {{\alpha}}
			{\label{eq.min-flow}}
		{}
		\addConstraint{\sum_{(i, j) \in E}  {c_{ij} \delta_{ij}}}{= F_{\max}}{}
		\addConstraint{\delta_{ij} - \pi_i + \pi_j}{ \geq 0 \quad }{\forall (i, j) \in E}
		\addConstraint{\pi_s - \pi_t}{\geq 1}{}
		\addConstraint{\delta_{ij} \cdot c_{ij}}{\leq \alpha}{\forall (i, j) \in E'}
		\addConstraint{\delta_{ij}}{\geq 0}{\forall (i, j) \in E'}
		\addConstraint{\delta_{ij}}{= 0}{\forall (i, j) \in (E - E')}
		\addConstraint{\pi_i}{\geq 0}{\forall i \in V}
	\end{mini}

\begin{theorem}
	\label{thm.flow-fair}
For the max-flow game:
\begin{enumerate}
\item  An optimal solution to LP (\ref{eq.min-flow}) is a min-max fair core imputation in $C(I)$.
\item  An optimal solution to LP (\ref{eq.max-flow}) is a  max-min fair core imputation in $C(I)$. 
\item  An optimal solution to LP (\ref{eq.equitable-flow}) is an equitable core imputation in $C(I)$. 
\end{enumerate}
\end{theorem}

\begin{proof}
{\bf 1).} Consider an optimal solution $(\delta, \pi)$ to LP (\ref{eq.min-flow}). We will first show that it is an optimal solution to the dual LP (\ref{eq.flow-dual}) as well, i.e. it is an optimal fractional $s$--$t$ cut. By Lemma \ref{lem.flow-essential}, for edge $(i, j) \in (E - E')$, $d_{ij}$ must be zero. This justifies the sixth constraint. Furthermore, by constraints 2, 3, 5 and 7, the solution $(\delta, \pi)$ is a fractional $s$--$t$ cut. Finally, by the first constraint, the solution is an optimal fractional $s$--$t$ cut. 

To complete the proof, observe that the fourth constraint and the minimization of $\alpha$ in the objective ensures that the maximum profit of an essential edge in the optimal solution is  minimized. 

{\bf 2).} Notice that the fourth constraint in LP (\ref{eq.max-flow}) makes the constraint 
$$\delta_{ij} \geq 0 \quad \quad \forall (i, j) \in E'$$ 
redundant and so it has been dropped. Therefore, the only difference between the two LPs is that the first minimizes $\alpha$ and its fourth constraint ensures that the maximum profit of an essential edge is minimized; and the second maximizes $\beta$ and its fourth constraint ensures that the minimum profit of an essential edge is maximized. This completes the proof. 

{\bf 3).} LP (\ref{eq.equitable-flow}) combines the constraints of the previous two LPs and its objective function minimizes $(\alpha - \beta)$, i.e., the spread of the imputation. Therefore, its optimal solution is an equitable core imputation in $C(I)$. 
\end{proof}

 	\begin{maxi}
		{} {{\beta}}
			{\label{eq.max-flow}}
		{}
		\addConstraint{\sum_{(i, j) \in E}  {c_{ij} \delta_{ij}}}{= F_{\max}}{}
		\addConstraint{\delta_{ij} - \pi_i + \pi_j}{ \geq 0 \quad }{\forall (i, j) \in E}
		\addConstraint{\pi_s - \pi_t}{\geq 1}{}
		\addConstraint{\delta_{ij} \cdot c_{ij}}{\geq \beta}{\forall (i, j) \in E'}
		\addConstraint{\delta_{ij}}{= 0}{\forall (i, j) \in (E - E')}
		\addConstraint{\pi_i}{\geq 0}{\forall i \in V}
	\end{maxi}

 	\begin{mini}
		{} {(\alpha -\beta)}
			{\label{eq.equitable-flow}}
		{}
		\addConstraint{\sum_{(i, j) \in E}  {c_{ij} \delta_{ij}}}{= F_{\max}}{}
		\addConstraint{\delta_{ij} - \pi_i + \pi_j}{ \geq 0 \quad }{\forall (i, j) \in E}
		\addConstraint{\pi_s - \pi_t}{\geq 1}{}
		\addConstraint{\delta_{ij} \cdot c_{ij}}{\leq \alpha}{\forall (i, j) \in E'}
		\addConstraint{\delta_{ij} \cdot c_{ij}}{\geq \beta}{\forall (i, j) \in E'}
		\addConstraint{\delta_{ij}}{\geq 0}{\forall (i, j) \in E'}
		\addConstraint{\delta_{ij}}{= 0}{\forall (i, j) \in (E - E')}
		\addConstraint{\pi_i}{\geq 0}{\forall i \in V}
	\end{mini}

\begin{example}
	\label{ex.flow-fair}
For the max-flow game shown in Figure \ref{fig.flow-fair}, the capacities of edges are marked. The maximum flow sends 10 units each on the paths $s, u, t$ and $s, u, v, t$. The distance labels for the min-max fair and max-min fair core imputations for edges $((s, u), (u, t), (u, v), (v, t))$ are $(1/3, 2/3, 1/3,, 1/3)$ and $(1/5, 4/5, 2/5, 2/5)$, respectively, and the profits are $(20/3, 20/3, 10/3,$ $10/3)$ and $(4, 8, 4, 4)$, respectively. The min-max fair imputation is an equitable imputation as well, with a spread of $10/3$. 
\end{example}


\begin{figure}[h]
\begin{center}
\includegraphics[width=4in]{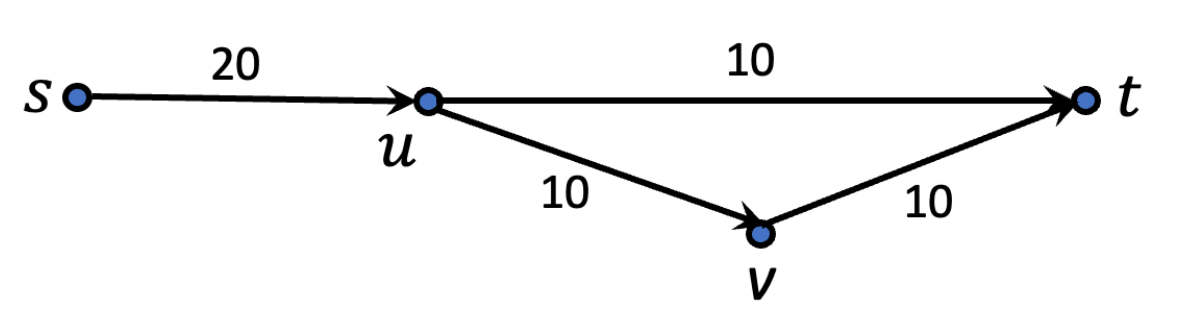}
\caption{The graph for Example \ref{ex.flow-fair}}.  
\label{fig.flow-fair}
\end{center}
\end{figure}


For the max-flow game, one may find it more natural to define min-max fair, max-min fair and equitable  core imputations not with respect to the profit, but with respect to the profit as a {\em percentage} of an agent's capacity. 

For min-max fairness, this can be achieved via the following modification to Definition \ref{def.min-max}: 
$$ p \in \arg \min_{q \in C(I)} \left\{ \max_{e \in E} \left\{ {{q(e)} \over c_e} \right\} \right\} ,$$
and by replacing the fourth constraint of LP (\ref{eq.min-flow}) by: 
$$\delta_{ij}  \leq \alpha \quad \quad \forall (i, j) \in E' .$$

For max-min fairness, the following modification to Definition \ref{def.max-min} is required: 
$$ p \in \arg \max_{q \in C(I)} \left\{ \min_{e \in E'}  \left\{ {{q(e)} \over c_e} \right\} \right\} ,$$
together with replacing the fourth constraint of LP (\ref{eq.max-flow}) by: 
$$\delta_{ij}  \geq \beta \quad \quad \forall (i, j) \in E' .$$

For obtaining an equitable core imputation under this criterion, the definition of {\em spread} of imputation $p$ needs to be changed to the difference between the largest and smallest profit as a {\em percentage} of an essential player's capacity, i.e., 
$$  \spread(p)  :=  (p_{\max} - p_{\min}), $$
$$ \mbox{where} \ \ p_{max} = \left\{ \max_{e \in E'}  \left\{ {{p(e)} \over c_e} \right\} \right\} \ \ \mbox{and} \ \ p_{min} = \left\{ \min_{e \in E'}  \left\{ {{p(e)} \over c_e} \right\} \right\} .$$
The rest of Definition \ref{def.equitable} remains unchanged. The fourth and fifth constraints of LP (\ref{eq.equitable-flow}) need to be replaced by:
$$\delta_{ij}  \leq \alpha \quad \quad \forall (i, j) \in E' $$
and
$$\delta_{ij}  \geq \beta \quad \quad \forall (i, j) \in E' ,$$
respectively.

\section{Discussion}
\label{sec.discussion}

Theorem \ref{thm.Hoffman} applies to an LP whose constraint matrix has entries from the set $\{0, 1, -1\}$. Among the games studied in this paper, only the max-flow game has entries from the set $\{0, 1, -1\}$; in the assignment game and its generalizations, the entries are from the set $\{0, 1\}$.  On the other hand, we note that there are several natural problems in combinatorial optimization which have important applications and for which the entries are from set $\{0, 1, -1\}$, e.g., see \cite{Sch-book}. We leave the problem of studying their core imputations.

An important open question is to shed light on the origins of core imputations, for the bipartite $b$-matching game and the max-flow game, which do not correspond to optimal dual solutions. Is there a ``mathematical structure'' that produces them? Observe that since testing for membership in the core is co-NP-hard for both these games, see Section \ref{sec.related}, there may not be a mathematically clean answer to this question.  

Shapley and Shubik were able to characterize ``antipodal'' points in the core for the assignment game, see Theorem \ref{thm.extreme}. An analogous understanding of the core of the concurrent game will be desirable. For the assignment game, Demange, Gale and Sotomayor \cite{Demange1986multi} give an auction-based procedure to obtain a core imputation; it turns out to be optimal for the side that proposes, as was the case for the deferred acceptance algorithm of Gale and Shapley \cite{GaleS} for stable matching. Is there an analogous procedure for obtaining an imputation in the core of a concurrent game or a $b$-matching game?

\section{Acknowledgements}
\label{sec.ack}

I wish to thank Herv\'e Moulin for asking the interesting question of extending results obtained for the assignment game to general graph matching games having a non-empty core, and Federico Echenique for generously sharing his understanding of issues in economics. I also wish to thank Rohith Gangam and Thorben Trobst for several valuable discussions.

	\bibliographystyle{alpha}
	\bibliography{refs}

\end{document}